\newtheorem{theo}{Theorem}[section]
\newtheorem{prop}[theo]{Proposition}
\newtheorem{de}[theo]{Definition}
\newtheorem{example}[theo]{Examples}
\newcommand{\gp}{\mathbb{P}}
\title[$\delta$-sequences and codes defined by
valuations]{$\delta$-Sequences and Evaluation Codes defined by Plane
Valuations at Infinity}
\author{C.~Galindo \and F.~Monserrat}
\curraddr{C. Galindo: Departament de Matem\`{a}tiques, Universitat
Jaume I, Campus de Riu Sec. s/n, 12071 Castell\'{o} (Spain); F.
Monserrat: Instituto Universitario de Matem\'atica Pura y Aplicada,
Universidad Polit\'ecnica de Valencia, Camino de Vera s/n, 46022
Valencia (Spain) } \email{ galindo@mat.uji.es \hskip 0.3cm
framonde@mat.upv.es}
\thanks{Supported by Spain Ministry of Education
 MTM2007-64704, JCyL VA025A07 and Bancaixa P1-1A2005-08.
We thank C. Munuera and F. Torres for helpful discussions.}
\subjclass{94B27, 14B05, 11T71}
\begin{document}
\maketitle

\begin{abstract}
We introduce the concept of $\delta$-sequence. A $\delta$-sequence
$\Delta$ generates a well-ordered semigroup $S$ in $\mathbb{Z}^2$
or $\mathbb{R}$. We show how to construct (and compute parameters)
for the dual code of any evaluation code associated with a weight
function defined by $\Delta$ from the polynomial ring in two
indeterminates to a semigroup $S$ as above. We prove that this is
a simple procedure which can be understood by considering a
particular class of valuations of function fields of surfaces,
called plane valuations at infinity. We also give algorithms to
construct an unlimited number of $\delta$-sequences of the
different existing types, and so this paper provides the tools to
know and use a new large set of codes.
\end{abstract}

\section{Introduction}

BCH and Reed-Solomon codes can be decoded from the sixties by using
the Berlekamp-Massey algorithm \cite{3gp,20gp}. A paper by Sakata
\cite{35gp} allowed to derive the Berlekamp-Massey-Sakata algorithm
which can be used to efficiently decode certain Algebraic Geometry
codes on curves \cite{51hoh}. Indeed, this last algorithm allows us
to get fast implementations of the modified algorithm of
\cite{36hoh,sur100} (see \cite{sur49,sur46}) and of the majority
voting scheme for unknown syndromes of Feng and Rao \cite{sur23}
(see \cite{sur49,sur65,sur89,sur90}). For a survey on the decoding
of Algebraic Geometry codes one can see \cite{este}.

The so-called order functions were introduced in \cite{h-l-p}, which
in this initial stage had as image set a sub-semigroup of the set of
nonnegative integers. An order function defines a filtration of
vector spaces contained in its definition domain which, together
with an evaluation map, provide two families of error correcting
codes (evaluation codes and their duals). We note that the one-point
geometric Goppa codes or weighted Reed-Muller codes can be regarded
as codes given by order functions.

A type of particularly useful order functions are the weight
functions. Goppa  distance and Feng-Rao distances (also called order
bounds) are lower bounds for the minimum distance of their
associated  dual codes which can be decoded by using the mentioned
Berlekamp-Massey-Sakata algorithm, correcting a number of errors
that depends on the above bounds \cite{h-l-p,sul}. Furthermore,
Matsumoto in \cite{miu} proved that their associated order domains
are affine coordinate rings of algebraic curves with exactly one
place at infinity.

Recently in \cite{g-p}, the concept of order function (and the
related ones of weight function and order domain) have been
enlarged by admitting that the image of those functions can be a
well-ordered semigroup. Order domains are close to Groebner
algebras and they allow to use the theory of Groebner basis
\cite{2gp}.  This enlargement provides a greater variety of
evaluation codes and it has the same advantages (bounds of minimum
distance and fast decoding) that we had with the first defined
concept.

Weight functions were introduced to give an elementary treatment to
some Algebraic Geometry codes. Nevertheless, a deeper study of these
functions with the help of Algebraic Geometry may derive in
obtaining new good linear codes. This is the line of this paper. The
main notion we introduce is an extension to elements in
$\mathbb{Z}^2$, $\mathbb{Q}$ and $\mathbb{R}$ of the classical
concept of $\delta$-sequence defined by the Abhyankar-Moh
conditions. These $\delta$-sequences provide valuations in function
fields of surfaces (related to curves with only one place at
infinity) and, for that reason, weight functions. This allows us to
focus our development to an application in Coding Theory that
consists of studying the evaluation codes given by those weight
functions.

Valuations and  weight functions are very close objects  as one can
see in \cite{sul}. Apart from the simpler case of curves, it is only
available a classification of valuations of function fields of
nonsingular surfaces (also called plane valuations)
\cite{zar,spi,kiy}, that allows us to decide which of them are
suitable for providing, in an explicit manner, order domains and
evaluation codes. The first examples that use that classification
were given in \cite{sul} and a more systematic development can be
found in \cite{ga-sa}. Although both papers have the same
background, they provide different types of examples and it seems
that examples in \cite{sul} cannot be obtained from the development
in \cite{ga-sa}. In this paper we give theoretic foundations to
provide families of weight functions that contain as particular
cases the examples concerning plane valuations given in \cite{sul}.
This leads us  to a deeper knowledge of the involved evaluation
codes what allows us to get explicitly a large set of codes (and
associated parameters) from a simple input which can be easily
determined.

More explicitly, in \cite{ga-sa} it is assumed that $R$ is a
2-dimensional Noetherian regular local ring with quotient field $K$
and that $R$ has an algebraically closed coefficient field $k$ of
arbitrary characteristic. By picking a plane valuation $\nu$ (of $K$
centered at $R$) belonging to  any type of the above mentioned
classification, except the so called divisorial valuations, it is
found an order domain $D$ attached to the weight function $- \nu$,
whose image semigroup is the value semigroup $S:= \{ f \in R
\setminus \{0\} | \nu (f) \geq 0 \}$ of $\nu$. Moreover, in that
paper parametric equations and examples of those weight functions,
and also bounds for the minimum distance of the corresponding dual
evaluation codes are given. In this paper, we shall consider a
different point of view. This is to look for weight functions, whose
order domain is the polynomial ring in two indeterminates $T :=
k[x,y]$, derived from valuations of the quotient field $k(x,y)$ of
$T$, where, now, $k$ needs not be algebraically closed. The
mentioned paper \cite{sul} provides some examples of this type with
monomial and non-monomial associated ordering, but no additional
explanation is supplied.

We shall show that the key to get this last type of weight functions
(with value semigroup generated by the so-called $\delta$-sequences)
is to use for their construction certain class of plane valuations,
that we shall name plane valuations at infinity (see Definition
\ref{preaprva} and Proposition \ref{peso}). These valuations are
those given by certain families of infinitely many plane curves, all
of them having only one place at infinity. This constitutes, in some
sense, an extension  of the above result by Matsumoto.

Our development allows us to reach our main goal, which consists
of explicitly constructing weight functions from $T$, attached to
plane valuations at infinity, for which it is easy to determine
its image semigroup and the filtration of vector spaces in $T$ we
need to construct the codes, and either to bound or to determine
the corresponding Feng-Rao and Goppa distances (see Theorems
\ref{main} and \ref{fr} and Proposition \ref{go}). The unique
input to compute those data is the so-called $\delta$-sequence of
the weight function which also determines it. It is also worth
adding that the obtained image semigroups have a behavior close to
the one of telescopic semigroups and that the so-called
approximates of the valuation at infinity (see Definition
\ref{aprva}) make easy the computation of the mentioned
filtration.

The implementation of these new codes is very simple since to
compute the data to use them is straightforward from the
$\delta$-sequences that define the weight functions; this paper
provides algorithms (if necessary) to get an unlimited number of
$\delta$-sequences of any existing type. In fact, to provide a
family of codes $\{E_\alpha\}_{\alpha \in S}$ and its dual family
$\{C_\alpha\}_{\alpha \in S}$ with alphabet code $k$ ($k$ is any
finite field), we only need to pick a $\delta$-sequence $\Delta$
(Definition \ref{buena}) and a set $ \mathfrak{A}$ of $n$ different
points in $k^2$. $\Delta$ can be finite (included in $\mathbb{Z}^2$
or $\mathbb{R}$) or infinite (included in $\mathbb{Q}$); in the last
case, we need only a finite subset of $\Delta$ that depends on $k$
and $ \mathfrak{A}$. To obtain $\delta$-sequences is easy following
the procedure given in Section \ref{construction}. Moreover $S$ is
the semigroup spanned by $\Delta$ and must be ordered
lexicographically when it is in $\mathbb{Z}^2$ and usually ordered
otherwise. $E_\alpha$ is the image $ev(O_\alpha)$ of a vector space
$O_\alpha$ of $k[x,y]$, where $ev$ is the map $ev:k[x,y] \rightarrow
k^n$ that evaluates polynomials at the points in $ \mathfrak{A}$.
$C_\alpha$ is the dual vector space of $E_\alpha$. Finally, bases
for the spaces $O_\alpha$ can be computed without difficulty since
they are formed by polynomials as in (\ref{misq}) (see the beginning
of Section \ref{construction}), where the exponents are obtained
from $\Delta$ by an algorithm similar to Euclid's one (see the
beginning of Section \ref{42}).

Next, we briefly describe the content of the paper. Valuations on
function fields of surfaces and their classification are fundamental
for understanding the development of  this paper and Section
\ref{tres} is devoted to provide a short summary  for it. We only
use a particular subset of these valuations, plane valuations at
infinity. Although the definition is given in Section \ref{cuatro},
to manipulate these valuations we need some knowledge about plane
curves with only one place at infinity and this information is
displayed in Section \ref{dos}. The development of this section
shortens Section \ref{tres} due to the closeness between them.
Section \ref{cuatro} contains the core of the paper: we give the
definition of $\delta$-sequence and algorithms for constructing
$\delta$-sequences. We also explain the way to use them to get
weight functions associated with valuations at infinity and how to
compute bases of the vector spaces filtration needed to obtain the
corresponding codes. There are $\delta$-sequences giving rise to
valuations at infinity only for three of the five types of the above
mentioned classification for plane valuations. These
$\delta$-sequences provide weight functions with monomial and
non-monomial corresponding orderings. Some examples are also
supplied in this section. Finally, Section \ref{cinco} studies the
Feng-Rao and  Goppa  distances for dual codes given by
$\delta$-sequences (see Theorem \ref{fr} and Proposition \ref{go}).
A useful property to study these distances is that
$\delta$-sequences generate semigroups similar to telescopic ones,
which we call generalized telescopic semigroups. Moreover, the
$\delta$-sequences which are in ${\mathbb Z}^2$ span simplicial
semigroups and thus we can compute the least Feng-Rao distance using
an algorithm by Ruano \cite{rua}. We end this paper proving, in
Proposition \ref{rs}, that Reed-Solomon codes are a particular case
of codes given by $\delta$-sequences included in $\mathbb{Z}^2$ and
giving several examples of dual codes (showing their parameters)
defined by $\delta$-sequences of all mentioned types.

\section{Plane curves with only one place at infinity}
\label{dos}

We devote this section to summarize several known results concerning
plane curves with only one place at infinity because this geometric
concept supports the definition of those weight functions that will
be useful for our purposes. References for the subject are
\cite{pa2,pa1,pa50,pa45,pa53,pa21}. Some of the results concerning
this type of curves have been used by Campillo and Farr\'an in
\cite{far-cam} for computing the Weierstrass semigroup and the least
Feng-Rao distance of the corresponding Goppa codes attached to
singular plane models for curves with only one place at infinity. We
begin with the definition of a key concept for this paper. Denote by
$\mathbb{N}_{> (\geq) 0}$ the set of positive (nonnegative)
integers.

\begin{de}
\label{delta} {\rm A {\em $\delta$-sequence in  $\mathbb{N}_{>
0}$} is a finite sequence of positive integers $\Delta = \{
\delta_i \}_{i=0}^g$, $g \geq 0$, satisfying the following three
conditions
\begin{itemize}
\item[(1)] If $d_i=\gcd (\delta_0,\delta_1,\ldots,\delta_{i-1})$,
for $1\leq i\leq g+1$, and $n_i=d_i/d_{i+1}$, $1\leq i\leq g$,
then $d_{g+1}=1$ and $n_i>1$ for $1\leq i\leq g$.

\item[(2)] For $1\leq i\leq g$, $n_i\delta_i$ belongs to the
semigroup generated by $\delta_0,\delta_1,\ldots, \delta_{i-1}$,
that we usually denote $\langle
\delta_0,\delta_1,\ldots,\delta_{i-1}\rangle$.

\item[(3)] $\delta_0>\delta_1$ and $\delta_i<\delta_{i-1} n_{i-1}$
for $i=2,3,\ldots,g$.
\end{itemize}}
\end{de}

Above conditions are usually called Abhyankar-Moh conditions. We
denote by $S_\Delta$ the sub-semigroup of $\mathbb{N}_{\geq 0}$
spanned by $\Delta$.

Along this paper $\mathbb{P}_k^2$ (or $\mathbb{P}^2$ for short)
stands for the projective plane over a field $k$ of arbitrary
characteristic. Now, let us state the definition of plane curve with
only one place at infinity.

\begin{de}
\label{de1} {\rm Let $L$ be the line at infinity in the
compactification of the affine plane to $\gp^2$. Let $C$ be a
projective absolutely irreducible curve of $\gp^2$ (i.e.,
irreducible as a curve in $\gp^2_{\overline{k}}$, $\overline{k}$
being the algebraic closure of $k$). We shall say that $C$ {\it has
only one place at infinity} if the intersection $C\cap L$ is a
single point $p$ (the one at infinity) and $C$ has only one branch
at $p$ which is rational (that is, defined over $k$).}
\end{de}

Set $C$ a curve with only one place at infinity. Denote by $K$ the
quotient field of the local ring $\mathcal{O}_{C,p}$; the germ of
$C$ at $p$ defines a discrete valuation on $K$ that we set
$\nu_{C,p}$, which allows us to state the following

\begin{de}
{\rm Let $C$ be a curve with only one place at infinity given by
$p$. The {\it semigroup at infinity} of $C$ is the following
sub-semigroup of $\mathbb{N}_{\geq 0}$:
\[
S_{C,\infty}:= \left \{ - \nu_{C,p} (h) | h \in T \right \},
\]
where $T$ is the $k$-algebra $\mathcal{O}_{C}( C \setminus
\{p\})$.}
\end{de}

For any curve $C$ with only one place at infinity, except when the
characteristic of $k$ divides the degree of $C$, it can be proved
that there is a $\delta$-sequence in $\mathbb{N}_{>0}$, $\Delta$,
such that $S_{C,\infty} = S_\Delta$ \cite{pa2}. Conversely, as we
shall precise later, for any field $k$ and any $\delta$-sequence in
$\mathbb{N}_{>0}$, there exists a plane curve $C$ with only one
place at infinity such that $S_\Delta =S_{C,\infty} $ and $\delta_0$
is the degree of $C$.

$C$ has a singularity at $p$, except when $g=0$. Consider the
infinite sequence of morphisms
\begin{equation}\label{infiniteseq}
\cdots \rightarrow X_{i+1}\rightarrow X_i\rightarrow \cdots
\rightarrow X_1 \rightarrow X_0:=\gp^2,
\end{equation}
where $X_1\rightarrow X_0$ is the blowing-up at $p_0:=p$ (the point
at infinity) and, for each $i\geq 1$, $X_{i+1}\rightarrow X_i$
denotes the blowing-up of $X_i$ at the unique point $p_i$ which lies
on the strict transform of $C$ and the exceptional divisor created
by the preceding blowing-up; notice that $p_i$ is defined over $k$,
since the branch of $C$ at $p$ is rational. It is well-known that
there exists a minimum integer $n$ such that, if $\pi : X_n
\rightarrow \gp^2$ denotes the composition of the first $n$
blowing-ups, the germ of the strict transform of $C$ by $\pi$ at
$p_n$ becomes regular and transversal to the exceptional divisor.
This gives the (minimal embedded) resolution of the germ of $C$ at
$p$. The essential information, the (topological) equisingularity
class of the germ, can be given in terms of its sequence of {\it
Newton polygons} \cite[III.4]{cam} or by means of its {\it dual
graph} (see \cite{d-g-n} within a more general setting or
\cite{pa21} for a slightly different version). This information
basically provides the number and the position of blowing-up centers
of $\pi$: these can be placed either on a free point (not an
intersection of two exceptional divisors) or on a satellite point.
In this last case, it is also important to know whether, or not, the
blowing-up center belongs to the last but one created exceptional
divisor. Thinking of blowing-up centers, we shall say that a center
$p_i$ is proximate to other $p_j$ whenever $p_i$ is on any strict
transform of the divisor created after blowing-up at $p_j$.

The complete information of the resolution process can be described
by means of the so-called Hamburger-Noether expansion (HNE for
short) \cite{cam,c-f}, being this expansion specially useful when
the characteristic of the field $k$ is positive. More explicitly,
let $\{u',v'\}$ be local coordinates of the local ring
$\mathcal{O}_{C,p}$; in those coordinates the HNE of $C$ at $p$ has
the form
\begin{equation}\label{hne}
\begin{array}{lccl}
&v' & = & a_{01}u'+a_{02}u'^{2}+ \cdots + a_{0h_{0}}u'^{h_{0}}+u'^{h_{0}}w_{1} \\
&u' & = & w^{h_{1}}_{1}w_{2} \\
&\vdots & \nonumber & \vdots \\
&w_{s_{1}- 2} & = & w^{h_{s_{1}-1}}_{s_{1}-1}w_{s_{1}}\\
&w_{s_{1}-1} & = & a_{s_{1}k_{1}}w^{k_{1}}_{s_{1}} + \cdots
 +a_{s_{1}h_{s_{1}}}w^{h_{s_{1}}}_{s_{1}}+w^{h_{s_{1}}}_{s_{1}}w_{s_{1}+1}
 \\
\,& \vdots & \nonumber & \vdots \\
&w_{s_{g}- 1} & = & a_{s_{g}k_{g}}w^{k_{g}}_{s_{g}} + \cdots
\end{array}
\end{equation}
where the family $\{s_i\}_{i=0}^{g}$, $s_0=0$,  of nonnegative
integers is the set of indices corresponding to the free rows of the
expression, that is those rows that express the blowing-ups at free
points (they are those that have some nonzero $a_{jl} \in k$) and
the main goal (of the HNE) is that it gives local coordinates of the
transform of the germ of $C$ at $p$ in each center of blowing-up.
The local coordinates after $\{u',v'\}$ are $\{u', (v'/u')-
a_{01}\}$ and so on.

The dual graph $\Gamma$ associated with the above germ of curve is
a tree such that each vertex represents an exceptional divisor of
the sequence $\pi$ and two vertices are joined by an edge whenever
the corresponding divisors intersect. Additionally, we label each
vertex with the minimal number of blowing-ups needed to create its
corresponding exceptional divisor. The dual graph can be done by
gluing by their vertices $st_i$ subgraphs $ \Gamma_i $ $ (1 \leq i
\leq g) $ corresponding to  blocks of data $ B_i=
\{h_{s_{i-1}}-k_{i-1} +1, h_{s_{i-1}+1}, h_{s_{i-1}+2}, \ldots,
h_{s_{i}-1}, k_{i} \} $ (with $k_0=0$), which represent the
divisors involved in the part of HNE of the germ between two free
rows. That is $\Gamma_i$ contains divisors corresponding to
$h_{s_{i-1}}-k_{i-1} +1$ free points and to sets of $h_j$
($s_{i-1}+1 \leq j \leq s_{i}-1$) and $k_i$ proximate points to
satellite ones. Each subgraph $\Gamma_i$ starts in the vertex
$st_{i-1}$ and ends in $st_i$ containing, among others, the vertex
$\rho_i$. So, the dual graph has the shape depicted in Figure
\ref{fig0}.

\begin{figure}[h]
\[
\unitlength=1.00mm
\begin{picture}(80.00,30.00)(-10,3)
\thicklines \put(-5,30){\line(1,0){41}}
\put(44,30){\line(1,0){16}} \put(38,30){\circle*{0.5}}
\put(40,30){\circle*{0.5}} \put(42,30){\circle*{0.5}}
\put(30,10){\line(0,1){20}} \put(50,20){\line(0,1){10}}
\put(60,0){\line(0,1){30}} \put(10,15){\line(0,1){15}} \thinlines
\put(20,30){\circle*{1}} \put(30,30){\circle*{1}}
\put(50,30){\circle*{1}} \put(60,30){\circle*{1}}

\put(30,20){\circle*{1}} \put(60,20){\circle*{1}}
\put(60,10){\circle*{1}} \put(10,30){\circle*{1}}
\put(30,10){\circle*{1}} \put(50,20){\circle*{1}}
\put(60,0){\circle*{1}} \put(-5,30){\circle*{1}}
\put(0,30){\circle*{1}} \put(5,30){\circle*{1}}
\put(15,30){\circle*{1}} \put(25,30){\circle*{1}}
\put(35,30){\circle*{1}} \put(45,30){\circle*{1}}
\put(55,30){\circle*{1}} \put(10,25){\circle*{1}}
\put(10,20){\circle*{1}} \put(10,15){\circle*{1}}
\put(30,25){\circle*{1}} \put(30,15){\circle*{1}}
\put(35,30){\circle*{1}} \put(-9,25){{\bf 1}=$\rho_0$}
\put(11.5,14){$\rho_1$} \put(4.5,19){$\Gamma_1$}
\put(31.5,9){$\rho_2$} \put(24.5,14){$\Gamma_2$}
\put(61.5,-1){$\rho_g$} \put(54.5,4){$\Gamma_g$}
\put(9,32){$st_1$} \put(29,32){$st_2$} \put(57.5,32){$st_g$}
\end{picture}
\]
\caption{The dual graph of a germ of curve} \label{fig0}
\end{figure}
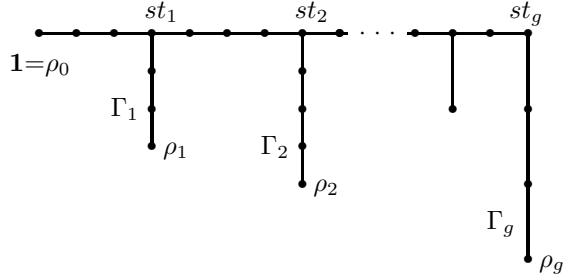

Set $E_{s_i}$ $(1 \leq i \leq g)$ the exceptional divisor obtained
after blowing-up the last free point corresponding to the subgraph
$\Gamma_i$. It corresponds to the vertex $\rho_i$ in the dual
graph. An irreducible germ of curve  at $p$, $\psi$, is said to
{\it have maximal contact of genus $i$} with the germ of $C$ at
$p$, if the strict transform of $\psi$ in the (corresponding germ
of the) surface containing $E_{s_i}$ is not singular and meets
transversely $E_{s_i}$  and no other exceptional curves.

For convenience, throughout this paper we fix homogeneous
coordinates $(X:Y:Z)$ on $\mathbb{P}^2$. $Z=0$ will be the line at
infinity and $p=(1:0:0)$. Set $(x,y)$ coordinates in the chart
$Z\neq 0$ and $(u=y/x, v=1/x)$ coordinates around  the point at
infinity. We shall assume that the curve $C$ is defined by a monic
polynomial $f(x,y)$ in the indeterminate $y$ with coefficients in
$k[x]$.

The so-called approximate roots of $C$ \cite{pa1} are an important
tool to get a $\delta$-sequence $\Delta$ in $\mathbb{N}_{> 0}$ such
that (for suitable characteristic of $k$) $S_\Delta = S_{C,\infty}$.
One can see in \cite{far-cam} an algorithm for computing them (see
also \cite{pa21} for the complex case). We need not use approximate
roots for the development of this paper but a close weaker concept
which is given in the following definition (see \cite{pa46}). Its
main advantage (explicit description for certain explicit curves
obtained only from the $\delta$-sequence in $\mathbb{N}_{> 0}$) is
showed in Proposition \ref{siete}.

\begin{de}\label{z0}
\label{aprox} {\rm Assuming the above notation, a sequence of
polynomials in $k[x,y]$}
$$q_0^*(x,y), q_1^*(x,y),  \ldots, q_g^*(x,y)$$ {\rm is a {\it family
of approximates} for the curve $C$ given by $f(x,y)$ if the
following conditions hold:
\begin{enumerate}
    \item $q_0^* (x,y)=x$, $q_1^* (x,y)=y$, $\delta_0^* :=- \nu_{C,p} (q_0^*)= \deg_y
    (f)$ and $\delta_1^* :=- \nu_{C,p} (q_1^*)$.
    \item $q_i^*(x,y)$ ($1 < i \leq g$) has degree
    $\delta_0^* /d_i$ and it is monic in the indeterminate $y$, where\\
    $d_i=\gcd(\delta_0^*, \delta_1^*, \ldots, \delta_{i-1}^*)$,
     being $\delta_i^* := - \nu_{C,p} (q_i^*)$.
    \item The germ of curve at $p$ given by the local expression
    of $q_i^* (x,y)$ $(1 < i \leq g)$ in the coordinates $(u,v)$ has maximal contact with the germ of $C$ at
    $p$, of genus $i$ when $\delta^*_0-\delta^*_1$ does not divide
    $\delta^*_0$ and of genus $i-1$ otherwise.

\end{enumerate}
}
\end{de}
By an abuse of notation, when we set $- \nu_{C,p} (q_i^*)$,
$q_i^*$ stands for the element in the fraction field of
$\mathcal{O}_{C,p} $ that it defines. On the other hand, under the
conditions of Abhyankar-Moh Theorem, that is, the characteristic
of $k$ does not divide the degree of the curve $C$, approximate
roots are a family of approximates for $C$.

Now, let $\Delta = \{\delta_i\}_{i=0}^g$ be a $\delta$-sequence in
$\mathbb{N}_{> 0}$. It is well-known the existence of a unique
expression of the form

\begin{equation}
n_i \delta_i = \sum_{j=0}^{i-1} a_{ij} \delta_j,
\end{equation}
where $a_{i0} \geq 0$ and $0 \leq a_{ij} < n_j$, for $1 \leq j
\leq i-1$. Set $q_0:=x$ $q_1:=y$ and, for $1 \leq i \leq g$,
\begin{equation}
 \label{misq} q_{i+1}:= q_{i}^{n_{i}} - t_i  \prod_{j=0}^{i-1} q_j^{a_{ij}},
\end{equation}
where $t_i \in k \setminus \{0\}$ are arbitrary. Although all the
results in this paper concerning these polynomials hold for any
family of parameters $\{t_i\}_{i=1}^g$, we fix for convenience
$t_i=1$ for all $i$. Then, by applying the algorithms relative to
Newton polygons of a germ of curve given by Campillo in
\cite[III.4]{cam} to the germ given by $q_{g+1}$, it holds the
following result (see \cite[Section 4]{pa46} for more details),
where we notice that there is no restriction for the characteristic
of the field $k$.

\begin{prop}
\label{siete} The equality $q_{g+1}=0$ defines a plane curve $C$
with only one place at infinity such that $S_{C,\infty} =
S_{\Delta}$ and the set $\{q_i\}_{i=0}^{g}$ is a family of a
approximates for $C$ such that $- \nu_{C,p} (q_i)=\delta_i$ for all
$i=0,1,\ldots,g$.
\end{prop}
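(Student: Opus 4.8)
The plan is to reduce everything to a careful analysis of the Newton polygons of the germs defined by the polynomials $q_{i+1}$ at the point $p=(1:0:0)$, following the machinery of Campillo \cite[III.4]{cam}, and then to read off the statement inductively on $g$. More precisely, I would set up the induction on the number of characteristic data $g$: the base case $g=0$ is trivial since then $\Delta=\{\delta_0\}$, $d_1=\delta_0=1$ forces $\delta_0=1$ (or one argues directly that $q_1=y$ defines a line, which manifestly has one place at infinity with $S_{C,\infty}=\mathbb{N}_{\geq 0}=S_\Delta$). For the inductive step, I would assume the result holds for the $\delta$-sequence $\{\delta_i\}_{i=0}^{g-1}$, giving a curve $C_{g}$ (defined by $q_{g}=0$) with one place at infinity, semigroup at infinity $\langle \delta_0,\dots,\delta_{g-1}\rangle$, and $\{q_i\}_{i=0}^{g-1}$ a family of approximates with $-\nu_{C_{g},p}(q_i)=\delta_i$.

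The heart of the argument is then to understand the germ at $p$ (in coordinates $(u=y/x,v=1/x)$) of the curve defined by
\begin{equation}
q_{g+1} = q_g^{n_g} - \prod_{j=0}^{g-1} q_j^{a_{gj}},
\notag
\end{equation}
where $n_g\delta_g = \sum_{j=0}^{g-1} a_{gj}\delta_j$ with $a_{g0}\geq 0$, $0\leq a_{gj}<n_j$. The key point is that the valuation $-\nu_{C_g,p}$ assigns the \emph{same} value $n_g\delta_g$ to both monomials $q_g^{n_g}$ and $\prod q_j^{a_{gj}}$, so their difference has strictly larger value; this is exactly the mechanism that forces the strict transform of $\{q_{g+1}=0\}$ to acquire one further characteristic exponent. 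Condition (1) of Definition \ref{delta} ($n_g>1$, $d_{g+1}=1$) guarantees that the new exponent genuinely increases the genus and that after $g$ blow-up blocks the gcd drops to $1$; condition (3) ($\delta_g<\delta_{g-1}n_{g-1}$) ensures that the Newton polygon at each stage has the correct slope so that the new data are attached in a block $\Gamma_g$ as in Figure \ref{fig0} rather than producing an inadmissible configuration; condition (2) is what makes the subtracted monomial $\prod q_j^{a_{gj}}$ well-defined as an element of the value semigroup and hence the whole construction consistent. Running Campillo's Newton-polygon algorithm on $q_{g+1}$ with this input shows that the germ has a single branch at $p$, that branch is rational (all blow-up centers are $k$-rational, as noted after \eqref{infiniteseq}), and its equisingularity type has semigroup $\langle \delta_0,\dots,\delta_g\rangle = S_\Delta$; moreover $\deg_y q_{g+1}=n_g\cdot\deg_y q_g/\!\cdots = \delta_0/d_{g+1}=\delta_0$ so $\delta_0^*=\delta_0$, and absolute irreducibility of $C$ follows because a curve with a single place at infinity cannot be a nontrivial product.

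It remains to check that $\{q_i\}_{i=0}^{g}$ is a family of approximates in the sense of Definition \ref{aprox}: $q_0=x$, $q_1=y$ by construction; $-\nu_{C,p}(q_i)=\delta_i$ for $i\le g-1$ follows from the inductive hypothesis together with the fact that the valuation $\nu_{C,p}$ restricted to the relevant subring agrees with $\nu_{C_g,p}$ up to the point where they can differ (this is where I would be most careful — one must verify that passing from $C_g$ to $C$ does not change the values of the lower $q_i$, which is exactly a consequence of maximal contact); the degree condition $\deg_y q_i=\delta_0/d_i$ is immediate from \eqref{misq} and the definition of $n_i$; and the maximal-contact condition in (3) of Definition \ref{aprox} is precisely what the Newton-polygon analysis delivers at each stage, with the dichotomy on whether $\delta_0-\delta_1$ divides $\delta_0$ accounting for the genus shift. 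Finally $-\nu_{C,p}(q_g)=\delta_g$ comes from the value of the top monomial in the Newton polygon of $q_{g+1}$. I expect the main obstacle to be the bookkeeping in verifying that the lower approximates retain their values and maximal-contact properties under the last blow-up block — i.e., making the induction genuinely self-feeding — rather than any single hard inequality; the three Abhyankar--Moh conditions are tailored exactly so that Campillo's algorithm runs without obstruction, so once the induction is correctly framed the verification is essentially a transcription of \cite[Section 4]{pa46}.
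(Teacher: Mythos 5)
Your overall route is the one the paper itself points to: it offers no self-contained proof, but simply invokes Campillo's Newton--polygon algorithms applied to the germ of $q_{g+1}$ at $p$, with the details in Reguera \cite[Section 4]{pa46}, and your plan (induction on $g$, Newton polygons in the coordinates $(u,v)$, the Abhyankar--Moh conditions controlling the slopes/blocks) is that same argument in outline. However, as written your inductive set-up has two concrete flaws. First, the truncated sequence $\{\delta_0,\ldots,\delta_{g-1}\}$ is \emph{not} a $\delta$-sequence in $\mathbb{N}_{>0}$: its gcd is $d_g=n_g>1$, so condition (1) fails. The induction has to be run on the normalized sequence $\{\delta_i/d_g\}_{i=0}^{g-1}$ (which is a $\delta$-sequence and produces the same polynomials $q_0,\ldots,q_g$), and then the inductive hypothesis gives $-\nu_{C_g,p}(q_i)=\delta_i/d_g$ and $S_{C_g,\infty}=\langle\delta_0/d_g,\ldots,\delta_{g-1}/d_g\rangle$, not the values and semigroup you state. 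Consequently the ``bookkeeping'' you flag at the end is not that the lower values are \emph{unchanged} when passing from $C_g$ to $C$ -- they are not -- but that they scale by exactly $n_g=\deg C/\deg C_g$, which is what the maximal-contact/Noether-formula computation must deliver to recover $-\nu_{C,p}(q_i)=\delta_i$.

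Second, your stated key mechanism -- that ``$-\nu_{C_g,p}$ assigns the same value $n_g\delta_g$ to $q_g^{n_g}$ and to $\prod_j q_j^{a_{gj}}$, so their difference has strictly larger value'' -- does not parse: $q_g$ vanishes identically on $C_g$, so $\nu_{C_g,p}(q_g)$ is not a finite value (equivalently, the germ of $\{q_g=0\}$ at $p$ contains the branch of $C_g$, so the relevant intersection number is infinite). The cancellation argument has to be formulated either directly on the new curve $C$ (where $q_g^{n_g}=\prod_j q_j^{a_{gj}}$ identically, forcing $n_g\bigl(-\nu_{C,p}(q_g)\bigr)=\sum_j a_{gj}\delta_j$ once the lower values are known), or at the level of the Newton polygons of the germ of $q_{g+1}$ in $(u,v)$ (equivalently, via the divisorial valuation attached to the last divisor in the resolution of the place at infinity of $C_g$), which is precisely what Campillo's algorithm and Reguera's treatment compute and what yields irreducibility of the branch, the new characteristic pair, and $S_{C,\infty}=S_\Delta$. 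With the rescaled inductive hypothesis and the contact argument recast in one of these correct forms, your sketch does become the cited proof; as it stands, the two steps above are genuine gaps rather than mere bookkeeping.
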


The sequence of Newton polygons and the dual graph of the germ of a
curve with only one place at infinity can be recovered from a
$\delta$-sequence in $\mathbb{N}_{> 0}$, $\Delta= \{ \delta_0,
\delta_1, \ldots, \delta_s \}$, associated with it. We assume that
the Newton polygons are given by segments $P_i$ $(0 \leq i \leq
g-1)$ joining the points $(0,e_i)$ and $(m_i,0)$, $e_i,m_i \in
\mathbb{N}_{> 0}$. If $\delta_0-\delta_1$ does not divide $\delta_0$
then $s=g$ and
\begin{equation*}
e_0=\delta_0-\delta_1, \;\;\; e_i=d_{i+1}
\end{equation*}
\begin{equation*}
m_0=\delta_0, \;\;\; m_i=n_i\delta_i-\delta_{i+1}
\end{equation*}
for $1\leq i\leq s-1$.  Otherwise, $s=g+1$ and
\begin{equation*}
e_0=d_2=\delta_0-\delta_1, \;\;\; e_i=d_{i+2}
\end{equation*}
\begin{equation*}
m_0=\delta_0+n_1\delta_1-\delta_2, \;\;\;
m_i=n_{i+1}\delta_{i+1}-\delta_{i+2}
\end{equation*}
for $1\leq i\leq s-2$. These formulae can be deduced from the
knowledge of the sequence of Newton polygons associated with the
singularity at infinity of the curve defined by the equation
$q_{g+1}(x,y)=0$ and results in \cite[IV.3]{cam}.

With respect to the dual graph or the blocks in the HNE of the
germ, one gets \label{launo}
\begin{equation}
\label{z} \frac{m_{j-1}}{e_{j-1}}+k_{j-1}= h_{s_{j-1}}  +
\frac{1}{{h_{s_{j-1}+1}  + _{ \ddots + \frac{1}{{h_{s_j-1}  +
\frac{1}{{k_j }}}}} }}, \end{equation} for $j=1,2,\ldots,g$, where
$s_0=k_0=0$ (see \cite[III.4]{cam}).

We end this section by collecting recent information about a
particularly simple set of curves with only one place at infinity.

\begin{de}
{\rm An  {\it Abhyankar-Moh-Suzuki} ($AMS$ for short) curve $C$ is a
plane curve with only one place at infinity such that it is rational
and nonsingular in its affine part. }
\end{de}

The advantage of this type of curves is that their dual graphs are
easily described \cite{bobadilla} and their associated
$\delta$-sequences in $\mathbb{N}_{> 0}$ are also very easy to
compute: they are those sequences $\{\delta_i\}_{i=0}^g$ of distinct
positive integers such that $\delta_i$ divides $\delta_{i-1}$ for
all $i=1,2,\ldots,g$ and $\delta_g=1$ (this can be deduced from the
proof of \cite[Proposition 2]{paco}).

\section{Valuations of function fields of surfaces}\label{tres}

In this paper, we are concerned with weight functions given by
certain type of valuations close to curves with only one place at
infinity. For this reason, we devote this section to state the main
facts related to valuations we shall need.
See \cite{zar1,zar,abh1,abh2,spi,kiy} for some significant
applications of valuation theory in algebraic geometry. Additional
details to this section, following the same line of this paper, can
be found in Section 3 of \cite{ga-sa}.

From now on, we shall set $p$  a point in $\mathbb{P}^2$, $R :=
\mathcal{O}_{\mathbb{P}^2, p}$ and $K$ the quotient field of $R$.

\begin{de}
{\rm  A {\em valuation} of the field $K$ is a mapping
\[
\nu : K^* (:= K \setminus \{0\}) \rightarrow G,
\]
where $G$ is a totally ordered group, such that it satisfies $\nu (f
+ g) \geq \min \{\nu(f), \nu (g) \}$ and $\nu (fg) =\nu (f) + \nu
(g)$, $f$ and $g$ being elements in $K^*$. }
\end{de}

A valuation as above is said to be centered at $R$ whenever $R
\subseteq R_\nu := \{ f \in K^* | \nu(f) \geq 0\}\cup \{0\}$ and $R
\cap m_\nu \left(:= \{ f \in K^* | \nu(f) > 0\}  \cup \{0\}\right)$
coincides with the maximal ideal $m$ of $R$. We call this type of
valuations {\it plane valuations}. Assume for a while that the field
$k$ is algebraically closed. Plane valuations have a deep
geometrical meaning as the following  result proves (see
\cite{spi}):

\begin{theo}\label{b}
There is a one to one correspondence between the set of
plane valuations (of $K$ centered at $R$) and the set of simple
sequences of quadratic transformations of the scheme {\rm Spec} $R$.
\end{theo}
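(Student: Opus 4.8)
The plan is to establish the correspondence in both directions, exhibiting it explicitly as the association of a plane valuation with the sequence of infinitely near points (equivalently, the sequence of point blow-ups) that it dominates. First I would recall that, since $R=\mathcal{O}_{\mathbb{P}^2,p}$ is a two-dimensional regular local ring with algebraically closed residue field $k$, a quadratic transformation of $\operatorname{Spec} R$ at its closed point is the blow-up at that point, and a \emph{simple} sequence of quadratic transformations is one in which each blow-up is performed at a closed point lying on the exceptional divisor of the immediately preceding blow-up. Thus a simple sequence of quadratic transformations is the same datum as a (finite or infinite) chain $p=p_0, p_1, p_2, \ldots$ of points infinitely near $p$, each $p_{i+1}$ lying on the exceptional curve $E_{i+1}$ produced by blowing up $p_i$; it yields a chain of two-dimensional regular local rings $R=R_0\subsetneq R_1\subsetneq R_2\subsetneq\cdots\subseteq K$, where $R_{i+1}=\mathcal{O}_{X_{i+1},p_{i+1}}$ dominates $R_i$.

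Next, for the map from valuations to sequences, I would take a plane valuation $\nu$ of $K$ centered at $R$. Because $R$ is a regular local ring of dimension two and $\nu$ dominates it, there is a unique quadratic transformation $R_1$ of $R$ — i.e.\ a unique closed point $p_1$ on the exceptional divisor of the blow-up of $\operatorname{Spec} R$ — such that $\nu$ dominates $R_1$: concretely, if $x,y$ generate $m$ with $\nu(x)\le\nu(y)$, then $R_1=R[y/x]_{\mathfrak{m}_1}$ where $\mathfrak{m}_1$ is the center of $\nu$ on $\operatorname{Spec} R[y/x]$, and one checks $R_1$ is again a two-dimensional regular local ring dominated by $\nu$ (this is the local uniformization step for surfaces, essentially Zariski's argument; the key inequality is $\nu(x)>0$ forcing the center to be a closed point unless $\nu$ is the divisorial valuation of $E_1$ itself). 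Iterating produces a unique simple sequence, and it terminates exactly when some $R_n$ equals the divisorial valuation ring $R_\nu$ (the divisorial case). Conversely, given a simple sequence with associated chain $R=R_0\subsetneq R_1\subsetneq\cdots$, I would form $V:=\bigcup_i R_i$ and argue that $V$ is a valuation ring of $K$: the union of a chain of local rings along dominating inclusions is local, and the crucial point is that $V$ is a valuation ring, which follows because any $f\in K^*$ can be written, after finitely many quadratic transformations, as a unit times a monomial in the successive regular parameters, so either $f$ or $f^{-1}$ lies in some $R_i\subseteq V$ (if the sequence is finite and ends at a divisorial ring one appends that ring, or equivalently takes $V=R_n$). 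The valuation $\nu_V$ attached to $V$ is then centered at $R$ by construction.

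Finally I would verify that the two constructions are mutually inverse: starting from $\nu$, building the sequence, and taking the union recovers $R_\nu$ (hence $\nu$ up to equivalence), because each $R_i$ in the sequence was chosen to be dominated by $\nu$, so $\bigcup R_i\subseteq R_\nu$, and maximality/uniqueness of the quadratic-transform chain forces equality; conversely, starting from a sequence, the valuation $\nu_V$ dominates each $R_i$ and its canonical sequence of quadratic transformations must, by the uniqueness established above, coincide step by step with the given one. The main obstacle — and the part that genuinely requires the surface hypothesis and the algebraic closedness of $k$ — is showing that the chain of quadratic transformations attached to a plane valuation is infinite unless $\nu$ is divisorial, i.e.\ that the process of blowing up the center of $\nu$ never gets "stuck" at a non-regular or higher-dimensional center; this is exactly the content of Zariski's local uniformization in dimension two and of the classification results cited (\cite{zar,spi,kiy}), so in the write-up I would invoke those rather than reprove them, and concentrate the argument on the bookkeeping that makes the correspondence bijective. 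I would also remark that this is the two-dimensional instance of the general principle relating valuations centered at a regular local ring to sequences of blow-ups, and refer to Spivakovsky \cite{spi} and to Section 3 of \cite{ga-sa} for the details omitted here.
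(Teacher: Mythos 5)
The paper itself does not prove this statement: it is quoted directly from Spivakovsky \cite{spi} (see also \cite{abh2} and Section~3 of \cite{ga-sa}), so the comparison here is with the classical argument in those sources. Your sketch reconstructs exactly that argument, and its skeleton is right: uniqueness of the closed point dominated by $\nu$ at each stage, the union $V=\bigcup_i R_i$ of the chain in the infinite case, and the check that the two constructions are mutually inverse. Two points, however, are wrong or too loose as written. First, your treatment of the finite (type A, divisorial) case: a two-dimensional regular local ring $R_n$ is never a valuation ring, so the sequence does not terminate ``when some $R_n$ equals $R_\nu$'', and in the converse direction one cannot ``take $V=R_n$''. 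What actually happens is that the sequence attached to $\nu$ stops when the center of $\nu$ on $X_{n+1}$ is no longer a closed point but the generic point of the last exceptional divisor, i.e.\ when $\nu$ is the $m_n$-adic (order) valuation of $R_n$; equivalently, $R_\nu$ is the localization of $R_{n+1}$ at the height-one prime of that divisor. Correspondingly, a finite simple sequence corresponds to the order valuation of its last ring, not to the ring itself (this is precisely how the paper describes type A valuations later in Section~3, ``$\nu$ is the $m_N$-adic valuation'').

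Second, your justification that $V=\bigcup_i R_i$ is a valuation ring in the infinite case does not work as stated: if $f$ is a unit times a monomial $u^av^b$ in regular parameters of some $R_i$ with $a>0>b$, then neither $f$ nor $f^{-1}$ lies in $R_i$, so ``monomialization'' alone does not finish the argument. The clean route is Abhyankar's lemma that the union of any infinite sequence of successive quadratic transforms of a two-dimensional regular local ring is a valuation ring \cite{abh2} (or the order-decreasing argument applied to a coprime representation $f=a_i/b_i$ in each $R_i$, using that $R_i$ is a UFD). Finally, the round-trip identity $\bigcup_i R_i=R_\nu$ deserves one explicit line in place of the appeal to ``maximality'': since $V\subseteq R_\nu$ and $R_\nu=V_q$ for some prime $q$ of $V$, the domination $m_\nu\cap R_i=m_i$ for all $i$ gives $q\supseteq\bigcup_i m_i=m_V$, hence $q=m_V$ and $V=R_\nu$. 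With these repairs your proposal is the standard proof and is consistent with the sources the paper cites.
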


What Theorem \ref{b} says is that, attached to a plane valuation
$\nu$, there is a unique sequence of point blowing-ups
\begin{equation}
\label{uno}  \cdots X_{N+1} \stackrel{\pi_{N+1}}{\longrightarrow}
X_{N} \longrightarrow \cdots \longrightarrow X_{1}
\stackrel{\pi_{1}}{\longrightarrow} X_{0}=X= {\rm Spec} \;R,
\end{equation}
where $\pi_{i+1}$ is the blowing-up of $X_i$ at the unique closed
point $p_i$ of the exceptional divisor $E_i$ (obtained after the
blowing-up $\pi_i$) satisfying that $\nu$ is centered at the local
ring  $\mathcal{O}_{X_i,p_i}$ $(:= R_i)$. Conversely, each
sequence as in (\ref{uno}) provides a unique plane valuation.

This fact reflects the closeness between plane valuations and germs
of plane curves. It is clear that in a similar way to that explained
in Section 2, we can provide a dual graph and also a HNE (with
respect to a fixed regular system of parameters $\{u,v\}$ of the
ring $R$) for each plane valuation. Notice that, in this case, the
sequence (\ref{uno}) can be either infinite or finite (what does not
happen for germs of curves, although usually it is only showed the
important part, that is the blowing-ups we must do until the germ is
resolved). Attending to the structure of the dual graph of the
sequence (\ref{uno}), Spivakovsky in \cite{spi} classifies the plane
valuations in five types (this refines a previous classification by
Zariski and it can also be refined \cite{ga}). We briefly recall the
essential of this classification, since it will be useful for us.
For further reference in the line of this paper, the reader can see
\cite[3.3]{ga-sa}, where this classification is given in terms of
the HNE of the valuations. Notice that the HNE has the advantage
that it is suitable for positive characteristic  and provides
parametric equations of the valuations. \vspace{2mm}

-- Valuations whose associated sequence (\ref{uno}) is finite are
called of {\bf TYPE A}.

-- A plane valuation whose sequence (\ref{uno}) consists, from one
blowing-up on, only of blowing-ups at free points is named of {\bf
TYPE B}. Sequences (\ref{uno}) for these valuations behave as
those that resolve  germs of plane curves.

-- {\bf TYPE C} valuations are those such that their attached
sequence (\ref{uno}) can contain finitely many blowing-ups as above,
that is, with blocks of free and satellite points alternatively but
it ends with infinitely many satellite blowing-ups, all of them with
center at the strict transform of the same exceptional divisor.

-- When the sequence (\ref{uno}) is  as above, that is, it ends with
infinitely many satellite blowing-ups, but they are not ever
centered at the strict transform of the same exceptional divisor, we
get {\bf TYPE D} valuations.

-- Finally, a valuation whose corresponding sequence (\ref{uno})
alternates indefinitely blowing-ups at (blocks of) free and
satellite points is named to be a {\bf TYPE E} valuation.
\vspace{3mm}

An important fact for the development of this paper is that any
plane valuation  can be regarded as a limit of a sequence
$\{\nu_i\}_{i=1}^\infty$ of type A  valuations. The valuations
$\nu_i$ correspond to the divisors created by the sequence
(\ref{uno}) attached to $\nu$. The proof is based on the fact that
the ring $R_\nu$ is the direct limit of the sequence of rings $R_i$.

When the valuation is centered at a two-dimensional regular local
ring $\mathfrak{R}$ whose residue field is not algebraically
closed, the above procedure works similarly. Indeed, the valuation
ring $\mathfrak{R}_\nu$ of a type A valuation $\nu$ is a local
ring  that dominates $\mathfrak{R}$ and has the same quotient
field as $\mathfrak{R}$. Between $\mathfrak{R}$ and
$\mathfrak{R}_\nu$ there exists a uniquely determined sequence of
mutually dominated local rings
\[
\mathfrak{R} \subset \mathfrak{R}_1 \subset \cdots \subset
\mathfrak{R}_{N+1} =\mathfrak{R}_\nu,
\]
 which provides the blowing-up sequence. As above, the residue field
 of $\mathfrak{R}_\nu$ is a transcendental extension of the one of
 $\mathfrak{R}$ and the difference consists on the fact that the
 residue field of $\mathfrak{R}_i$ ($1 \leq i \leq N$) needs not coincide
 with the one of $\mathfrak{R}$. However, in this paper we only
 consider valuations where the above residue fields always coincide
 and therefore we can handle our valuations as in the algebraically
 closed case.

Returning to the sequence $\{\nu_i\}_{i=1}^\infty$ of type A
valuations converging to a plane valuation $\nu$, when we deal
with a type D or E valuation $\nu$, its value group $G$ is a
subset of the set of real numbers $\mathbb{R}$ and, if we consider
the normalization $\nu'_i := \nu_i / \nu_i (m)$ of the valuations
$\nu_i$, $m$ being the maximal ideal of $R$ and $\nu_i (m)$  the
minimum of the values $\nu_i(g)$ when $g$ runs over $m\setminus
\{0\}$, then $\nu(f) = \lim_{i \rightarrow \infty} \nu'_i (f)$,
for all $f\in K^*$. For type B or C valuations, $G$ is included in
$\mathbb{Z}^2$, $\mathbb{Z}$ denoting the integer numbers, and one
can understand the limit by using the Noether formula. Indeed, let
$\nu$ be  a type A valuation, with attached sequence (\ref{uno})
which ends at $X_{N+1}$ and $f \in R$ an analytically irreducible
element. Set $m_i$ the maximal ideal of the local ring $R_i$ in
the above sequence and assume that $p_0, p_1, \ldots, p_r$, $r
\leq N$, are the common infinitely near points for the sequence
(\ref{uno}) and the resolution of the germ of curve given by $f$.
Then $\nu(m_i)
>0$ can be easily computed from the dual graph or the HNE of $\nu$
(as in the case of germs of curves) and
\begin{equation}
\label{noet}\nu(f) = \sum_{i=0}^r \nu(m_i) e(p_i),
\end{equation}
where $e(p_i)$ is the multiplicity of the germ given by $f$ at the
point $p_i$. When $\nu$ is of type B, we get the same equality,
but here $N = \infty$ and $\nu(m_i) \in \mathbb{Z}^2$ although its
first coordinate vanishes; the limit appears when $r= \infty$
because then we set $\nu(f) =(1,0)$. Finally, Formula (\ref{noet})
also holds whenever $\nu$ is a type C valuation;
\label{tresestrellas} the concept of limit appears in the values
$\nu(m_i)$ in the following way: set $p_{i_0}$ that blowing-up
center in (\ref{uno}) having infinitely many proximate points
$p_i$, $i>i_0$, then $\nu(m_i)=(0,1)$ when $i>i_0$,
$\nu(m_{i_0})=(1,0)$ (the ``sum" of infinitely many $(0,1)$) and
the remaining values $\nu(m_i)$ can be computed from the above
ones as in the case of type A valuations.

\section{Weight functions given by plane valuations at infinity}
\label{cuatro}

\subsection{Weight functions and plane valuations at infinity}

To give the definition of weight function, first we recall some
concepts related to semigroups. Assume that $\alpha, \beta,
\gamma$ are arbitrary elements in a commutative semigroup  with
zero $\Gamma$. If $\leq$ is an ordering on  $\Gamma$, $\leq$ is
said to be {\it admissible} if $0 \leq \alpha$ and, moreover,
$\alpha \leq \beta$ implies $\alpha + \gamma \leq \beta + \gamma$.
On the other hand,  $\Gamma$ is named {\it cancellative} whenever
from the equality $\alpha + \beta = \alpha + \gamma$ one can
deduce $\beta = \gamma$. Finally for stating the mentioned
definition, stand $\Gamma$ for a cancellative well-ordered
commutative with zero and with admissible ordering semigroup and
$\Gamma \cup \{-\infty\}$ for the above semigroup together with a
new minimal element denoted by $-\infty$, which satisfies $\alpha
+ (- \infty) = - \infty$ for all $\alpha \in \Gamma \cup
\{-\infty\}$.

\begin{de}
{\rm  A {\em weight function} from a $k$-algebra $A$ onto a
semigroup $\Gamma \cup \{-\infty\}$ as above is a mapping $w: A
\longrightarrow \Gamma \cup \{-\infty\}$ such that, for $p, q \in
A$, the following statements must be satisfied:
\begin{enumerate}
\item $w(p) = - \infty$ if and only if  $p=0$; \item $w(ap) = w
(p)$ for all nonzero element $ a \in k$; \item $w(p+q) \leq \max \{
w(p),w(q) \}$; \item If $ w(p) = w(q)$, then there exists a nonzero
element $a \in k^*$ such that $w(p-aq)< w(q)$; \item $w(pq)= w(p) +
w(q)$.
\end{enumerate}
}
\end{de}

When the last condition is not imposed, we get the definition of
{\it order function}. It is clear that if $w$ is a weight
function, then the triple $(A,w,\Gamma)$ is an order domain over
$k$ (see, for instance, \cite{g-p} for the definition of order
domain).

The next result, proved in \cite[Proposition 2.2]{ga-sa}, shows  how
to get weight functions from valuations.

\begin{prop}
\label{a} Let $\mathfrak{K}$ be the quotient field of a Noetherian
regular local domain $\mathfrak{R}$ with maximal $\mathfrak{m}$.
Let $\nu: \mathfrak{K}^* \rightarrow \mathfrak{G}$ be a valuation
of $\mathfrak{K}$ which is centered at $\mathfrak{R}$. Assume that
the canonical embedding of the field $\mathfrak{k} =
\mathfrak{R}/\mathfrak{m}$ into the field $ \mathfrak{R}_\nu
/\mathfrak{m}_\nu$ is an isomorphism.

Set $w: \mathfrak{K}^* \rightarrow \mathfrak{G}$ the mapping given
by $w(f) = - \nu (f)$, $f \in \mathfrak{K}^*$. If $\mathfrak{A}
\subseteq \mathfrak{K}^*$ is a $\mathfrak{k}$-algebra such that
$w(\mathfrak{A})$ is a cancellative, commutative, free of torsion,
well-ordered semigroup with zero, $\Gamma$, where the associated
ordering is admissible, then $w: \mathfrak{A} \longrightarrow
w(\mathfrak{A}) \cup \{- \infty\}$, $w(0) = - \infty$, is a weight
function.
\end{prop}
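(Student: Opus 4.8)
The plan is to verify directly the five defining properties of a weight function for the map $w = -\nu$ restricted to $\mathfrak{A}$ (setting $w(0) = -\infty$), since the required structural properties of the target $\Gamma \cup \{-\infty\}$ — cancellativity, commutativity, freeness of torsion, well-ordering, admissibility of the order, and the adjunction of a minimal absorbing element $-\infty$ — are all part of the hypotheses. Property (1) is immediate from the convention $w(0) = -\infty$ together with the fact that $\nu$ takes values in $\mathfrak{G}$ on all of $\mathfrak{K}^*$. Properties (5) and (3) follow at once from the valuation axioms $\nu(fg) = \nu(f) + \nu(g)$ and $\nu(f+g) \ge \min\{\nu(f),\nu(g)\}$ (the latter giving $w(p+q) = -\nu(p+q) \le -\min\{\nu(p),\nu(q)\} = \max\{w(p),w(q)\}$), after separately handling the cases in which $p$, $q$ or the relevant sum vanishes, where the rule that $-\infty$ absorbs addition and is minimal makes the equalities/inequalities trivial.

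For property (2) the point is that every nonzero $a \in \mathfrak{k} = k$ satisfies $\nu(a) = 0$. Indeed, the $\mathfrak{k}$-algebra structure identifies $\mathfrak{k}$ with a coefficient subfield of $\mathfrak{R}$, i.e.\ the composite $\mathfrak{k} \hookrightarrow \mathfrak{R} \twoheadrightarrow \mathfrak{R}/\mathfrak{m}$ is the identity, so $\mathfrak{k} \setminus \{0\}$ consists of units of the local ring $\mathfrak{R}$, hence of units of $\mathfrak{R}_\nu$, hence has valuation $0$. Therefore $w(ap) = -\nu(a) - \nu(p) = -\nu(p) = w(p)$ for $p \ne 0$, and the case $p = 0$ is immediate.

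The heart of the proof is property (4), and this is exactly where the hypothesis that the canonical embedding $\mathfrak{k} \hookrightarrow \mathfrak{R}_\nu / \mathfrak{m}_\nu$ is an \emph{isomorphism} enters. Suppose $p, q \in \mathfrak{A} \setminus \{0\}$ with $w(p) = w(q)$, i.e.\ $\nu(p) = \nu(q)$. Then $\nu(p q^{-1}) = 0$, so $p q^{-1} \in \mathfrak{R}_\nu \setminus \mathfrak{m}_\nu$ and its residue class in $\mathfrak{R}_\nu/\mathfrak{m}_\nu$ is nonzero; by surjectivity of the canonical map there is a (unique) $a \in \mathfrak{k}^* = k^*$ whose residue class equals that of $p q^{-1}$, whence $p q^{-1} - a \in \mathfrak{m}_\nu$, i.e.\ $\nu(p q^{-1} - a) > 0$ (or $p q^{-1} = a$ exactly). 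Multiplying through by $q$ and using additivity of $\nu$, we get $\nu(p - aq) = \nu(q) + \nu(p q^{-1} - a) > \nu(q)$, with the convention $\nu(p-aq) = +\infty$ when $p = aq$; in either case $w(p - aq) = -\nu(p-aq) < -\nu(q) = w(q)$. Since $\mathfrak{A}$ is a $\mathfrak{k}$-algebra, $p - aq \in \mathfrak{A}$, so $a$ is the required scalar. With (1)--(5) established, $w$ is a weight function.

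The only genuine obstacle is property (4): one must exhibit the ``leading coefficient'' $a$ inside $k$ itself, not merely inside the possibly larger field $\mathfrak{R}_\nu/\mathfrak{m}_\nu$, and this is precisely what the isomorphism hypothesis buys; dropping it would at best yield an order-type axiom with coefficients in that larger residue field. Everything else is a routine transcription of the valuation axioms, the only care needed being the bookkeeping for the symbol $-\infty$ and the elementary observation that coefficient-field elements are $\nu$-units.
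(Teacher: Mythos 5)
Your proof is correct: properties (1)--(3) and (5) are the routine transcription of the valuation axioms (with the bookkeeping for $-\infty$ and the observation that nonzero coefficient-field elements are $\nu$-units), and your use of the surjectivity of $\mathfrak{k}\rightarrow \mathfrak{R}_\nu/\mathfrak{m}_\nu$ to produce $a\in\mathfrak{k}^*$ with $\nu(pq^{-1}-a)>0$, hence $w(p-aq)<w(q)$, is precisely the expected argument for property (4). The paper does not reproduce a proof but defers to \cite[Proposition 2.2]{ga-sa}, and your argument is essentially that standard one, so there is nothing to add.
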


Recall that $p \in \mathbb{P}^2$, $R=\mathcal{O}_{\mathbb{P}^2,p}$
and $K$ is the quotient field of $R$. The isomorphism $R/m \cong
R_\nu /m_\nu$ happens for any plane valuation $\nu$ except for
those of type A.

Now, we are going to introduce another fundamental concept for us:
{\it plane valuation at infinity}. To do it, we start by stating
the concept of general element of a  type A plane valuation. As we
have said, these valuations are the unique whose corresponding
sequence of point blowing-ups is finite. In fact, they are defined
by the last created exceptional divisor and, by this reason, they
are also named divisorial valuations. Concretely, with the
notation in Section 3, if $\pi_{N+1}$ is the last blowing-up in
the sequence (\ref{uno}) given by a divisorial valuation $\nu$,
then $\nu$ is the $m_N$-adic valuation, $m_N$ being the maximal
ideal of the ring $R_N$.

\begin{de}
{\rm Let $\nu$ be a divisorial valuation. An element $f$ in the
maximal ideal of $R$ is called to be a {\it general element of
$\nu$} if the germ of curve given by $f$ is analytically
irreducible, its strict transform in $X_{N+1}$ is smooth and meets
$E_{N+1}$ transversely at a non-singular point of the exceptional
divisor of the sequence (\ref{uno}) attached to $\nu$. }
\end{de}

General elements are useful to compute plane valuations. Indeed,
if $f \in R$, then
$$
\nu (f) = \mbox{ min $\{(f,g) | g $ is a general element of $\nu
\},$ }
$$
where $(f,g)$ stands for the intersection multiplicity of the
germs of curve given by $f$ and $g$.

\begin{de}
{\rm A {\it plane divisorial valuation at infinity} is a plane
divisorial valuation of $K$ centered at $R$ that admits, as a
general element, an element in $R$ providing the germ at $p$ of some
curve with only one place at infinity ($p$ being its point at
infinity). }
\end{de}

\begin{de}
\label{preaprva} {\rm A  plane {\it valuation} $\nu$ of $K$
centered at $R$ is said to be {\it  at infinity} whenever it is a
limit of plane divisorial valuations at infinity. More explicitly,
set $\{\nu_i\}_{i=1}^\infty$ the set of plane divisorial
valuations, corresponding to divisors $E_i$, that appear in the
sequence (\ref{uno}) given by $\nu$. $\nu$ will be at infinity if,
for any index $i_0$, there is some $i > i_0$ such that $\nu_i$ is
a plane divisorial valuation at infinity. }
\end{de}

Afterwards, we shall see that, as it happens for type A
valuations, type B ones are not suitable for our purposes. So we
exclude them of the next definition.

\begin{de}
\label{aprva} {\rm Let $\nu$ be a plane valuation at infinity that
is neither of type A nor of type B. A sequence of polynomials $P=\{
q_i(x,y) \}_{i \geq 0}$ in $k[x,y]$ is a {\it family of
approximates} for $\nu$ whenever each plane curve $C$ with only one
place at infinity providing a general element of some of the plane
divisorial valuations at infinity converging to $\nu$ admits some
subset of $P$ as a family of approximates and $P$ is minimal with
this property.}
\end{de}

\subsection{$\delta$-sequences }
\label{42}

Now, we introduce the most important concept of this paper: the one
of $\delta$-sequence. First of all, notice that if $S$ is an ordered
commutative semigroup in $\mathbb{R}$ (naturally ordered) or
$\mathbb{Z}^2$ (with the lexicographical ordering, that is $\alpha
> \beta$ if and only if the left-most nonzero entry of $\alpha -
\beta$ is positive) and $\gamma_0 > \gamma_1>0$ are in $S$ such
that $j \gamma_1 > \gamma_0$ for some integer number $j >0$, then
we can set $\gamma_0 = m_1 \gamma_1 + \gamma_2$ with $m_1 \in
\mathbb{N}_{>0}$, $\gamma_2$ in $\mathbb{R}$ or $\mathbb{Z}^2$ and
$0 \leq \gamma_2 < \gamma_1$ (note that $m_1$ and $\gamma_2$ are
uniquely determined by these conditions). Repeating this
procedure, one gets $\gamma_1 = m_2 \gamma_2 + \gamma_3$ whenever
$j \gamma_2
> \gamma_1$ for some $j >0$. If we iterate, there are three
possibilities: {\it Case} 1) The algorithm stops because we get
$\gamma_i =0$ for some index $i$; in this case we can speak about
the {\it greatest common divisor} $\gamma_{i-1}$ of $\gamma_0$ and
$\gamma_1$. {\it Case} 2) The algorithm does not stop.  {\it
Case}~3) In certain step $i$, there is no $j \in \mathbb{N}_{>0}$
such that $j \gamma_i
> \gamma_{i-1}$. The concept of $\delta$-sequence is motivated by
these three possibilities, that is, we shall consider sequences of
positive elements $\Delta= \{ \delta_0,\delta_1,\ldots,\delta_i,
\ldots \}$ in $\mathbb{R}$ or in $\mathbb{Z}^2$ and,  reproducing
the computations in pages \pageref{z0} and \pageref{z} with the help
of the above procedure (that is the Euclidian algorithm adapted to
our data), we shall arrive to different situations which allow us to
enlarge in a natural way the concept of $\delta$-sequence in
$\mathbb{N}_{>0}$.

For a start, a {\it normalized $\delta$-sequence in
$\mathbb{N}_{>0}$} is an ordered finite set of rational numbers
$\overline{\Delta} = \{
\overline{\delta}_0,\overline{\delta}_1,\ldots,\overline{\delta}_g
\}$ such that there is a $\delta$-sequence in $\mathbb{N}_{>0}$,
$\Delta = \{ \delta_0,\delta_1,\ldots,\delta_g \}$, satisfying
$\overline{\delta}_i = \delta_i /\delta_1$ for $0 \leq i \leq g$.
Notice that from $\overline{\Delta}$, by writing
$\overline{\delta}_i = r_i /s_i$ as a quotient of relatively prime
elements, one can recover $\Delta$ since $\delta_i =
\overline{\delta}_i \mbox{lcm}(s_i)_{0 \leq i \leq g}$. From now on,
we set $C_\Delta = C_{\overline{\Delta}}$  the curve with only one
place at infinity given by $\Delta$ that provides Proposition
\ref{siete}. The dual graph of the resolution of the singularity of
$C_\Delta$ at the point at infinity will be named {\it the dual
graph given by $\Delta$ or $\overline{\Delta}$.}

\begin{de}
\label{buena} {\rm A {\it $\delta$-sequence} in $\mathbb{Z}^2$
(respectively, $\mathbb{Q}$) (respectively, $\mathbb{R}$) is a
sequence $\Delta = \{ \delta_0,\delta_1,\ldots,\delta_i, \ldots \}$
of elements in $\mathbb{Z}^2$ (respectively, $\mathbb{Q}$)
(respectively, $\mathbb{R}$) such that it generates a well-ordered
sub-semigroup of $\mathbb{Z}^2$ (respectively, $\mathbb{Q}$)
(respectively, $\mathbb{R}$) and
\begin{description}
\item[($\mathbb{Z}^2$)] $\Delta = \{
\delta_0,\delta_1,\ldots,\delta_g \}
 \subset \mathbb{Z}^2$ is finite, $g \geq 2$ (respectively, $\geq 3$) and there exists a
 $\delta$-sequence in $\mathbb{N}_{>0}$,
 $\Delta^* = \{ \delta_0^*,\delta_1^*,\ldots,\delta_g^* \}$, such
 that $\delta_0^*-\delta_1^*$ does not divide (respectively, divides)
 $\delta^*_0$ and
\[
\delta_i = \frac{\delta_i^*}{A a_t +B} (A,B) \;\;\; (0\leq i \leq
g-1)\;\; \mbox{ and}
\]
\[
\delta_g = \frac{\delta_g^*+A' a_t +B' }{A a_t +B} (A,B) -(A',B'),
\]
where $\langle a_1;a_2, \ldots,a_t\rangle$, $a_t\geq 2$,  is the
continued
 fraction expansion of the quotient $m_{g-1}/ e_{g-1}$
 (respectively, $m_{g-2}/ e_{g-2}$) given by
 $\Delta^*$ (see (\ref{z})) and, considering the finite recurrence
 relation $\underline{y}_i = a_{t-i}\underline{y}_{i-1} +
 \underline{y}_{i-2}$, $\underline{y}_{-1}=(0,1)$,
 $\underline{y}_{0}=(1,0)$, then $(A,B) := \underline{y}_{t-2}$ and $(A',B') : =
 \underline{y}_{t-3}$. We complete this definition by adding that $\Delta = \{
 \delta_0,\delta_1\}$ (respectively, $\Delta = \{
 \delta_0,\delta_1,\delta_2\}$) is a $\delta$-sequence in $\mathbb{Z}^2$
 whenever $\delta_0 = \underline{y}_{t-1}$ and $\delta_0 - \delta_1 =
 \underline{y}_{t-2}$ (respectively, $\delta_0=j\underline{y}_{t-2}$,
 $\delta_0-\delta_1=\underline{y}_{t-2}$ and $\delta_0+n_1\delta_1-\delta_2=\underline{y}_{t-1}$) for the above recurrence attached to a
 $\delta$-sequence in $\mathbb{N}_{>0}$,
 $\Delta^* = \{ \delta_0^*,\delta_1^*\}$ (respectively, $\Delta^* = \{
 \delta_0^*,\delta_1^*,\delta_2^*\}$, such that $j:=\delta_0^*/(\delta_0^*-\delta_1^*)\in \mathbb{N}_{\geq 0}$ and $n_1:=\delta_0^*/\gcd(\delta_0^*,
 \delta_1^*)$).

\item[($\mathbb{Q}$)] (Respectively, $\Delta =
\{\delta_0,\delta_1,\ldots,\delta_i, \ldots\} \subset \mathbb{Q}$
is infinite and any
        ordered subset $\Delta_j = \{\delta_0,\delta_1,\ldots,\delta_j\}$
        is a normalized $\delta$-sequence in $\mathbb{N}_{>0}$).

\item[($\mathbb{R}$)] (Respectively, $\Delta = \{
\delta_0,\delta_1,\ldots,\delta_g \} \subset \mathbb{R}$ is
        finite, $g \geq 2$, $\delta_i$ is a positive rational number for $0 \leq i \leq
        g-1$, $\delta_g $  is non-rational,
        and there exists a sequence $$ \left \{ \overline{\Delta}_j =
        \{ \delta_0^j,\delta_1^j,\ldots,\delta_g^j \} \right\}_{j
        \geq1}$$
        of normalized $\delta$-sequences in $\mathbb{N}_{>0}$ such
        that $\delta_i^j = \delta_i$ for $0 \leq i \leq
        g-1$ and any $j$ and  $\delta_g = \lim_{j \rightarrow \infty}
        \delta_g^j$. We complete this definition by adding that
        $\Delta = \{\tau,1\}$, $\tau>1$ being a non-rational number,
        is also a $\delta$-sequence in $\mathbb{R}$).
\end{description}
}
\end{de}

For simplicity's sake, we shall do our theoretical development only
for $\delta$-sequences (in $\mathbb{Z}^2$, $\mathbb{Q}$ or
$\mathbb{R}$) verifying that there is no positive integer $j$ such
that $\delta_0=j(\delta_0-\delta_1)$. However all the results in
this paper are true for any $\delta$-sequence because the reasonings
in the non-considered case are similar taking into account that
$m_{g-2}$ and $e_{g-2}$ must be used instead $m_{g-1}$ and
$e_{g-1}$.

Let us see that $\delta$-sequences $\Delta$ in $\mathbb{Z}^2$ are
intimately related to type C plane valuations at infinity. Consider
the curve $C:= C_{\Delta^*}$ and the set of associated polynomials
$\{q_i(x,y)\}_{i=0}^{g+1}$ of  Proposition \ref{siete}. Then,
$\delta^*_i=-\nu_{C,p}(q_i(x,y))$ for all $i=0,1,\ldots,g$. We have
defined the values $\delta_i$ in the same form, but associated to a
certain valuation at infinity of type C obtained from $C$. Indeed,
let $ \mathcal{D} := \{p_0=p,p_1,p_2,\ldots\}$ be the sequence of
centers of the blowing-ups associated with the curve $C$ given in
(\ref{infiniteseq}) and set $i_0$ the maximum  among the positive
integers $j$ such that $p_j$ admits more than a point proximate to
it; our definition uses the relationship between $ -
\nu_{C,p}(q_i(x,y)) = \delta_i^*$ and $ - \nu(q_i(x,y))$, which we
name $\delta_i$, where $\nu$ is the valuation of type C defined by
the infinite sequence of quadratic transformations of the scheme
${\rm Spec }\;{\mathcal O}_{{\mathbb P}^2,p}$ centered at the closed
points in the set ${\mathcal C}:=\{r_j\}_{j\in {\mathbb N}}$, where
$r_j:=p_j$ whenever $j\leq i_0+1$ and, for each $j>i_0+1$, $r_j$ is
the unique point of the blowing-up centered at $r_{j-1}$ which is
proximate to $r_{i_0}=p_{i_0}$.

The concrete relation that we give in the definition can be deduced
as follows. The integer $a_t$  is the number of points in $\mathcal
D$ which are proximate to $r_{i_0}$; then $e_{i_0+1}(C)=1$ and
$e_{i_0}(C)=a_t$, where $e_{j}(C)$ denotes the multiplicity at $r_j$
of the strict transform of the germ $(C,p)$ at $r_j$. The remaining
multiplicities can be obtained using recurrent relations, the
so-called {\it proximity equalities}, that is, $e_{j}(C)=\sum_k
e_{k} (C)$ for all $j\geq 0$, where $k$ runs over the set of indexes
such that $r_k$ is proximate to $r_j$. The values $\nu(m_j)$ satisfy
the same relations, but with different initial values:
$\nu(m_j)=(0,1)$ for all $j>i_0$, $\nu(m_{i_0})=(1,0)$ and
$\nu(m_j)=\sum_k \nu(m_k)$ for $j<i_0$, with $k$ running over the
same set as before (recall the last paragraph of Section
\ref{tres}). From these facts it is easy to deduce that, if $i_1$
denotes the maximum index such that $r_{i_1}$ is a free point but
$r_{i_1+1}$ is not so, there exist natural numbers $A',B',A,B$ such
that $e_{i_1}(C)=A'a_t+B'$, $e_{i_1-1}(C)=Aa_t+B$,
$\nu(m_{i_1})=(A',B')$ and $\nu(m_{i_1-1})=(A,B)$. The first two
values are the integers ${y}_{t-3}$ and ${y}_{t-2}$ (${y}_{t-2}$ and
${y}_{t-1}$ in the case $g=1$) obtained from the recurrence relation
given in Definition \ref{buena} by taking the initial values
${y}_{-1}=1$ and ${y}_{0}=a_t$ (see \cite[III.4]{cam}); so, the last
two values can be obtained as in the definition.  Taking coordinates
$(u,v)$ around the point at infinity $p$ (as in Definition
\ref{aprox}) one has that $q_0(x,y)=v^{-1}$ and, for $i \geq1$,
\begin{equation}\label{infinito}
q_i(x,y) = v^{-\delta_0^*/d_i^*} \overline{q}_i(u,v),
\end{equation}
$\overline{q}_i(u,v)$ being the local expression around $p$ of the
curve given by $q_i(x,y)$ and $d_i^* = \gcd(\delta_0^*,
\delta_1^*, \ldots, \delta_{i-1}^*)$.  Therefore,
\begin{equation}\label{infinito2}
\delta_0=\nu(v) \mbox{ and }
\delta_i=-\nu(q_i(x,y))=\frac{\delta_0^*}{d_i^*}\delta_0-\nu(\overline{q}_i(u,v))
\mbox{ for } i\geq 1.
\end{equation}
 From the above information and Formula (\ref{noet}) applied to the germs
 defined by $\overline{q}_i(u,v)$ it can be deduced the existence of natural
 numbers $\{b_i\}_{i=0}^g$ such that
 $\nu_{C,p}(\overline{q}_g(u,v))=b_g(Aa_t+B)+A'a_t+B'$,
 $\nu(\overline{q}_g(u,v))=b_g(A,B)+(A',B')$ and, for each $i\leq g-1$,
 $\nu_{C,p}(\overline{q}_i(u,v))=b_i(Aa_t+B)$  and $\nu(\overline{q}_i(u,v))=b_i(A,B)$.
 As a consequence, the relations between $\delta_i$ and $\delta_i^*$ given in the
 Definition \ref{buena} hold by taking  (\ref{infinito2}) into account.
\vspace{2mm}

We note that if $\Delta^* = \{
\delta_0^*,\delta_1^*,\ldots,\delta_g^* \}$ and $\Delta' = \{
\delta'_0,\delta'_1,\ldots,\delta'_g \}$ are $\delta$-sequences in
$\mathbb{N}_{>0}$ such that $\delta^*_i /
\delta^*_1=\delta'_i/\delta'_1$ ($0\leq i<g$) and the continued
fraction expansions $\langle a_1;a_2, \ldots,a_t\rangle$ of the
respective quotients $m_{g-1}/e_{g-1}$ (see (\ref{z})) have the
same length and only differ in the last value $a_t$,  then the
$\delta$-sequences in $\mathbb{Z}^2$ defined by $\Delta^*$ and
$\Delta'$ will be equal. The reason is that the above numbers
$b_i$ depend only on the proximity relations among the points
$r_j$ with $j\leq i_0+1$ (notice that the index $i_0$ is clearly
the same for $\Delta^*$ and $\Delta'$) and these coincide for both
$\delta$-sequences in $\mathbb{N}_{>0}$. \vspace{2mm}

To guarantee that our valuation is a plane valuation at infinity,
we must prove that if $\Delta^* = \{\delta_0^*, \delta_1^*,
\ldots, \delta_g^* \} $ is a $\delta$-sequence in
$\mathbb{N}_{>0}$, with attached continued fraction $\langle
a_1;a_2, \ldots,a_t \rangle$ (see Definition \ref{buena}), then
there exists another $\delta$-sequence in $\mathbb{N}_{>0}$,
$\Delta' = \{ \delta'_0, \delta'_1, \ldots, \delta'_g \}$ such
that $ \delta_i^* / \delta_1^* = \delta'_i / \delta'_1$ ($0 \leq i
< g$) and the associated continued fraction $\langle a'_1;a'_2,
\ldots,a'_t \rangle$ satisfies $a_i=a'_i$, $1 \leq i < t$ and $a_t
< a'_t$ (note that the first condition implies that $C_{\Delta^*}$
and $C_{\Delta'}$ can be chosen with the same set of approximates;
this fact and the  second condition guarantee that the divisorial
valuations associated to the resolution of the singularity of
$C_{\Delta^*}$ at infinity are also associated to the one of
$C_{\Delta'}$). Let us see how to get $\Delta'$.

Set $\{
\check{\delta}_0,\check{\delta}_1,\ldots,\check{\delta}_{g-1} \}$
the $\delta$-sequence in $ \mathbb{N}_{>0}$ corresponding to the
normalized sequence $\{ \delta_i^*/\delta_1^* \}_{i=0}^{g-1}$ and
define $\delta'_i = z \check{\delta}_i$, $0 \leq i \leq g-1$,  for
some $z \in \mathbb{N}_{>0}$ to be defined later. Consider the
sequence of convergents \cite[Chapter 7]{ni-zu} $\{h_n,k_n\}$ for
the continued fraction $\langle a_1; a_2, \ldots, a_t \rangle$. For
$a \in \mathbb{N}_{>0}$, one gets
\[
 \langle a_1; a_2,
\ldots, a_{t-1},a \rangle = \frac{a h_{t-1} + h_{t-2}}{a k_{t-1} +
k_{t-2}}.
\]
Then
\[
 \frac{a h_{t-1} + h_{t-2}}{a k_{t-1} +
k_{t-2}} = \frac{m'_{g-1}}{e'_{g-1}} = \frac{m'_{g-1}}{z},
\]
$m'_{g-1}$ and $e'_{g-1}$ being the values defined in page
\pageref{z0}  for our tentative  $\Delta'$ and
\[
\delta'_g = n'_{g-1} z \check{\delta}_{g-1} - m'_{g-1}=
\]
\[
= n'_{g-1} (a k_{t-1} + k_{t-2}) \check{\delta}_{g-1} -(a h_{t-1}
+ h_{t-2}) =
\]
\[
= k_{t-1} (n'_{g-1} \check{\delta}_{g-1} -
\frac{h_{t-1}}{k_{t-1}}) a + k_{t-2} (n'_{g-1}
\check{\delta}_{g-1} - \frac{h_{t-2}}{k_{t-2}}).
\]
So, we only need to pick $z$ large enough to that $a'_t =a > a_t$
and $\delta_g \in \langle \check{\delta}_0, \check{\delta}_1,
\ldots, \check{\delta}_{g-1} \rangle$, which is possible since gcd$(\check{\delta}_0, \check{\delta}_1, \ldots,
\check{\delta}_{g-1}) = 1$ and the semigroup that these elements
generate $\langle \check{\delta}_0, \check{\delta}_1, \ldots,
\check{\delta}_{g-1}
\rangle$ has a conductor.\\

$\delta$-sequences in $\mathbb{Q}$ (respectively, in $\mathbb{R}$)
are related with valuations at infinity of type E (respectively, D).
To see it, it suffices to recall the definition and, in the first
case, to consider the valuation given by the sequence of infinitely
near points associated with the curves given by the polynomials
$q_i$ mentioned in page \pageref {z0}  for $i$ large enough. In the
second case (assuming $g\geq 2$), one must consider one of the
normalized $\delta$-sequences in $\mathbb{N}_{>0}$
$\overline{\Delta}_j$ of the definition and its corresponding
$\delta$-sequence in $\mathbb{N}_{>0}$, say
$\Delta_j=\{\delta'_0,\delta'_1,\ldots,\delta'_g\}$; the related
valuation of type D is determined by the sequence of infinitely near
points associated with the resolution of the singularity at infinity
of the curve defined by the approximate $q_g(x,y)$ of
$C_{\Delta_j}$, and the infinitely many satellite points whose
corresponding block $B_g$  is determined by the continued fraction
expansion  of the non-rational number
$\frac{n_{g-1}\delta_{g-1}-\delta_g}{e_{g-1}/\delta'_1}$ (see pages
\pageref{z0} and \pageref{z}). Observe that the numbers $n_{g-1}$
and $e_{g-1}/\delta'_1$ do not depend on the chosen
$\delta$-sequence $\Delta_j$; in fact, the above mentioned
non-rational number is the limit of the sequence of quotients
$\frac{m_{g-1}}{e_{g-1}}$ associated with the $\delta$-sequences
$\Delta_j$. With respect to the case $g=2$ notice that, for
whichever non-rational number $\tau>1$, from the convergents of the
continued fraction given by $\tau/(\tau-1)$ we can derive normalized
$\delta$-sequences $\Delta^j = \{\delta_0^j, 1\}$ approaching the
$\delta$-sequence in $\mathbb{R}$, $\{\tau,1\}$. The described
valuations associated to $\delta$-sequences in $\mathbb{Q}$ are
clearly valuations at infinity and this is also for the real case by
similar reasonings to those given for $\delta$-sequences in
$\mathbb{Z}^2$.

As a consequence of the last paragraphs, we have proved the
following

\begin{prop}
\label{R} Let $\Delta = \{ \delta_i\}_{i=0}^r, r \leq \infty$, be a
$\delta$-sequence in $\mathbb{Z}^2$ (respectively, in $\mathbb{Q}$)
(respectively, in $\mathbb{R}$). Then,  there exists a plane
valuation at infinity $\nu_\Delta$ of type C (respectively, E)
(respectively, D) and a family $\{q_i(x,y)\}_{i=0}^r$ of
approximates for $\nu_{\Delta}$ such that $- \nu_\Delta (q_i(x,y)) =
\delta_i$ for all index $i$.
\end{prop}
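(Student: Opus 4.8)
The plan is to assemble the proof from the computations carried out in the paragraphs immediately preceding the statement, treating the three cases ($\mathbb{Z}^2$, $\mathbb{Q}$, $\mathbb{R}$) in turn, with the $\mathbb{Z}^2$ case carrying the main weight. For $\Delta=\{\delta_i\}_{i=0}^g\subset\mathbb{Z}^2$ I would fix an associated $\delta$-sequence $\Delta^{*}=\{\delta_i^{*}\}_{i=0}^{g}$ in $\mathbb{N}_{>0}$ as in Definition \ref{buena}, form the curve $C:=C_{\Delta^{*}}$ with only one place at infinity together with its family of approximates $\{q_i(x,y)\}_{i=0}^{g}$ supplied by Proposition \ref{siete} (so that $-\nu_{C,p}(q_i)=\delta_i^{*}$), and then take $\nu_\Delta$ to be the type C valuation attached, via Theorem \ref{b}, to the sequence of quadratic transformations of $\mathrm{Spec}\,\mathcal{O}_{\mathbb{P}^2,p}$ centered at the points $r_j$ described before the statement: $r_j=p_j$ for $j\leq i_0+1$ and, for $j>i_0+1$, $r_j$ the unique point on the blowing-up of $r_{j-1}$ that is proximate to $r_{i_0}$, where $i_0$ is the last index with more than one point proximate to it. That $\nu_\Delta$ is a well-defined valuation of type C is immediate from the definition of that type.

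Next I would verify $-\nu_\Delta(q_i(x,y))=\delta_i$. This is exactly the chain of identities $(\ref{infinito})$--$(\ref{infinito2})$ combined with the Noether formula $(\ref{noet})$ applied to the germs defined by the local expressions $\overline{q}_i(u,v)$, together with the proximity equalities for the multiplicities $e_j(C)$ and the analogous relations for the values $\nu_\Delta(m_j)$ recalled at the end of Section \ref{tres}; these force the existence of natural numbers $b_i$ and of $A,B,A',B'$ with the properties used in Definition \ref{buena}, yielding precisely the stated relations between $\delta_i$ and $\delta_i^{*}$. The small-$g$ cases ($g=1,2$) are handled by running the same bookkeeping with the explicit initial data $\underline{y}_{-1}=(0,1)$, $\underline{y}_{0}=(1,0)$ of the recurrence $\underline{y}_i=a_{t-i}\underline{y}_{i-1}+\underline{y}_{i-2}$ recorded in that definition.

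The crux, which I expect to be the main obstacle, is showing that $\nu_\Delta$ is a plane valuation \emph{at infinity} and that $\{q_i(x,y)\}_{i=0}^{g}$ is a family of approximates for it in the sense of Definition \ref{aprva}. For the first point I would use the construction given just above the statement: for any prescribed index $i_0'$, replace $\Delta^{*}$ by a $\delta$-sequence $\Delta'$ in $\mathbb{N}_{>0}$ with the same normalized initial segment $\{\delta_i^{*}/\delta_1^{*}\}_{i=0}^{g-1}$ but strictly larger last continued-fraction entry $a'_t>a_t$ — obtained by scaling the $\delta$-sequence $\check{\Delta}$ attached to that initial segment by a sufficiently large $z\in\mathbb{N}_{>0}$, using $\gcd(\check{\delta}_0,\dots,\check{\delta}_{g-1})=1$ and the fact that $\langle\check{\delta}_0,\dots,\check{\delta}_{g-1}\rangle$ has a conductor. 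Then $C_{\Delta'}$ has only one place at infinity, shares its set of approximates with $C$, and the divisorial valuations occurring in the resolution of $C_{\Delta'}$ at infinity include the valuation $\nu_i$ of the sequence attached to $\nu_\Delta$ for all $i$ up to an index that grows with $a'_t$; hence, cofinally, $\nu_i$ is a plane divisorial valuation at infinity (a general element being provided by the germ at $p$ of $C_{\Delta'}$), which is exactly the condition of Definition \ref{preaprva}. Minimality of $\{q_i\}_{i=0}^g$ then follows because each such $C_{\Delta'}$ admits $\{q_i\}_{i=0}^{g}$ (or the appropriate subset) as its family of approximates while no proper subfamily works for all of them.

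Finally, the $\mathbb{Q}$ and $\mathbb{R}$ cases go along the same lines. In the $\mathbb{Q}$ case I would take $\nu_\Delta$ to be the valuation determined by the infinite sequence of infinitely near points associated with the curves $C_{\Delta_j}$ (equivalently, with the polynomials $q_i$ for $i$ large): since the truncations $\Delta_j$ alternate free and satellite blocks indefinitely, $\nu_\Delta$ is of type E, and it is visibly at infinity since each truncation already yields a curve with one place at infinity. In the $\mathbb{R}$ case I would take the sequence $\{\overline{\Delta}_j\}$ with $\delta_g^{j}\to\delta_g$ from the definition, the corresponding $\delta$-sequences $\Delta_j=\{\delta'_0,\dots,\delta'_g\}$ in $\mathbb{N}_{>0}$, and let $\nu_\Delta$ be the type D valuation obtained from the resolution at infinity of the curve defined by the approximate $q_g(x,y)$ of $C_{\Delta_j}$ followed by the infinitely many satellite blowing-ups whose block $B_g$ is governed by the continued fraction expansion of the non-rational number $\frac{n_{g-1}\delta_{g-1}-\delta_g}{e_{g-1}/\delta'_1}$ (which is independent of $j$, being the limit of the quotients $m_{g-1}/e_{g-1}$ of the $\Delta_j$). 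The verifications that $-\nu_\Delta(q_i)=\delta_i$ and that $\nu_\Delta$ is at infinity are then obtained by passing to the limit in the $\mathbb{Z}^2$-type argument, the degenerate cases $g=1$ (for $\mathbb{R}$) and small $g$ (for $\mathbb{Z}^2$) being dispatched by the explicit convergents recorded in Definition \ref{buena}. Everything outside the "at infinity" verification is the bookkeeping of multiplicities via $(\ref{noet})$ and the proximity equalities.
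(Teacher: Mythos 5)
Your proposal is correct and follows essentially the same route as the paper, whose own ``proof'' is precisely the development in the paragraphs preceding the statement: the type C valuation built from the centers $r_j$ with the value computation via (\ref{infinito})--(\ref{infinito2}), the Noether formula and the proximity equalities, the auxiliary sequence $\Delta'$ with larger last partial quotient $a'_t$ (obtained from the conductor of $\langle\check\delta_0,\dots,\check\delta_{g-1}\rangle$) to certify the ``at infinity'' property, and the analogous limit constructions for the $\mathbb{Q}$ (type E) and $\mathbb{R}$ (type D) cases. Your brief remark on minimality of the family of approximates is a small addition the paper leaves implicit, but it does not change the approach.
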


Notice that we have chosen a concrete valuation $\nu_\Delta$, but
this election needs not be unique since we could take another
suitable families of approximates. Moreover, even when we have fixed
certain type of approximates, we have infinitely many possibilities
for our $q_i$, $1<i<r$, according the parameters $t_i$ we set in
(\ref{misq}).

\noindent {\it Remark}. In this paper, generically, we shall name
$\delta$-sequence to any of the $\delta$-sequences (in
$\mathbb{Z}^2$, $\mathbb{Q}$ or $\mathbb{R}$) above defined. Due to
their definition, we can apply the Euclidian algorithm as we
described at the beginning of this subsection to  the values $m_j$,
$e_j$, $0 \leq j < g$, ($g=\infty$ in the case in $\mathbb{Q}$)
defined as in page \pageref{z0} and computed from the
$\delta$-sequences (newly with the Euclidian algorithm). Case 3)
happens for $\delta$-sequences in $\mathbb{Z}^2$, since we get
$\gamma_{i-1}=(1,0)$ and $\gamma_{i}=(0,1)$, Case 2) holds for
$\delta$-sequences in $\mathbb{R}$ and, for $\delta$-sequences in
$\mathbb{Q}$, we have an indefinite iteration of Case 1) .  Along
this paper, we shall use freely the notation $m_j$, $e_j$, $d_j$, $0
\leq j \leq g-1$, adapted to the corresponding $\delta$-sequence.

Table \ref{resumen} summarizes briefly the relation among the
different types of $\delta$-sequences and their corresponding
families of approximates and valuations.

\begin{table}[h,t,b]
\caption[]{Summarizing Table} \label{resumen}
\begin{tabular}{|c|c|c|c|}
  \hline
Type of & Case in &
Family of approximates & Type of \\
$\delta$-sequence & Euclidian &  (given by
$\Delta$ using  & valuation \\
$\Delta$ & algorithm &  the Euclidian algorithm) &(at infinity)\\
  \hline
   $\mathbb{Z}^{2}$& 3 &  Finite &
   C\\  \hline
  $\mathbb{R}$& 2 & Finite &
   D \\  \hline
$\mathbb{Q}$& 1 & Infinite
  & E \\
  \hline
\end{tabular}
\end{table}

\noindent {\it Remark}. Notice that we have no definition of
$\delta$-sequence related to type B valuations. This concept could
be defined following the same line of this paper; however it would
not be useful for us, since we would get $\delta$-sequences $\Delta
= \{ \delta_0,\delta_1,\ldots,\delta_{g+1} \}$ where $\delta_i \in
\{0\} \oplus \mathbb{N}_{>0}$ ($0 \leq i \leq g$) and $\delta_{g+1}
=(-1,a)$, $a \in \mathbb{N}_{>0}$, and then $S_\Delta$ would not be
well-ordered (for the lexicographical ordering).

Next, we state the main result of the paper.

\begin{theo}
\label{main} Let $\Delta= \{ \delta_i\}_{i=0}^r$, $r \leq \infty$,
be a $\delta$-sequence. Set $k[x,y]$ the polynomial ring in two
indeterminates over an arbitrary field $k$. Then,
\begin{itemize}
    \item[(a)] There exists a weight function $w_\Delta: k[x,y]
    \longrightarrow S_\Delta \cup \{- \infty\}$.
    \item[(b)] The map $-w_\Delta: k(x,y)
    \rightarrow G(S_\Delta) \cup \{\infty\}$, $G(S_\Delta)$ being the group generated by $S_\Delta$, is a
    plane valuation at infinity.
    \item[(c)] If $\{q^{,}_i\}_{i=0}^{r}$ denotes a family of approximates for
    the valuation $-w_\Delta$ then, for any $\alpha \in S_\Delta$,
    the vector spaces
    \[
O_\alpha :=  \left \{p \in k[x,y] \mid w_\Delta (p) \leq \alpha
\right \}
    \]
    are spanned by the set of polynomials $\prod_{i=0}^m
    q_{i}^{,\gamma_i}$ such that $0 \leq m < r+1$, $\beta:=\sum_{i=0}^m
    \gamma_i \delta_i$ runs over the unique expression of
    the values $\beta \in S_\Delta$ satisfying $\beta \leq \alpha$,
    $ \gamma_0 \geq 0$,  $0 \leq
    \gamma_i < n_i$, whenever $1 \leq i < m$ and $\gamma_m \geq 0$ if $m=r$ and
    otherwise $0 \leq \gamma_m < n_m$.
\end{itemize}
\end{theo}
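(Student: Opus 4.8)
The plan is to obtain $w_\Delta$ as minus a valuation and to read off $O_\alpha$ from a basis of ``standard monomials'' built from the approximates. By Proposition \ref{R} there is a plane valuation at infinity $\nu:=\nu_\Delta$ of $K:=k(x,y)$, centered at $R:=\mathcal{O}_{\mathbb{P}^2,p}$, of type $\mathrm{C}$, $\mathrm{D}$ or $\mathrm{E}$ according as $\Delta\subset\mathbb{Z}^2$, $\mathbb{R}$ or $\mathbb{Q}$, together with a family of approximates $\{q_i(x,y)\}_{i=0}^r$ satisfying $-\nu(q_i)=\delta_i$; here $q_0=x$, $q_1=y$, and by $(\ref{misq})$ (equivalently Definition \ref{aprox}) every $q_i$ lies in $k[x,y]$. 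I would set $w_\Delta:=-\nu$ on $k[x,y]\setminus\{0\}$, $w_\Delta(0)=-\infty$, and extend $-w_\Delta$ to $K$ as $\nu$ itself; since nothing below depends on the chosen family of approximates (cf.\ the remark after Proposition \ref{R}), the family appearing in (c) may be taken to be these $q_i$. Everything then reduces to the following claim, which is the analogue for $\nu$ of the classical description of polynomials via approximate roots for curves with one place at infinity.

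\emph{Claim.} Call a monomial $\mu=\prod_{i=0}^{m}q_i^{\gamma_i}$ \emph{standard} if $0\le m<r+1$, $\gamma_0\ge 0$, $0\le\gamma_i<n_i$ for $1\le i<m$, and $\gamma_m\ge 0$ when $m=r$ while $0\le\gamma_m<n_m$ otherwise (the $n_i$ being those attached to $\Delta$ by the adapted Euclidean algorithm, so each relevant $n_i\ge 2$). Then: (i) the standard monomials form a $k$-basis of $k[x,y]$; (ii) $-\nu(\mu)=\sum_i\gamma_i\delta_i$ and distinct standard monomials have distinct $\nu$-values; (iii) every $\beta\in S_\Delta$ equals $\sum_i\gamma_i\delta_i$ for exactly one standard exponent vector $(\gamma_i)$. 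For (iii) I would start from any representation $\beta=\sum_i c_i\delta_i$, $c_i\in\mathbb{N}$, and reduce it to standard form by repeatedly replacing the block $n_i\delta_i$, for the largest non-standard index $i$, by $\sum_{j<i}a_{ij}\delta_j$ using the relation of $(\ref{misq})$; termination and uniqueness are the standard features of a semigroup spanned by a $\delta$-sequence (a generalized telescopic semigroup, see Section \ref{cinco}) and rest on $S_\Delta$ being well ordered. Part (ii) then follows, since $-\nu(\mu)=\sum_i\gamma_i(-\nu(q_i))=\sum_i\gamma_i\delta_i$ as $\nu$ is a valuation, and two standard monomials of equal value have equal exponents by (iii). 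For (i): $q_0=x$ and $q_1=y$ are standard and the $k$-span $V$ of the standard monomials contains $1$; moreover $xV\subseteq V$ (one just bumps $\gamma_0$) and $yV\subseteq V$, using $q_i^{n_i}=q_{i+1}+\prod_{j<i}q_j^{a_{ij}}$ to carry over a single unit of $q_1$-exponent, a process that terminates because $n_i\ge 2$ and a carry is passed up through at most finitely many indices; hence $V=k[x,y]$, and linear independence is immediate from (ii) (a nontrivial relation would force $\max_{c_\mu\ne 0}(-\nu(\mu))=-\infty$).

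Granting the Claim, part (c) is immediate: a standard monomial $\mu$ lies in $O_\alpha$ iff $\sum_i\gamma_i\delta_i=-\nu(\mu)\le\alpha$, and for $p=\sum_\mu c_\mu\mu\in k[x,y]\setminus\{0\}$ written in the standard basis, (ii) gives $w_\Delta(p)=-\nu(p)=\max\{-\nu(\mu):c_\mu\ne 0\}$ with no cancellation at the top; so $O_\alpha$ is spanned exactly by the standard monomials $\prod_i q_i^{\gamma_i}$ with $\sum_i\gamma_i\delta_i\le\alpha$, which is the asserted description, and letting $\alpha$ grow one also obtains $w_\Delta(k[x,y]\setminus\{0\})=S_\Delta$ (the inclusion $\supseteq$ being clear from $w_\Delta(q_i)=\delta_i$). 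For (a): by definition of a $\delta$-sequence $S_\Delta$ is a well-ordered sub-semigroup of $\mathbb{Z}^2$, $\mathbb{Q}$ or $\mathbb{R}$, hence commutative, cancellative, torsion free, with zero and with admissible ordering; as $\nu$ is not of type $\mathrm{A}$, the canonical map $R/m\to R_\nu/m_\nu$ is an isomorphism (discussion following Proposition \ref{a}); thus Proposition \ref{a}, applied with $\mathfrak{R}=R$, $\mathfrak{K}=K$, $\mathfrak{A}=k[x,y]$ and this $\nu$, yields that $w_\Delta$ is a weight function onto $S_\Delta\cup\{-\infty\}$. For (b): $-w_\Delta=\nu$ is a plane valuation at infinity by Proposition \ref{R}, and its value group is $G(S_\Delta)$ since every element of $K^*$ is a quotient of polynomials and $w_\Delta(k[x,y]\setminus\{0\})=S_\Delta$ generates $G(S_\Delta)$.

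The main obstacle will be the Claim, and inside it the termination of the two reductions — of semigroup elements in (iii) and of polynomials in (i). This has to be done uniformly for the $\mathbb{Z}^2$-, $\mathbb{Q}$- and $\mathbb{R}$-cases; in the $\mathbb{Z}^2$ and $\mathbb{R}$ cases the ambient order is $\mathbb{Z}^2$-lexicographic or that of $\mathbb{R}$, so termination leans on the well-ordering of $S_\Delta$ built into the definition of $\delta$-sequence rather than on a visible integer descent, and the $r=\infty$ case of (i) requires care (each polynomial involves only finitely many $q_i$). Once the Claim is in place, parts (a), (b) and (c) are bookkeeping on top of Propositions \ref{R} and \ref{a}.
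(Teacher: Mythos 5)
Your argument is correct, but it reaches the crucial point --- that $w_\Delta(k[x,y]\setminus\{0\})=S_\Delta$ --- by a different route than the paper. The paper also sets $w_\Delta=-\nu_\Delta$ with $\nu_\Delta$ from Proposition \ref{R}, but then works locally at the point at infinity: it writes $f(x,y)=v^{-\deg f}\overline f(u,v)$, factors $\overline f$ into analytically irreducible germs, and invokes Spivakovsky's theorem that $\{v\}\cup\{\overline q_i(u,v)\}$ is a minimal generating sequence for $\nu_\Delta$ to produce a monomial $u^{s_0}v^{s_1}\prod\overline q_i^{\,s_i}$ with the same value and controlled degree, which converts back into a polynomial monomial in $x,y,q_i$; part (c) is then deduced from Proposition \ref{teles} together with the general dimension count for weight functions. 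You instead prove a global statement: that the ``standard monomials'' in the approximates form a $k$-basis of $k[x,y]$ with pairwise distinct values realizing each element of $S_\Delta$ exactly once, from which the image statement, the no-cancellation formula $w_\Delta(p)=\max_\mu(-\nu(\mu))$, and part (c) all drop out at once; (a) and (b) are then finished exactly as in the paper, via Proposition \ref{a} and Proposition \ref{R}. What the paper's route buys is brevity, at the price of leaning on the valuation-theoretic machinery of generating sequences; what yours buys is a self-contained, constructive description of $k[x,y]$ and of the filtration $\{O_\alpha\}$, at the price of having to prove your Claim, whose uniqueness part (iii) is in any case Proposition \ref{teles} (which the paper also uses). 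Your spanning argument (i), which you flag as the main obstacle, does go through essentially as sketched, and it is worth noting why your terse termination claim is legitimate: when a carry chain is started by bumping the $q_1$-exponent of a standard monomial, each application of $q_i^{n_i}=q_{i+1}+\prod_{j<i}q_j^{a_{ij}}$ resets the exponents at indices $1,\dots,i$ to zero in the propagating term, so the ``sideways'' term picks up exponents $a_{ij}<n_j$ and is already standard --- no downward cascade of carries can occur --- while the upward carry passes through at most finitely many indices (bounded by the top index occurring in the monomial, or absorbed at $i=r$ in the finite case where $\gamma_r$ is unbounded); filling in this observation, plus the one-line valuation fact that a sum of terms with pairwise distinct values has the extremal value, makes your proof complete.
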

\begin{proof}
Let $\nu = \nu_\Delta$ be the plane valuation defined in Proposition
\ref{R}.  Set $w_\Delta := - \nu$. Since $\nu$ is of type C, D or E,
to prove (a) and (b), it suffices to show that $ S_\Delta \cup \{
\infty \}$ is the image of $k[x,y]$ by $w_\Delta$. To do it, recall
that the polynomials $q_i$ of Proposition \ref{R} described before
it satisfy the equalities given in (\ref{infinito}). Consider the
same notation and pick $f(x,y) \in k[x,y]$; then $f(x,y) =
v^{-\deg(f)} \overline{f}(u,v)$ and, since the set $\{ v \} \cup
\{\overline{q}_i(u,v)\}_{i=1}^r$ is a minimal generating sequence
for the valuation $\nu$ (see \cite{spi}), factoring
$\overline{f}(u,v)$ into product of analytically irreducible
elements, one has that there exists a polynomial of the form $m(u,v)
= u^{s_0} v^{s_1} \prod_{i=2}^j \overline{q}_i^{s_i} (u,v)$, $s_i
\geq 0$, $0\leq i \leq j$, $j < r+1$ such that
$\nu(\overline{f}(u,v)) = \nu(m(u,v))$ and $s_0 + s_1 + \sum_{i=2}^j
s_i \deg ( \overline{q}_i) \leq \deg (f)$. Therefore $v^{-\deg(f)}
\overline{f}(u,v)$ has the same valuation as $x^{\deg(f) -(s_0 + s_1
+ \sum_{i=2}^j s_i \deg ( \overline{q}_i))} y^{s_0} \prod_{i=2}^j
{q}_i^{s_i}(x,y)$ and so $-\nu(f) \in S_\Delta$.

Finally, (c) is clear from the forthcoming Proposition \ref{teles}
and the fact that $w_\Delta$ is a weight function and the images by
$w_\Delta$ of the considered products $\prod_{i=0}^m
q_i^{,\gamma_i}$ give exactly once each value $\beta \in
    S_\Delta$ such that $\beta \leq \alpha$.
\end{proof}

A {\it $\delta$-sequence} $\Delta$ will also be said of {\it type}
C, D or E whenever the corresponding valuation $-w_\Delta$ is of
that type. We end this subsection by proving that if one considers
suitable value semigroups, then any weight function of the
polynomial ring $k[x,y]$ comes from a valuation at infinity.

\begin{prop}
\label{peso} Let $w: k[x,y] \rightarrow S$ be a weight function on
a semigroup $S$ such that $S=S_\Delta$ for some $\delta$-sequence
$\Delta$. Then, there exists a plane valuation at infinity $\nu:
k(x,y) \rightarrow G$ such that $-\nu$ and $w$ coincide on the
ring $k[x,y]$.
\end{prop}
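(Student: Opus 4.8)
The plan is to reverse-engineer the $\delta$-sequence structure from an abstract weight function $w$ with $w(k[x,y]) = S = S_\Delta$, and then invoke Proposition \ref{R} to produce the valuation at infinity. The key observation is that the data needed to apply Proposition \ref{R} — namely a $\delta$-sequence and a family of approximates realizing it — is essentially already encoded in $w$ once we know the value semigroup. So the first step is to recall that, by the definition of weight function and standard order-domain theory (as in \cite{g-p}), for each $\alpha \in S$ the space $O_\alpha = \{p \in k[x,y] : w(p) \leq \alpha\}$ is finite-dimensional, and the graded pieces are one-dimensional; hence $w$ is determined by a well-chosen $k$-basis of $k[x,y]$ consisting of elements $p_\alpha$ with $w(p_\alpha) = \alpha$, one for each $\alpha \in S$, compatible with multiplication ($w(p_\alpha p_\beta) = \alpha + \beta$).

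\textbf{Main steps.} First I would fix generators: write $\Delta = \{\delta_i\}_{i=0}^r$ and choose $Q_i \in k[x,y]$ with $w(Q_i) = \delta_i$ for each $i$; since $w(x)$ and $w(y)$ generate the relevant initial part of $S$ (indeed $\delta_0, \delta_1$ are the first two entries of the $\delta$-sequence and $S_\Delta = \langle \Delta \rangle$), one can arrange $Q_0 = x$, $Q_1 = y$ after a linear change of coordinates, matching the normalization in Definition \ref{aprva} and the setup of Theorem \ref{main}(c). Second, using condition (4) in the definition of weight function together with the defining relations of a $\delta$-sequence (the Abhyankar--Moh-type identities $n_i \delta_i = \sum_{j<i} a_{ij}\delta_j$), I would show the $Q_i$ can be normalized to satisfy exactly the recursion \eqref{misq}, i.e.\ $Q_{i+1} = Q_i^{n_i} - \prod_{j<i} Q_j^{a_{ij}}$ up to the units $t_i$; this identifies $\{Q_i\}$ with a family of approximates for the curve $C_\Delta$ (via Proposition \ref{siete}). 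Third, I would apply Proposition \ref{R}: it furnishes a plane valuation at infinity $\nu = \nu_\Delta$ of the appropriate type (C, D, or E according to whether $\Delta \subset \mathbb{Z}^2$, $\mathbb{R}$, or $\mathbb{Q}$), together with its own family of approximates $\{q_i'\}$ satisfying $-\nu(q_i') = \delta_i$. Finally I would verify $-\nu = w$ on all of $k[x,y]$: both are weight functions onto $S_\Delta \cup \{-\infty\}$, both send $Q_i \mapsto \delta_i$ (equivalently $q_i' \mapsto \delta_i$), and by the argument in the proof of Theorem \ref{main}(a) every $f \in k[x,y]$ has the same value under either map as some monomial $x^a \prod_{i \geq 2} Q_i^{s_i}$ in the generators; since the two weight functions agree on generators and agree on how they decompose arbitrary $f$, they coincide.

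\textbf{The main obstacle} I expect is the rigidity step — showing that the abstractly chosen $Q_i$ realizing $w(Q_i) = \delta_i$ must (after scaling) obey the precise binomial recursion \eqref{misq}, rather than merely \emph{some} family of approximates living in the completion $\widehat{\mathcal{O}}_{\mathbb{P}^2,p}$. The subtlety is that $w$ a priori only knows the \emph{values}, and one must bootstrap from the structure of $S_\Delta$ (the uniqueness of the expressions $n_i\delta_i = \sum a_{ij}\delta_j$ with $0 \leq a_{ij} < n_j$) plus repeated use of weight-function axiom (4) to pin down the polynomial relations; in positive characteristic one cannot appeal to Abhyankar--Moh directly, so the combinatorial argument via Proposition \ref{siete} (which is characteristic-free) is essential. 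A secondary point to handle carefully is the infinite case $r = \infty$ ($\delta$-sequences in $\mathbb{Q}$), where one passes through all finite truncations and uses that any given $f$ lies in a finitely generated subalgebra, so only finitely many $Q_i$ are involved — reducing it to the finite case.
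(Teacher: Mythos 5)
Your strategy runs in the wrong direction and rests on a rigidity step that is both unproved and, as stated, partly false. Proposition \ref{R} produces one particular valuation $\nu_\Delta$, normalized so that the \emph{standard} approximates $q_0=x$, $q_1=y$ and the polynomials of (\ref{misq}) take the values $\delta_i$; but an abstract weight function $w$ with image $S_\Delta$ need not be that one, since composing the standard $w_\Delta$ with any $k$-algebra automorphism of $k[x,y]$ yields another weight function with the same value semigroup. In particular your claim that ``one can arrange $Q_0=x$, $Q_1=y$ after a linear change of coordinates'' fails: precompose $w_\Delta$ with the automorphism $x\mapsto x$, $y\mapsto y+x^2$; then $w(y)=2\delta_0\neq\delta_1$ and no linear substitution repairs this. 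Hence the identification of your $Q_i$ with the approximates of $\nu_\Delta$ --- the step you yourself flag as ``the main obstacle'' --- is exactly where the entire proof would have to take place, and you offer no argument for it; your final verification even conflates the two families (``both send $Q_i\mapsto\delta_i$ (equivalently $q_i'\mapsto\delta_i$)''), which is precisely the unestablished equivalence. Without it, $-\nu_\Delta$ and $w$ simply need not agree on $k[x,y]$; salvaging your route would require an Abhyankar--Moh type rigidity theorem (every weight function onto $S_\Delta$ equals $w_\Delta$ up to a polynomial automorphism) together with a transport of $\nu_\Delta$ by that automorphism, far more than the proposition asks.

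The paper's proof avoids constructing a second object altogether: it takes $\nu:=-w$, extended in the natural way to the quotient field $k(x,y)$, notes that by its value group it is a plane valuation of type C, D or E centered at $k[u,v]_{(u,v)}$ (so no matching problem ever arises), and then only has to verify that this $\nu$ is \emph{at infinity} in the sense of Definition \ref{preaprva}. For that it picks polynomials $q_i$ with $w(q_i)=\delta_i$ and exhibits curves with only one place at infinity furnishing general elements of infinitely many divisorial valuations in the blow-up sequence of $\nu$: the curves $q_i=0$ themselves in type E, and in types C and D the curves $q_g^{n_g}-\prod_{i=0}^{g-1}q_i^{a_{gi}}=0$ attached to $\delta$-sequences in $\mathbb{N}_{>0}$ approaching $\Delta$ (as described around Definition \ref{buena}). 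Note that your closing argument (two weight functions agreeing on a family whose normalized products exhaust $S_\Delta$ must coincide) is fine in itself; what is missing is any valuation known in advance to take the values $\delta_i$ on \emph{your} $Q_i$, and that is exactly what the paper obtains for free by using $-w$ itself.
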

\begin{proof}
Consider coordinates in $\mathbb{P}^2$ and local coordinates $(x,y)$
and $(u,v)$ as we gave before Definition \ref{aprox}. It holds that
the natural extension $-w= \nu : k(x,y) \rightarrow G(S)$, where
$G(S)$ is the group generated by $S$, is a plane valuation of type
C, D or E centered at  the local ring $k[u,v]_{(u,v)}$. Pick
polynomials $q_i(x,y)$ such that  $w(q_i)= \delta_i$, $0 \leq i <
r+1$. When $\nu$ is of type E, the curves in $\mathbb{P}^2$ given by
$q_i$ guarantee that $\nu$ is as desired. Finally, in cases C and D,
if we consider $\delta$-sequences in $\mathbb{N}_{>0}$, $\Delta_j =
\{ \delta_0,\delta_1,\ldots,\delta_g \}$, approaching $\Delta$ (as
we have described in Definition \ref{buena} for type D valuations
and after that definition  for type C ones), then the curves in
$\mathbb{P}^2$ given by the polynomials $q_g^{n_g} -
\prod_{i=0}^{g-1} q_i^{a_{gi}}$, where $a_{gi}$ are the unique
coefficients of the expression $n_g \delta_g = \sum_{i=0}^{g-1}
a_{gi} \delta_i$, $0 \leq a_{gi} < n_i$ $(1 \leq i \leq g-1)$, prove
that $\nu$ is a plane valuation at infinity.
\end{proof}

\subsection{Construction of weight functions attached to valuations at infinity}
\label{construction} To end this section, we provide algorithms to
get $\delta$-sequences of  every described type. Recall that the
ordering of the semigroup  $S_\Delta$ is given by the
lexicographical one in $\mathbb{Z}^2$ for type C weight functions
and by the natural ordering in $\mathbb{R}$ (respectively,
$\mathbb{Q}$) for type D (respectively, type E) weight functions.
To get a basis of the vector space $O_{\alpha}$, $\alpha \in
S_\Delta$, we only need to compute approximates $\{q_i\}_{i=1}^r$
for $-w_\Delta$ as we have described to prove Proposition \ref{R}
(see (\ref{misq})) and fix a unique polynomial $q_\beta :=
\prod_{i=1}^m q_i^{\gamma_i}$ with $\beta= \sum_{i=1}^m \gamma_i
\delta_i$ for each $\beta \in S_\Delta$ such that $\beta \leq
\alpha$. Then, the set $\{q_\beta\}_{\beta \leq \alpha}$ will be a
basis as desired.

\subsubsection{$\delta$-sequences in
$\mathbb{N}_{>0}$.}\label{ene}

The concept of $\delta$-sequence in $\mathbb{N}_{>0}$ is an
important tool in this paper because it supports our general
definition of $\delta$-sequence. In \cite{pa21}, it can be found a
complete list of $\delta$-sequences in $\mathbb{N}_{>0}$ for curves
over $\mathbb{C}$ with only one place at infinity and genus $\leq
30$ (notice that, in that list, one must interchange $\delta_0$ and
$\delta_1$ in order to be coherent with our notation). In the same
paper, it is announced the existence of an algorithm for obtaining
any $\delta$-sequence in $\mathbb{N}_{>0}$. Next, we present an
algorithm whose input is a $\delta$-sequence $\Delta
=\{\delta_0,\delta_1,\ldots,\delta_g \}$ in $\mathbb{N}_{>0}$ and
whose output is another one $\Delta'
=\{\delta'_0,\delta'_1,\ldots,\delta'_{g+1} \}$ such that
$\delta_i/\delta'_i=\delta_j/\delta'_j$ for $0\leq i,j\leq g$. From
Formula (\ref{z}) and Proposition \ref{siete}, it is clear that this
implies that the dual graph of the germ of curve $C_{\Delta'}$ that
Proposition \ref{siete} associates with $ \Delta'$ consists of the
one of $C_{\Delta}$ plus a new subgraph $\Gamma_{g+1}$. The
algorithm has the following steps:

\begin{itemize}
\item[1. ] Choose a natural number $z\geq 2$. \item[2. ] If there
exists $\delta'_{g+1}$ such that $\delta'_{g+1} \in \langle
\delta_0,\delta_1,\ldots,\delta_g \rangle$, $\delta'_{g+1} < z^2
\delta_g$ and $\gcd(z,\delta'_{g+1})=1$, then return
$\Delta'=\{z\delta_0,z\delta_1,\ldots, z\delta_g,
\delta'_{g+1}\}$. Else, \item[3. ] increase the value of $z$ and
go to Step 2.
\end{itemize}
This procedure ends because the semigroup $\langle
\delta_0,\delta_1,\ldots,\delta_g \rangle$ has a conductor.


\subsubsection{Type C weight functions.}
 Definition \ref{buena} shows  how to obtain  $\delta$-sequences in
 $\mathbb{Z}^{2}$ (and so type C weight functions). Next, we
 describe a particularly simple case of this type.
\vspace{1mm}

\noindent {\it Type C functions coming from AMS curves.} This type
of weight functions is simple because the easiness for obtaining
$\delta$-sequences attached to AMS curves. Indeed, fix a finite set
$\{n_i\}_{i=1}^g$ of positive integers, $n_i \geq 2$, $n_g>2$. We
know that the set $\Delta^* := \{\delta^*_i :=n_{i+1} n_{i+2} \cdots
n_g\}_{i=0}^{g-1} \cup \{1\}$ is a $\delta$-sequence in
$\mathbb{N}_{>0}$ associated with an AMS curve. Therefore, setting
\[
\begin{array}{ccc}
    \delta_0 & = & (n_{1} n_{2} \cdots n_{g-1} , n_{1} n_{2} \cdots n_{g-1})  \\
    \delta_1& = & (n_{2} n_{3} \cdots n_{g-1} , n_{2} n_{3} \cdots n_{g-1})  \\
    \vdots  & \vdots & \vdots \\
    \delta_{g-1} & = & (1, 1)  \\
    \delta_g & = & (0,1) \\
  \end{array}
\]
we get a $\delta$-sequence that provides a type C weight function.
The corresponding infinite dual graph $\Gamma$  coincides with the
one of the  AMS curve replacing, as usual, the last set of
proximate points  by infinitely many ones. Notice that the
$\delta_i$'s, $0 \leq i \leq g-1$, are in the line $ x -y =0$.

\subsubsection{Type D weight functions.}

\label{433} We are going to explicitly construct
$\delta$-sequences suitable for weight functions of type D. We
start with a $\delta$-sequence in $\mathbb{N}_{>0}$,
$\Delta=\{\delta_0,\ldots,\delta_{g-1}\}$,  $g \geq 2$, and a
positive non-rational number $a \in \mathbb{R}$ such that
\begin{equation} \label{eq}
a < n_{g-1} \delta_{g-1},
\end{equation}
$n_{g-1}$ being the usual number for the $\delta$-sequence
$\Delta$. Set $\langle a_1; a_2, a_3,\ldots, a_{j}, \ldots
\rangle$ the infinite continued fraction expansion given by $a$
and, for any index $j \geq 2$, let $m_{g-1}^{j}, e_{g-1}^{j}$ be
the relatively prime positive integers such that
$$\frac{m_{g-1}^{j}}{e_{g-1}^{j}}=\langle a_1; a_2, a_3,\ldots, a_{j}\rangle.$$

Next, we shall define a family of finite sequences $\left\{\Delta_j=
\{\delta_0^{j},\ldots\delta_{g-1}^{j},\delta_g^{j}\} \right
\}_{j=2}^{\infty}$ which, under suitable conditions, will give the
normalized $\delta$-sequences in $\mathbb{N}_{>0}$ providing our
$\delta$-sequence in $\mathbb{R}$. To do it, define
$\bar{\delta}_i^{j}:=e_{g-1}^{j} \delta_i$ ($i=0,1,\ldots,g-1$),
$\bar{\delta}_g^{j}:=n_{g-1}e_{g-1}^{j}\delta_{g-1}-m_{g-1}^{j}$ and
$\delta_i^{j}:=\bar{\delta}_i^{j}/\gcd(\bar{\delta}_i^{j}\mid
i=0,1,\ldots,g)$, $0 \leq i \leq g$.

Notice that, for $j \geq 4$, the equality
\begin{equation}
\label{lados}
\bar{\delta}_{g}^{j}=\bar{\delta}_{g}^{j-1}a_{j}+\bar{\delta}_{g}^{j-2}
\end{equation}
holds.  Indeed, $m_{g-1}^{j}=a_{j}m_{g-1}^{j-1}+m_{g-1}^{j-2}$ and
$e_{g-1}^{j}=a_{j}e_{g-1}^{j-1}+e_{g-1}^{j-2}$ \cite{ni-zu}.
Therefore
$$\bar{\delta}_g^{j}=n_{g-1}\left( a_{j}e_{g-1}^{j-1}+e_{g-1}^{j-2}\right)\delta_{g-1}-\left(a_{j}
m_{g-1}^{j-1}+m_{g-1}^{j-2}\right)=$$
$$\left(n_{g-1}e_{g-1}^{j-1}\delta_{g-1}-m_{g-1}^{j-1}\right)a_{j}+ \left( n_{g-1}e_{g-1}^{j-2}
\delta_{g-1}-m_{g-1}^{j-2}\right)=
\bar{\delta}_{g}^{j-1}a_{j}+\bar{\delta}_{g}^{j-2}, $$ as stated.
Soon, we shall give conditions in order to $\Delta_j$ be a
$\delta$-sequence. Firstly assume that there exists  a positive
integer $s_0$ such that $n_{g-1}\delta_{g-1}> \langle a_1; a_2,
a_3,\ldots, a_{j}\rangle $ for $j\in \{s_0, s_0+1\}$. Then, the
chain of equalities
$$\frac{\bar{\delta}_g^{j}}{e_{g-1}^{j}}=n_{g-1}\delta_{g-1}-\frac{m_{g-1}^{j}}{e_{g-1}^{j}}=
n_{g-1}\delta_{g-1}-\langle a_1; a_2, a_3,\ldots, a_{j} \rangle$$
and (\ref{lados}) prove that $\bar{\delta}_g^{j} >0$ for $j \geq
s_0$. We notice that the definition of $\bar{\delta}_g^{j}$ and
(\ref{eq}) prove that $s_0$ always exists.

Finally, we are ready to prove that $\Delta_j$ is a
$\delta$-sequence in $\mathbb{N}_{>0}$  for $j \geq s_1$ whenever
$s_1 \geq s_0$ is  an index such that the values
$\bar{\delta}_g^{s_1}$ and $\bar{\delta}_g^{s_1+1}$ belong to the
semigroup  in $\mathbb{N}_{>0}$ generated by
$\delta_0,\delta_1,\ldots,\delta_{g-1}$ (notice that $s_1$ exists
because this semigroup has a conductor). In fact, we only need to
show that $n_g^{j} \delta_g^{j}$ belongs to the semigroup
generated by $\delta_0^{j},\delta_1^{j},\ldots,\delta_{g-1}^{j}$
for all $j\geq s_1$, where $n_g^{j}$ denotes
$\gcd(\delta_0^{j},\ldots,\delta_{g-1}^{j})$. But this is
equivalent to the fact that $\bar{\delta}_g^{j}$ belongs to the
semigroup generated by $\delta_0,\delta_1,\ldots,\delta_{g-1}$.
Now, the result follows inductively from the hypothesis and
Equality  (\ref{lados}).

As a consequence, fixed a $\delta$-sequence in $\mathbb{N}_{>0}$, we
can determine non-rational real numbers $a$ such that all the above
sequences $\Delta_j$  are $\delta$-sequences in $\mathbb{N}_{>0}$,
giving rise after normalizing to a $\delta$-sequence in
$\mathbb{R}$. In fact, we must consider a positive integer $a_1
<n_{g-1}\delta_{g-1}-1$ and pick $a_2, a_3 \in \mathbb{N}_{>0}$ such
that both $e_{g-1}^{2}(n_{g-1}\delta_{g-1}- \langle a_1; a_2
\rangle)$ and $e_{g-1}^{3}(n_{g-1}\delta_{g-1}- \langle a_1; a_2,
a_3 \rangle)$ are in the semigroup of $ \mathbb{N}_{>0}$ generated
by $\delta_0,\delta_1,\ldots,\delta_{g-1}$. Then, it suffices to
take
$$a = a_1+\frac{1}{a_2+\frac{1}{a_3+\frac{1}{b}}},$$ where $b$ is
any non-rational positive real number. Thus, with the above
election the corresponding $\delta$-sequence in $\mathbb{R}$ is $
\left \{ \frac{\delta_0}{\delta_1},1,
\frac{\delta_2}{\delta_1},\ldots,\frac{\delta_{g-1}}{\delta_1},
\frac{1}{\delta_1}(n_{g-1}\delta_{g-1}-a) \right \}$.\\

We conclude this section by showing how to construct type E weight
functions and giving some examples.

\subsubsection{Type E weight functions.}
\label{434} A procedure for obtaining this type of weight
functions is, starting with a given $\delta$-sequence in
$\mathbb{N}_{>0}$, to reproduce indefinitely the procedure given
in Subsection \ref{ene} but with the following extra condition in
Step 2: $z\delta_g<\delta'_{g+1}$. This condition assures that,
normalizing the successively obtained $\delta$-sequences, one gets
an increasing $\delta$-sequence in $\mathbb{Q}$ (with the natural
ordering) and, then, it generates a well-ordered semigroup. For
instance, considering only values of $z$ which are relatively
prime with $\delta_g$ and taking $\delta'_{g+1}=(z+1)\delta_g$ in
Step 2, the above condition will be satisfied.






\subsubsection{Examples.}

\label{435} We begin this subsection by noting that the examples
of weight functions over the polynomial ring in two indeterminates
given by O'Sullivan in \cite{sul} are  of the types described in
this paper. Let us start  by Example 3.3 in \cite{sul}. There, up
to minor changes of notation, it is considered a pair $(r,s)$ of
relatively prime positive integers, $r < s$, and also the
corresponding integers $p$ and $q$ such that $pr-qs=1$, $0 < p <
s$. Setting $u'=y^s / x^r$ and $v'=x^q/y^p$, the author considers
the valuation given by $\nu(u')=(0,1)$ and $\nu(v')=(1,0)$ whose
valuation ring is $k[u',v',v'/u', v'/u'^2, \ldots]_{(u')}$. The
weight  function defined by $\nu$ has as order domain $k[x,y]$.
Expressed in our language, this is simply a type C weight function
given by the $\delta$-sequence $\{ \delta_0:=(s,p), \delta_1:=(r,
q) \}$. If we consider the continued fraction expansion of
$s/(s-r)= \langle a_1; a_2, \ldots, a_r \rangle$, it holds that
when we  apply the Euclidean algorithm to $\delta_0$ and $\delta_0
- \delta_1$ as described at the beginning of Section \ref{42}, we
get $r-1$ rows reproducing the case of $s$ and $s-r$ and,
afterwards, we cannot continue. So, the dual graph will have
infinitely many satellite points over the divisor obtained after
reproducing the blowing-up procedure given by the pair $(s, s-r)$.

Also, it is clear that Example 3.4 in \cite{sul} corresponds to a
type D weight function $w$ such that $w(y)=1$ and $w(x) = \tau$,
$\tau >1$ being an non-rational real number.

Finally, Example 5.2 in \cite{sul} corresponds to the simplest
type E weight function. The HNE of the associated valuation in a
regular system of parameters $\{u,v\}$ of the local ring
$\mathcal{O}_{\mathbb{P}^2,p}$ is

\begin{equation}
\begin{array}{lccl}\label{hne}
&v & = & u^{2} +u^{2}w_{1} \\
&u & = & w_1^{2}+ w_1^{2}w_{2} \\
&w_1 & = & w_2^{2}+ w_2^{2}w_{3} \\
&\vdots & \nonumber & \vdots \\
\end{array}
\end{equation}

\vspace{2mm} \noindent repeating the above structure indefinitely.
The first elements of a  family of $\delta$-sequences in
$\mathbb{N}_{>0}$ providing the desired $\delta$-sequence would be
$\{3,1\}, \{6,2,5\}, \{12,4,10,19\}, \ldots$ . \vspace{3mm}

Let us see other  examples. $\Delta =\{(10,10),(3,3),(24,25)\}$ is
a $\delta$-sequence in $\mathbb{Z}^2$ which comes from the
$\delta$-sequence in $\mathbb{N}_{>0}$ $\{40,12,97\}$. The
continued fraction $\langle a_1;a_2,a_3\rangle$ is $\langle
5;1,3\rangle$, since $m_1=23$ and $e_1=4$, so it holds that
$(A,B)=(1,1)$ and $(A',B')=(1,0)$.

Starting with the $\delta$-sequence in $\mathbb{N}_{>0}$
$\{11,9\}$, the development in Subsection \ref{433} shows that
$\Delta =\{11/9,1, (19- \frac{2 \sqrt{3}+1}{3 \sqrt{3} +1})/9\}$
is a $\delta$-sequence in $\mathbb{R}$. Indeed, the last element
in the $\delta$-sequence can be obtained by setting, with the
notation in that subsection, $a_1=80, a_2=1, a_3=2$ and $b =
\sqrt{3}$. $\Delta = \{12/8, 1,1/3, 3/4,$ $ 13/24, (11- \frac{
\sqrt{3}+1}{2 \sqrt{3} +2})/24\}$ is  another example of
$\delta$-sequence in $\mathbb{R}$. This is attached to the
$\delta$-sequence in $\mathbb{N}_{>0}$, $\{36,24,8,18,13\}$. Here,
$a_1=15, a_2=2, a_3=1$ and also, for convenience, $b = \sqrt{3}$.

\section{Evaluation codes given by $\delta$-sequences.}
\label{cinco}
\subsection{Generalized telescopic semigroups}\label{51}

A semigroup $S \subseteq \mathbb{N}_{\geq 0}$ is called  to be {\it
telescopic} if it is spanned by a finite sequence of positive
integers $\{\alpha_1, \alpha_2, \ldots, \alpha_r \}$ (named {\it
telescopic sequence}) such that
$ \gcd(\alpha_1, \alpha_2, \ldots, \alpha_r) =1$ and
$\alpha_i / \gcd(\alpha_1, \alpha_2, \ldots, \alpha_i)$ belongs to
the semigroup generated by the set $\{\alpha_1 / \gcd(\alpha_1,
\alpha_2, \ldots, \alpha_{i-1}), \ldots, \alpha_{i-1} /
\gcd(\alpha_1, \alpha_2, \ldots, \alpha_{i-1})\}$. An important
property of these semigroups is that each element $\alpha \in S$ can
be uniquely expressed in the form $\alpha = \sum_{i=1}^r a_i
\alpha_i$, provided that $a_1 \geq 0$ and $0 \leq a_i <
\gcd(\alpha_1, \alpha_2, \ldots, \alpha_{i-1})/ \gcd(\alpha_1,
\alpha_2, \ldots, \alpha_{i})$ ($2 \leq i \leq r$). This fact has
importance when one desires to bound the minimum distance of the
 dual codes of the evaluation ones given by classical weight functions \cite{h-l-p}.

The following definition is a natural enlargement in such a way
that   the mentioned property is preserved.

\begin{de}
\label{getel} {\rm A {\it generalized telescopic semigroup}  in
$\mathbb{Z}^2$ (respectively, $\mathbb{Q}$) (respectively,
$\mathbb{R}$) is a cancellative well-ordered commutative with zero
semigroup spanned by a set $A=\{\alpha_i\}_{i=1}^r$ such that:
\begin{description}
\item[($\mathbb{Z}^2$)] $r < \infty$, $A \subset \mathbb{Z}^2$,
which is lexicographically ordered,  the points in
$\{\alpha_i\}_{i=1}^{r-1}$ belong to the same line $L$ which passes
through $(0,0)$, $\alpha_r \not \in L$ and there exists a telescopic
sequence, $\{\beta_i\}_{i=1}^{r-1}$, such that the morphism of ordered
semigroups $\varrho: \langle \alpha_1, \ldots \alpha_{r-1} \rangle
\longrightarrow \langle \beta_1, \ldots \beta_{r-1} \rangle$ given
by $\varrho(\alpha_i) = \beta_i$ is an isomorphism.
\item[($\mathbb{Q}$)] (Respectively, $r = \infty$, $A \subset
\mathbb{Q}$ (natural order) and  for each $i>1$ there exists a
telescopic sequence, $\{\beta_{j}\}_{j=1}^{i}$, such that the
morphism of ordered semigroups $\rho : \langle \alpha_1, \ldots
\alpha_{i-1} \rangle \longrightarrow \langle \beta_{1}, \ldots
\beta_{i-1} \rangle$, $\rho(\alpha_j) = \beta_{j}$,  are
isomorphisms.)
\item[($\mathbb{R})$] (Respectively,  $r < \infty$, $A \subset
\mathbb{R}$ (natural order), $\alpha_i \in \mathbb{Q}, 0\leq i \leq
r-1$, $\alpha_r \in \mathbb{R} \setminus \mathbb{Q}$ and there
exists a telescopic sequence, $\{\beta_i\}_{i=1}^{r-1}$, such that
the  morphism of ordered semigroups $\rho$ defined as in ($\mathbb{Z}^2$) is an
isomorphism.)
\end{description}
}
\end{de}

Generically, a generalized telescopic semigroup in $\mathbb{Z}^2$,
$\mathbb{Q}$ or $\mathbb{R}$ will be named simply a generalized
telescopic semigroup.

\begin{prop}  \label{teles}
Let $S$ be a generalized telescopic semigroup spanned by $\{
\alpha_i\}_{i=1}^r$, $r \leq \infty$, as above. Then any element
$\alpha \in S$ can be written in a unique way of the form
\begin{equation}
\label{enteles} \alpha = \sum_{i=1}^s a_i \alpha_i,
\end{equation}
where $s<r+1$, $a_1, a_s \geq 0$  and $0 \leq a_i < \gcd(\alpha_1,
\alpha_2, \ldots, \alpha_{i-1})/ \gcd(\alpha_1, \alpha_2, \ldots,
\alpha_{i})$ ($2 \leq i < s$) and we have put
$\gcd(\alpha_1,\alpha_2, \ldots, \alpha_j) :=
\gcd(\beta_1,\beta_2, \ldots, \beta_j)$ ($1 \leq j < r$),
$\beta_j$ being as in Definition \ref{getel}.
\end{prop}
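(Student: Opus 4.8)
The plan is to reduce the statement, in each of the three ambient groups, to the classical unique-representation property of ordinary telescopic semigroups recalled just before Definition \ref{getel}, transporting it through the isomorphism $\varrho$ or $\rho$ of that definition after first separating off the ``last'' generator.

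First I would treat the case $\mathbb{Z}^2$. Let $L$ be the line through the origin containing $\alpha_1,\ldots,\alpha_{r-1}$; then every element of $\langle\alpha_1,\ldots,\alpha_{r-1}\rangle$ lies on $L$, whereas $\alpha_r\notin L$. Given $\alpha\in S$, write $\alpha=c_r\alpha_r+\lambda$ with $c_r\geq 0$ and $\lambda\in\langle\alpha_1,\ldots,\alpha_{r-1}\rangle$. The coefficient $c_r$ is an invariant of $\alpha$: from $c_r\alpha_r+\lambda=c_r'\alpha_r+\lambda'$ one gets $(c_r-c_r')\alpha_r=\lambda'-\lambda\in L$, and $\alpha_r\notin L$ forces $c_r=c_r'$. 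Hence $\alpha-c_r\alpha_r$ is a well-defined element of $\langle\alpha_1,\ldots,\alpha_{r-1}\rangle$, and I would push it through $\varrho$ into the telescopic semigroup $\langle\beta_1,\ldots,\beta_{r-1}\rangle$, take its unique reduced expression $\sum_{i=1}^{r-1}a_i\beta_i$ with $a_1\geq 0$ and $0\leq a_i<\gcd(\beta_1,\ldots,\beta_{i-1})/\gcd(\beta_1,\ldots,\beta_i)$, pull it back by $\varrho^{-1}$, and set $a_r:=c_r$. Taking $s$ to be the largest index carrying a nonzero coefficient and using the convention $\gcd(\alpha_1,\ldots,\alpha_j):=\gcd(\beta_1,\ldots,\beta_j)$ of the statement, this gives an expression of the required shape, the bound on the middle coefficients being exactly the telescopic one (and possibly stronger than stated when $s<r$). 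Uniqueness then follows by combining the invariance of $c_r$ with the uniqueness of the reduced telescopic expression of $\varrho(\alpha-c_r\alpha_r)$, since any admissible expression for $\alpha$ must have $\alpha_r$-coefficient equal to $c_r$ and must reduce, via $\varrho$, to that telescopic form.

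The case $\mathbb{R}$ is handled identically, with ``lies on the line $L$'' replaced by ``is rational'': since $\alpha_1,\ldots,\alpha_{r-1}\in\mathbb{Q}$ and $\alpha_r\notin\mathbb{Q}$, an equality $c_r\alpha_r+\lambda=c_r'\alpha_r+\lambda'$ with $\lambda,\lambda'\in\mathbb{Q}$ gives $(c_r-c_r')\alpha_r\in\mathbb{Q}$, hence $c_r=c_r'$, and $\alpha-c_r\alpha_r$ is treated through $\rho$ as before. For the case $\mathbb{Q}$ (so $r=\infty$) there is no last generator to remove, but every $\alpha\in S$ has finite support: I would let $s$ be its top index, invoke Definition \ref{getel} with an index large enough that the isomorphism $\rho$ covers $\langle\alpha_1,\ldots,\alpha_s\rangle$, write $\rho(\alpha)$ in reduced form inside the corresponding telescopic semigroup, and pull it back.

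I expect the main obstacle to be exactly the bookkeeping around the last generator: establishing the invariance of $c_r$ (the geometric, resp.\ arithmetic, dichotomy used above) in the $\mathbb{Z}^2$ and $\mathbb{R}$ cases, and, in the $\mathbb{Q}$ case, checking that the reduced telescopic form of a finite-support element stays supported on the first $s$ generators and that the numbers $\gcd(\alpha_1,\ldots,\alpha_j)$ do not depend on the admissible choice of the auxiliary telescopic sequence, so that the bounds in the statement are well-posed. Once these points are in place, the rest of the proof is a direct transcription through $\varrho$ or $\rho$ of the classical unique-representation property for telescopic semigroups.
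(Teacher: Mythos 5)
Your proposal is correct and follows essentially the same route as the paper: isolate the coefficient of the last generator via the line-membership (case $\mathbb{Z}^2$) or rationality (case $\mathbb{R}$) dichotomy, and reduce the remaining part, through the isomorphism with a telescopic semigroup (the paper phrases this as dividing by the gcd so that the sub-semigroup "behaves like" a telescopic one), to the classical unique-representation property, treating the case $\mathbb{Q}$ by passing to a finite initial segment of the generators.
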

\begin{proof}
First let us see the case when $r$ is finite. Due to the nature of
the generators of $S$, when we set $\alpha = \sum_{i=1}^r a_i
\alpha_i$, it holds that the value $a_r$ must be unique, since
either it is  the unique nonnegative integer such that $\alpha - a_r
\alpha_r$ is on the line $L$ when $S \subset \mathbb{Z}^2$ or it is
the unique nonnegative integer such that $\alpha - a_r \alpha_r$ is
a rational value whenever $S \subset \mathbb{R}$. Now the semigroup
spanned by $\{ \alpha_i\}_{i=1}^{r-1}$ behaves as the one generated
by the elements, except the last one, of a telescopic semigroup.
Thus, if we set $a=\gcd(\alpha_1, \alpha_2, \ldots, \alpha_{r-1})$,
it happens that $\langle \alpha_1/a, \alpha_2/a, \ldots,
\alpha_{r-1}/a \rangle$ is like a telescopic semigroup and the fact
$$ (\alpha - a_r \alpha_r)/a \in \langle \alpha_1/a, \alpha_2/a,
\ldots, \alpha_{r-1}/a \rangle$$ provides the desired property.

When $r$ is infinite, the proof runs similarly. Indeed, assume that
$\alpha = \sum_{i=1}^s b_i \alpha_i$. If $a=\gcd(\alpha_1, \alpha_2,
\ldots, \alpha_{s})$, then $\langle \alpha_1/a, \alpha_2/a, \ldots,
\alpha_{s}/a \rangle$ behaves as a telescopic semigroup and,
therefore $\alpha/a = \sum_{i=1}^s a_i (\alpha_i/a)$, where the set
$\{a_i\}_{i=1}^s$ satisfies the desired properties and the result is
proved.
\end{proof}

\noindent {\it Remark}. A $\delta$-sequence in $\mathbb{N}_{>0}$
generates a telescopic semigroup. We have just enlarged this last
concept in such a way that any $\delta$-sequence $\Delta$ generates
a generalized telescopic semigroup and so $S_\Delta$ satisfies the
property given in the above proposition.

\subsection{Evaluation codes}
To construct error-correcting codes from a $\delta$-sequence, we
must consider the weight function $w_\Delta$ described in Theorem
\ref{main} and an epimorphism of $k$-algebras $ev:k[x,y]
\rightarrow k^n$, for some fixed positive integer $n$, which
usually will consist of evaluating $n$ previously picked points
$p_i$ $(1 \leq i \leq n)$ in $k^2$. The family of defined
evaluation codes will be $\{E_\alpha := ev(O_\alpha)\}_{\alpha \in
S_{\Delta}}$, $O_\alpha$ as in Theorem \ref{main}. The dual spaces
of the vector spaces $E_\alpha$ will be denoted by $C_\alpha$ and
they are the elements in the family of dual codes of the
evaluation ones.  Fixed $\alpha$ it suffices to compute the family
$\{ ev(\prod_{i=0}^m q_i^{\gamma_i})\}$, where $\prod_{i=0}^m
q_i^{\gamma_i}$ runs over the set of polynomials described at the
beginning of Subsection \ref{construction} for obtaining a
generator set of $E_\alpha$. Using linear algebra, it is easy to
compute bases for $E_\alpha$ and $C_\alpha$. Depending on $n$,
there is a positive integer $\Omega_n$ such that the vector spaces
$C_\alpha$ vanish if and only if $\alpha \geq \Omega_n$.

On the other hand, and as in the case of evaluation codes coming
from order functions on $\mathbb{N}_{\geq 0}$, denote by $\beta$
an  element in $S_\Delta$ and set
\[
\omega_\beta := \mbox{card}  \{ (\beta_1, \beta_2) \in
S^2_{\Delta} \; \mid \;\beta_1 + \beta_2 = \beta \}.
\]
The values
\[
d(\alpha) := \min \{ \omega_\beta | \alpha < \beta \in S_\Delta \}
\]
and
\[
d_{ev}(\alpha) := \min \{ \omega_\beta | \alpha < \beta \in
S_\Delta \mbox{ and $C_\beta \neq C_{\beta^{+}}$ } \},
\]
where $\beta^+ := \min \{ \gamma \in S_{\Delta} | \gamma > \beta
\}$ are named {\it Feng-Rao  distances} of $C_\alpha$. They
satisfy $d(C_\alpha) \geq d_{ev} (\alpha) \geq d(\alpha)$,
$d(C_\alpha)$ being the minimum distance of the code $C_\alpha$.
Above considerations and Proposition \ref{teles} prove the
following
\begin{theo}
\label{fr} Let $\Delta =\{\delta_i\}_{i=0}^r$, $r \leq \infty$, be
a $\delta$-sequence and $\{E_{\alpha}\}_{\alpha \in S_\Delta}$ and
$\{C_{\alpha}\}_{\alpha \in S_\Delta}$ the evaluation and dual
codes given by $\Delta$ and an epimorphism $ev$.
Then, the Feng-Rao distances  satisfy $d(\alpha) \leq \min [
\prod_{i=0}^s (a_i +1) ] -2 \leq d_{ev} (\alpha) $, where the
integer vectors $(a_0,a_1, \ldots, a_s)$ runs over the unique
coefficients of the corresponding expression (\ref{enteles}) for
$\{\delta_i\}_{i=0}^r$ instead of $\{\alpha_i\}_{i=1}^r$ of those
elements in $S_{\Delta}$ which are larger than or equal to $\alpha$
and smaller than $\Omega_n$.
\end{theo}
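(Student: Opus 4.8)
The plan is to read everything off the combinatorics of $S_\Delta$. By the Remark following Proposition \ref{teles}, $S_\Delta$ is a generalized telescopic semigroup, so each $\beta\in S_\Delta$ has the unique normalized expression $\beta=\sum_{i=0}^{s}a_i\delta_i$ provided by that proposition; set $P(\beta):=\prod_{i=0}^{s}(a_i+1)$ and let $T_\alpha:=\{\beta\in S_\Delta\mid\alpha\le\beta<\Omega_n\}$, so that the middle term of the asserted chain is $\min_{\beta\in T_\alpha}P(\beta)-2$. Both inequalities will come from comparing $P(\beta)$ with the numbers $\omega_\beta$ entering the definitions of $d(\alpha)$ and $d_{ev}(\alpha)$; note that, via the weight function $w_\Delta$, a splitting $\beta=\beta'+\beta''$ in $S_\Delta$ is exactly a factorization $q_\beta=q_{\beta'}q_{\beta''}$ at the level of values, which links $\omega_\beta$ to the monomial description of the spaces $O_\alpha$ in Theorem \ref{main}(c).

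The heart of the argument is the lemma $\omega_\beta\ge P(\beta)$ for every $\beta\in S_\Delta$. Given the normalized expression of $\beta$, for each vector of integers $(a_0',\dots,a_s')$ with $0\le a_i'\le a_i$ put $\beta':=\sum_i a_i'\delta_i$ and $\beta'':=\beta-\beta'=\sum_i(a_i-a_i')\delta_i$. Both lie in $S_\Delta$, and since the conditions $a_i'\le a_i$ respect the upper bounds (the $n_i$) that single out the normalized form, $(a_i')_i$ is itself the normalized expression of $\beta'$; hence by the uniqueness part of Proposition \ref{teles} distinct vectors $(a_i')_i$ produce distinct $\beta'$, so the $P(\beta)$ ordered pairs $(\beta',\beta'')$ are pairwise distinct and $\omega_\beta\ge P(\beta)$. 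Exactly two of these pairs, $(0,\beta)$ and $(\beta,0)$, are trivial, which is where the summand $-2$ originates.

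From the lemma the right-hand inequality is essentially free: any $\beta$ computing $d_{ev}(\alpha)$ has $\alpha<\beta$ and $C_\beta\ne C_{\beta^{+}}$, hence $C_\beta\ne 0$, hence $\beta<\Omega_n$ and $\beta\in T_\alpha$, so $d_{ev}(\alpha)=\omega_\beta\ge P(\beta)\ge\min_{T_\alpha}P\ge\min_{T_\alpha}P-2$. For the left-hand inequality I would pick $\widehat\beta\in T_\alpha$ realizing $\min_{T_\alpha}P$ and construct an element $\beta^{\ast}\in S_\Delta$ with $\alpha<\beta^{\ast}$ and $\omega_{\beta^{\ast}}\le P(\widehat\beta)-2$; then $d(\alpha)\le\omega_{\beta^{\ast}}\le P(\widehat\beta)-2=\min_{T_\alpha}P-2$. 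One gets $\beta^{\ast}$ by trimming the contribution of the unbounded extreme generator(s) from the normalized expression of $\widehat\beta$ so that $\beta^{\ast}$ sits just above $\alpha$, and then, using again that the intermediate coefficients are bounded by the $n_i$, checking via uniqueness of normalized expressions that every semigroup decomposition of $\beta^{\ast}$ is one of the at most $P(\widehat\beta)-2$ nontrivial monomial ones.

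The step I expect to be the main obstacle is precisely the construction and estimate for $\beta^{\ast}$. Over a plain one-dimensional telescopic semigroup one has $\omega_\beta=P(\beta)$ in the stable range and this is the classical Feng--Rao (order bound) computation; here, because $S_\Delta$ lies in $\mathbb{Z}^{2}$, $\mathbb{Q}$ or $\mathbb{R}$ and carries unbounded coefficients on the extreme generators, $\omega_\beta$ may strictly exceed $P(\beta)$ owing to ``carries'' in the normalized expression, so one cannot simply take $\beta^{\ast}=\widehat\beta$; the choice must be made carefully and its decomposition count bounded by hand, keeping track in the $\mathbb{Z}^{2}$ and $\mathbb{R}$ cases of the way $\Omega_n$ confines the discussion to an initial segment of $S_\Delta$. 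The generalized-telescopic reduction and the inequality $\omega_\beta\ge P(\beta)$, hence also the right-hand inequality, are by contrast routine, which matches the expectation that ``above considerations and Proposition \ref{teles}'' already supply the bulk of the proof.
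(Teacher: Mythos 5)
Your treatment of the right-hand inequality is correct and is exactly what the paper's one-sentence proof (``Above considerations and Proposition \ref{teles} prove the following'') relies on: by uniqueness of the normalized expression, the $\prod_{i=0}^{s}(a_i+1)$ sub-coefficient splittings of $\beta=\sum_i a_i\delta_i$ are pairwise distinct ordered pairs in $S_\Delta^2$, so $\omega_\beta\ge\prod_{i=0}^{s}(a_i+1)$, and every $\beta$ competing in $d_{ev}(\alpha)$ satisfies $\alpha<\beta<\Omega_n$ because $C_\beta\ne C_{\beta^+}$ forces $C_\beta\ne 0$. This yields $\min\bigl[\prod_{i}(a_i+1)\bigr]-2\le d_{ev}(\alpha)$ (indeed even without the $-2$), which is the part of the statement actually used later as a lower bound for $d(C_\alpha)$.

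The genuine gap is the left-hand inequality $d(\alpha)\le\min\bigl[\prod_i(a_i+1)\bigr]-2$, which you do not prove: you reduce it to constructing some $\beta^{\ast}>\alpha$ with $\omega_{\beta^{\ast}}\le P(\widehat\beta)-2$, where $P(\beta)=\prod_i(a_i+1)$, and no such construction can exist. Every nonzero $\gamma\in S_\Delta$ has $\omega_\gamma\ge 2$ (the two trivial splittings), while $\min_{[\alpha,\Omega_n)}P=2$ as soon as some generator $\delta_i$ lies in that range: already for the paper's own $\Delta_1=\{(5,1),(4,1)\}$ in Examples \ref{c1} with $\alpha=(4,1)$ the middle term is $0$, whereas $d(\alpha)=\omega_{(5,1)}=2$ (taking the range strict does not help, since $(5,1)$ also has product $2$). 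Nor can dropping the $-2$ rescue the plan: for the type C sequence $\{(6,6),(3,3),(1,1),(0,1)\}$ (the AMS construction with $n_1=2$, $n_2=3$) one has $S_\Delta=\{(c,c+e):c,e\ge 0\}$ and $\omega_{(c,c+e)}=(c+1)(e+1)$, so for $\alpha=(5,5)$ and any evaluation set with $\Omega_n>(6,6)$ one gets $d(\alpha)=\omega_{(6,6)}=7$ while $P((6,6))=2$. The surplus decompositions responsible come from relations among the bounded ``line'' generators (precisely the $\langle 2,3\rangle$-type carries you noticed), so trimming the unbounded generator cannot eliminate them, and uniqueness of normalized expressions cannot cap the count of arbitrary semigroup decompositions. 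In short, the left-hand half is not obtainable along your route; note also that the paper itself supplies no argument for it, since its proof is only the pointer to Proposition \ref{teles}, which — as your own lemma shows — gives $\omega_\beta\ge P(\beta)$ and hence only the right-hand bound.
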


Dual evaluation codes given by classical weight functions admit
another lower bound of its minimal distance, called Goppa
 distance. In this case, the attached semigroup
to the weight function $S^*$ is numerical, its elements can be
enumerated according the natural ordering by a map $\chi:S^*
\rightarrow \mathbb{N}_{\geq 0}$ and, if $\alpha^* \in S^*$, the
Goppa bound of the  dual evaluation code associated with
$\alpha^*$ is $d_G (\alpha^*)= \chi(\alpha^*)+1- \xi_{S^*}$,
$\xi_{S^*}$ being the number of gaps of $S^*$. Let us see that a
similar bound can be given in our case. Let $\Delta$ be a
$\delta$-sequence. When $\Delta$ is finite, for our purposes of
ordering $S_\Delta$, we can always write $\Delta = \{\delta_0,
\delta_1, \ldots, \delta_{g-1}\} \cup \{\delta_g\}$, where the
ordered semigroup spanned by $\{\delta_0, \delta_1, \ldots,
\delta_{g-1}\}$ is isomorphic to a telescopic one that we shall
write $ S^* =\langle \delta_0^*, \delta_1^*, \ldots,
\delta_{g-1}^*\rangle$.  When $\Delta$ is not finite, to each
value $i \geq 0$, we shall associate a value $\xi_\Delta (i)$,
which will be the number of gaps of the semigroup generated by the
$\delta$-sequence in $\mathbb{N}_{>0}$, $\{\delta_0^*, \delta_1^*,
\ldots, \delta_{i}^*\}$, associated with the normalized one
$\{\delta_0, \delta_1, \ldots, \delta_{i}\}$.

Now, let $\alpha \in S_\Delta$. If $\Delta$ is finite, set $\alpha
= \sum_{i=0}^g a_i \delta_i$ the unique expression of $\alpha$ as
given in (\ref{enteles}). Set $B$ the least positive integer such
that $B \delta_g > \alpha$ and for $0 \leq j \leq B$, write $$A_j
+j \delta_g := \min \{z + j \delta_g | z +j \delta_g > \alpha
\mbox{\; and $ z \in \langle \delta_0, \delta_1, \ldots,
\delta_{g-1} \rangle$} \}.$$ If $A_j = \sum_{i=0}^{g-1} a^j_i
\delta_i$ according (\ref{enteles}), we denote $A_j^* =
\sum_{i=0}^{g-1} a^j_i \delta_i^*$ and define the {\it Goppa
distance of} $C_\alpha$ as $d_{S_\Delta} (\alpha) = \min \{ d_G
(A_j^*) (j+1) | 0 \leq j \leq B\}$.

For the infinite case, assuming as above $\alpha = \sum_{i=0}^s
a_i \delta_i$, set $\alpha^* = \sum_{i=0}^s a_i \delta_i^*$ and
the {\it Goppa  distance} of $C_\alpha$ will be
\[
d_{S_\Delta} (\alpha) := \chi(\alpha^*) +1 - \xi_{\Delta} (s).
\]

Taking into account properties of the classical Goppa distance and
of the  telescopic semigroups, it holds the following

\begin{prop}
\label{go} Let $\Delta =\{\delta_i\}_{i=0}^r$, $r \leq \infty$, be a
$\delta$-sequence and $\{E_{\alpha}\}_{\alpha \in S_\Delta}$ and
$\{C_{\alpha}\}_{\alpha \in S_\Delta}$ the corresponding evaluation
and dual codes for some fixed evaluation morphism $ev$. Then,
$d(\alpha) \geq d_{S_\Delta} (\alpha)$ and the Goppa distance of
$C_\alpha$ is
\[
d_{S_\Delta} (\alpha) = \left \{ \begin{array}{ll}  \min_{0 \leq j
\leq B} \left\{ [\chi (A_j^*) +1 -\frac{1+ \sum_{i=0}^{g-1} (n_i
-1) \delta^*_i}{2}]\; (j+1) \right\} & \mbox{if $s=g$ is
finite}\\
\chi(\alpha^*) +1 - \left (1+ \sum_{i=0}^{s} (n_i -1) \delta_i^*
\right ) /2 & \mbox{otherwise,}
\end{array}
\right.
\]
where $B$ is as above, $n_0=1$ and the remaining $n_i$ are the
usual ones for $\Delta$.
\end{prop}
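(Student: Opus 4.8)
The plan is to reduce both assertions to classical facts about telescopic numerical semigroups, exploiting the generalized telescopic structure of $S_\Delta$ recorded in Proposition~\ref{teles} and a multiplicativity property of the counting function $\beta\mapsto\omega_\beta$. Two standard ingredients will be used. First, for a numerical semigroup $S^{*}$ with $\xi_{S^{*}}$ gaps,
\[
\omega_{w}\ \ge\ \#\bigl(S^{*}\cap[0,w]\bigr)-\xi_{S^{*}}\qquad\text{for every }w\in S^{*},
\]
since $\#(S^{*}\cap[0,w])-\omega_{w}=\#\{a\in S^{*}\mid a\le w,\ w-a\notin S^{*}\}\le\xi_{S^{*}}$; in particular $\min_{w\ge A^{*}}\omega_{w}\ge\chi(A^{*})+1-\xi_{S^{*}}=d_{G}(A^{*})$ for each $A^{*}\in S^{*}$. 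Second, a telescopic semigroup is symmetric, so its number of gaps equals $\tfrac12\bigl(1+\sum_{i}(n_{i}-1)\delta_{i}^{*}\bigr)$ (see \cite{h-l-p} and the work on telescopic semigroups cited there); this is precisely the constant in the statement, so once the inequality $d(\alpha)\ge d_{S_\Delta}(\alpha)$ is established, substituting it for $\xi_{S^{*}}$ (resp.\ $\xi_{\Delta}(s)$) in the definition of $d_{S_\Delta}(\alpha)$ yields the displayed closed form.

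For the lower bound, consider first the finite case. Write $\Delta=\{\delta_{0},\dots,\delta_{g-1}\}\cup\{\delta_{g}\}$, put $H=\langle\delta_{0},\dots,\delta_{g-1}\rangle$, and let $\varrho\colon H\to S^{*}=\langle\delta_{0}^{*},\dots,\delta_{g-1}^{*}\rangle$ be the order isomorphism of Definition~\ref{getel}. By the uniqueness of the decomposition $\beta=z+j\delta_{g}$ with $z\in H$, $j\ge0$ (Proposition~\ref{teles}), any splitting $\beta=\beta_{1}+\beta_{2}$ in $S_{\Delta}$ forces the $\delta_{g}$-exponents of $\beta_{1},\beta_{2}$ to sum to $j$ and their $H$-parts to sum to $z$; summing over the $j+1$ admissible exponent pairs gives
\[
\omega_{\beta}=(j+1)\,\omega_{\varrho(z)},
\]
the right-hand $\omega$ computed in $S^{*}$. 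Since $\varrho$ preserves order and $H$ is well-ordered, $\{z\in H\mid z+j\delta_{g}>\alpha\}$ is an up-set with least element the $A_{j}$ of the statement, mapping to $\{w\in S^{*}\mid w\ge A_{j}^{*}\}$ with $A_{j}^{*}=\varrho(A_{j})=\sum_{i}a_{i}^{j}\delta_{i}^{*}$.

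Combining these facts, for $0\le j\le B$ one has $\min\{\omega_{\beta}\mid\beta>\alpha,\ b_{g}(\beta)=j\}=(j+1)\min_{w\ge A_{j}^{*}}\omega_{w}\ge(j+1)d_{G}(A_{j}^{*})$ by the first ingredient. For $j>B$, taking $z=0$ shows $\omega_{\beta}\ge j+1\ge B+2$ for every such $\beta$, while the $j=B$ term of the proposed bound is $(B+1)d_{G}(A_{B}^{*})=(B+1)(1-\xi_{S^{*}})\le B+1$; hence indices $j>B$ never attain the minimum of the bound. Therefore $d(\alpha)=\min_{\beta>\alpha}\omega_{\beta}\ge\min_{0\le j\le B}(j+1)d_{G}(A_{j}^{*})=d_{S_{\Delta}}(\alpha)$, and the first branch of the formula follows after inserting the value of $\xi_{S^{*}}$. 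In the infinite case, given $\beta>\alpha$ choose $t\ge s$ with $\beta\in\langle\delta_{0},\dots,\delta_{t}\rangle$; as this is a sub-semigroup of $S_{\Delta}$, $\omega_{\beta}$ is at least the corresponding count inside it, to which the first ingredient applies in its telescopic model, and a short computation with the normalizations reduces the estimate to $\omega_{\beta}\ge\chi(\alpha^{*})+1-\xi_{\Delta}(s)$; hence $d(\alpha)\ge d_{S_{\Delta}}(\alpha)=\chi(\alpha^{*})+1-\xi_{\Delta}(s)$, and the telescopic genus formula for $\{\delta_{0}^{*},\dots,\delta_{s}^{*}\}$ gives the second branch.

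The conceptual engine here --- the multiplicativity $\omega_{\beta}=(j+1)\omega_{\varrho(z)}$ and the gap-counting inequality --- is routine. The two points I expect to require genuine care, and would write out in full, are: in the infinite case, controlling the passage between the telescopic models of $\langle\delta_{0},\dots,\delta_{t}\rangle$ for varying $t$, so that the final estimate is independent of the chosen piece; and, throughout, pinning down the normalization of the generators $\delta_{i}^{*}$ of $S^{*}$ so that the constant delivered by the symmetry of telescopic semigroups matches exactly the one written in the statement. I regard this last, essentially bookkeeping, point --- reconciling the $\delta$-sequence attached to the normalized sequence $\{\delta_{i}/\delta_{1}\}$ with the generators of $S^{*}$ --- as the main obstacle to a fully polished proof.
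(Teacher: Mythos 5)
Your finite-case argument is correct and is essentially an expanded version of what the paper actually does: the paper's proof consists only of the remark that the classical inequality between Feng--Rao and Goppa distances for numerical semigroups applies, together with the gap count of telescopic semigroups from \cite{h-l-p}. Your slicing by the $\delta_g$-coefficient, the multiplicativity $\omega_\beta=(j+1)\,\omega_{\varrho(z)}$, the identification of the slice minima through $A_j^*$, and the elimination of the indices $j>B$ supply exactly the reduction that the paper leaves implicit, and they are sound for the $\mathbb{Z}^2$ and $\mathbb{R}$ cases.

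The infinite ($\mathbb{Q}$) case, however, contains a genuine gap, and it sits precisely at the step you postpone (``controlling the passage between the telescopic models of $\langle\delta_0,\dots,\delta_t\rangle$ for varying $t$''). Restricting a splitting of $\beta$ to $S_t=\langle\delta_0,\dots,\delta_t\rangle$ and applying the gap-counting inequality in its telescopic model only yields $\omega_\beta\ge\chi_t(\beta^{(t)})+1-\xi_\Delta(t)$, and this does \emph{not} dominate $\chi(\alpha^*)+1-\xi_\Delta(s)$, because $\xi_\Delta(t)$ grows much faster with $t$ than the index of $\beta^{(t)}$ does. Take the paper's own type E example $\Delta=\{3,1,5/2,19/4,\dots\}$, with models $\{3,1\}$, $\{6,2,5\}$, $\{12,4,10,19\}$ and $\xi_\Delta(1)=0$, $\xi_\Delta(2)=2$, $\xi_\Delta(3)=13$, and let $\alpha=2=2\delta_1$, so $s=1$ and $\chi(\alpha^*)+1-\xi_\Delta(1)=3$. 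The element $\beta=5/2>\alpha$ is indecomposable in $S_\Delta$, hence $\omega_\beta=2$: no choice of $t$ and no renormalization can turn the true value $2$ into the required bound $3$ (the same happens for $\alpha=4=3+1$, whose bound is $5$, against $\beta=19/4$ with $\omega_\beta=2$). So the infinite branch cannot be obtained by restriction to a finitely generated piece; it needs either a genuinely different idea or a reading of $d_{S_\Delta}$ that, as in the finite case, takes the later generators into account --- the paper's two-line proof is silent on this, but your sketch, as written, would fail there. A smaller point: the constant you substitute for the genus, $\tfrac12\bigl(1+\sum_i(n_i-1)\delta_i^*\bigr)$ with $n_0=1$, is not the telescopic genus of \cite{h-l-p}, which is $\tfrac12\bigl(1-\delta_0^*+\sum_{i\ge1}(n_i-1)\delta_i^*\bigr)$ (test $\langle 6,2,5\rangle$: genus $2$, not $5$); this is presumably a slip in the statement itself, but your appeal to symmetry does not deliver the displayed constant as written, so that matching step should not be presented as exact.
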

\begin{proof}
First part can be proved by taking into account that when one
considers weight functions with values in a semigroup $S^*$ in
$\mathbb{N}_{\geq 0}$ spanned by values whose greatest common
divisor is one and with $g$ gaps, then the Feng-Rao distance
$d(\alpha)$ of the dual evaluation code associated with $\alpha \in
S^*$ is larger than or equal to its Goppa distance. Last part is a
consequence of the computation given in \cite{h-l-p} of the number
of gaps of a telescopic semigroup.
\end{proof}

Finally, we prove that the semigroups $S_\Delta$ corresponding to
$\delta$-sequences $\Delta$ providing weight functions of type C are
simplicial. This one will be a useful property since there exists an
algorithm  given by Ruano in \cite{rua} for computing the Feng-Rao
distance $d(\alpha)$ of codes $C_\alpha$ associated with order
functions with simplicial image semigroup included in
$(\mathbb{N}_{\geq 0})^r$, $r\geq 1$. Recall what this concept
means. Let $S$ be a semigroup included in $(\mathbb{N}_{\geq 0})^r$,
$r\geq 1$. Setting $U$ an indeterminate, the $k$-algebra $k[S]:=
\bigoplus_{\alpha \in S} k U^\alpha$, where the product of
polynomials is induced by $(a U^\alpha)  (b U^\beta) = (ab)
U^{\alpha + \beta}$, $a,b \in k$ and $\alpha, \beta \in S$ is named
the {\it $k$-algebra of the semigroup $S$}. Now, set $C_S$ the cone
in $\mathbb{R}^r$ spanned by $S$. $C_S$ is a strongly convex cone.
We shall say that $S$ is {\it simplicial} whenever the dimension of
the $k$-algebra $k[S]$ coincides with the number of extremal rays of
the cone $C_S$.

\begin{prop}
The semigroup $S_\Delta$ of a $\delta$-sequence  $ \Delta = \{
\delta_i \}_{i=0}^g \subseteq (\mathbb{N}_{\geq 0})^2 $  of type C
is simplicial.
\end{prop}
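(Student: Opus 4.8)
The plan is to compute both of the quantities appearing in the definition of \emph{simplicial} for $S_\Delta$ and check that they coincide, both being equal to $2$. First I would recall the standard fact that for a finitely generated subsemigroup $S$ of $(\mathbb{N}_{\geq 0})^r$ the Krull dimension of the semigroup algebra $k[S]$ equals the rank of the subgroup $G(S)\subseteq\mathbb{Z}^r$ generated by $S$: indeed $k[S]$ is a domain contained in the group algebra $k[G(S)]\cong k[t_1^{\pm1},\ldots,t_m^{\pm1}]$ with $m=\mathrm{rank}\,G(S)$, and $\mathrm{Frac}\,k[S]=\mathrm{Frac}\,k[G(S)]=k(t_1,\ldots,t_m)$, so $\dim k[S]=m$. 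Now, by the Remark following Proposition \ref{teles}, the semigroup $S_\Delta$ of any $\delta$-sequence is a generalized telescopic semigroup; since here $\Delta=\{\delta_i\}_{i=0}^g\subseteq(\mathbb{N}_{\geq0})^2$ is of type C, Definition \ref{getel} (case $\mathbb{Z}^2$) tells us that $\delta_0,\ldots,\delta_{g-1}$ all lie on a line $L$ through the origin while $\delta_g\notin L$. Hence $G(S_\Delta)\subseteq\mathbb{Z}^2$ contains the two $\mathbb{R}$-linearly independent vectors $\delta_0$ and $\delta_g$, so $\mathrm{rank}\,G(S_\Delta)=2$ and $\dim k[S_\Delta]=2$.

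Next I would identify the cone $C_{S_\Delta}\subseteq\mathbb{R}^2$ and count its extremal rays. Again by Definition \ref{getel}, the generators $\delta_0,\ldots,\delta_{g-1}$ are positive multiples of a common primitive vector $\ell$ spanning $L$ (concretely $\ell=(A,B)$ in the notation of Definition \ref{buena}), so they all generate the single ray $\mathbb{R}_{\geq0}\ell$; together with $\delta_g$ they generate $C_{S_\Delta}$ as the cone spanned by the two rays $\mathbb{R}_{\geq0}\ell$ and $\mathbb{R}_{\geq0}\delta_g$. Since $C_{S_\Delta}$ is finitely generated, two-dimensional and strongly convex (as remarked just before the statement), it is an angular sector of the plane, and such a cone has exactly two extremal rays, namely $\mathbb{R}_{\geq0}\ell$ and $\mathbb{R}_{\geq0}\delta_g$. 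Therefore the number of extremal rays of $C_{S_\Delta}$ is $2=\dim k[S_\Delta]$, which is precisely the assertion that $S_\Delta$ is simplicial.

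The whole argument is essentially bookkeeping once the structural description of $S_\Delta$ via Definition \ref{getel} is in hand; the only two points needing a word of care are the identity $\dim k[S]=\mathrm{rank}\,G(S)$ — which I would either quote from the standard theory of affine semigroup rings or justify in the one line of transcendence-degree computation above — and the fact that $\delta_g$ genuinely lies off $L$. The latter is not an obstacle: in the notation of Definition \ref{buena} one has $(A,B)=\underline{y}_{t-2}$ and $(A',B')=\underline{y}_{t-3}$, consecutive terms of the convergent-type recurrence $\underline{y}_i=a_{t-i}\underline{y}_{i-1}+\underline{y}_{i-2}$, hence $\mathbb{Z}$-linearly independent (their determinant is $\pm1$), so $\delta_g=\frac{\delta_g^*+A'a_t+B'}{Aa_t+B}(A,B)-(A',B')$ is not a scalar multiple of $(A,B)$ and thus not on $L$. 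This makes the rank-$2$ claim and the two-extremal-rays claim both automatic, and completes the proof.
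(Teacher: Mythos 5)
Your argument is correct, but it reaches the conclusion by a genuinely different route than the paper. The paper's proof also takes the number of extremal rays of $C_{S_\Delta}$ to be two (asserting it from the type C structure, much as you justify it), but it computes $\dim k[S_\Delta]$ by producing an explicit presentation: it maps $k[V_0,\dots,V_g]$ onto $k[S_\Delta]$ via $V_i\mapsto U^{\delta_i}$, claims the kernel is generated by the $g-1$ binomials $V_i^{n_i}-\prod_{j<i}V_j^{a_{ij}}$ coming from the relations $n_i\delta_i=\sum_{j<i}a_{ij}\delta_j$, and devotes most of the proof to the toric-ideal argument (via congruences on exponent vectors, in the spirit of \cite[Theorem 5.2]{ga-sa}) showing these binomials do generate the kernel, whence $k[S_\Delta]\cong k[V_0,\dots,V_g]/I$ has dimension two. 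You instead bypass any presentation and use the standard identity $\dim k[S]=\operatorname{rank}G(S)$ for a finitely generated, cancellative, torsion-free semigroup (your one-line transcendence-degree argument is fine, since $\operatorname{Frac}k[S_\Delta]=\operatorname{Frac}k[G(S_\Delta)]$ and $k[S_\Delta]$ is an affine domain), reducing everything to the observation from Definition \ref{getel}/Definition \ref{buena} that $\delta_0,\dots,\delta_{g-1}$ lie on one ray through the origin while $\delta_g$ does not; your determinant remark on the consecutive recurrence terms $(A,B)$ and $(A',B')$ settles the off-line claim in the generic case (the small cases $g=1,2$ are covered by the same linear-independence observation, or simply by the structure required in Definition \ref{getel}). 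What you gain is brevity and the avoidance of the binomial-generation argument, which is the only delicate point in the paper's proof; what the paper's approach buys is more than the proposition itself, namely an explicit set of defining equations for $k[S_\Delta]$, which is of independent interest but not needed for simpliciality. Both proofs are complete and compatible with the definition of simplicial used in the paper.
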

\begin{proof}
Since $\Delta$ is of type  C, the number of extremal rays of
$C_{S_\Delta}$ is two. Now, set $k[V_0,V_1, \ldots,V_g]$ the
polynomial ring in $g+1$ indeterminates and set
$$\psi:k[V_0,V_1, \ldots,V_g] \rightarrow k[S_\Delta]$$ the morphism
of $k$-algebras given by $\psi(V_i)=U^{\delta_i}$. $\psi$ is an
epimorphism. For $0 < i < g$, consider the unique expression
$$
n_i \delta_i = \sum_{j=0}^{i-1} a_{ij} \delta_j,
$$
such that $a_{i0} >0$ and $0 \leq a_{ij} < n_j$ for the remaining
indices $j$, being $n_j$ the usual number associated with
$\Delta$. Then, the proof follows by taking into account that the
kernel of $\psi$, $I$, is the ideal of $k[V_0,V_1, \ldots,V_g]$
spanned by the set $G := \{ V_i^{n_i} - \prod_{j=0}^{i-1}
V_j^{a_{ij}} \}_{0 <i <g}$ and therefore $k[S] \cong k[V_0,V_1,
\ldots,V_g] / I$ whose dimension is also two.

Only remains to prove that $G$ spans $I$. To do it, it suffices to
notice that $I$ is generated by the binomials $A -B \in k[V_0,V_1,
\ldots,V_g]$ such that $A$ and $B$ are homogeneous monomials with
coefficient 1 and $\psi (A-B)=0$. These monomials can be represented
by the set $\mathcal{B}$ of pairs $(a,b) \in (\mathbb{N}_{\geq
0}^{g+1})^2$ representing the exponents of both monomials.
$\mathcal{B}$ gives rise to a congruence, that is an equivalence
binary relation such that if $c \in \mathbb{N}_{\geq 0}^{g+1}$ and
$(a,b) \in \mathcal{B}$, then $(a+c,b+c) \in \mathcal{B}$. As the
set $\mathcal{D}$ of pairs given by the exponents of the elements in
$G$, satisfies that $\mathcal{B}$ corresponds to the smallest
congruence containing $\mathcal{D}$, we get that $G$ spans the ideal
$I$, which concludes the proof (see the proof of \cite[Theorem
5.2]{ga-sa} for a more detailed explanation of a close result).
\end{proof}

\subsection{Examples}

First of all, we prove that Reed-Solomon codes can be regarded as
particular cases of evaluation codes attached to type C weight
functions.
\begin{prop}
\label{rs} Consider the finite field $k=\mathbb{F}_q$, a
$\delta$-sequence in $\mathbb{Z}^2$, $\Delta= \{
\delta_0=(p_1,p_2), \delta_1=(q_1,q_2)\}$, and the epimorphism
$ev$ given by evaluating $d \leq q-1$ points in a line in
$\mathbb{F}_q^2$. Then the family of evaluation codes
$\{E_\alpha\}_{\alpha \in S_\Delta}$ is the family of length $d$
Reed-Solomon codes associated with $\mathbb{F}_q$.
\end{prop}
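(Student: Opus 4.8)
The plan is to reduce everything to the classical one-variable picture on the chosen line. Since $\Delta=\{\delta_0,\delta_1\}$ has $g=1$, the family of approximates for the valuation $-w_\Delta$ furnished by Proposition \ref{R} is just $\{q_0,q_1\}=\{x,y\}$, with $w_\Delta(x)=\delta_0$ and $w_\Delta(y)=\delta_1$. Hence, by Theorem \ref{main}(c), for each $\alpha\in S_\Delta$ the space $O_\alpha$ is spanned by the monomials $x^ay^b$ ($a,b\ge0$) with $a\delta_0+b\delta_1\le\alpha$ in the lexicographic order, and $w_\Delta(x^ay^b)=a\delta_0+b\delta_1$. From Definition \ref{buena} one checks that $\delta_0$ and $\delta_1$ are $\mathbb{Q}$-linearly independent (they are built from two consecutive convergent vectors of a continued fraction, so $\det(\delta_0,\delta_1)=\pm1$); thus each element of $S_\Delta$ has a unique expression $a\delta_0+b\delta_1$ and $\dim O_{\alpha^+}=\dim O_\alpha+1$, where $\alpha^+$ is the successor of $\alpha$ in $S_\Delta$.

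Next I would exploit that the line $\ell\subset\mathbb{F}_q^2$ through the $d$ evaluation points is isomorphic to $\mathbb{A}^1_{\mathbb{F}_q}$: fixing a coordinate $t$ on $\ell$, restriction of functions gives a surjection $\mathrm{res}\colon k[x,y]\to k[\ell]=k[t]$ (surjective because $\mathrm{res}(x)$ and $\mathrm{res}(y)$ have degree $\le1$ and are not both constant), and $ev$ factors as $k[x,y]\xrightarrow{\mathrm{res}}k[t]\xrightarrow{\overline{ev}}k^d$, with $\overline{ev}$ the evaluation at the $d$ distinct elements of $\mathbb{F}_q$ indexing the points. Its kernel is generated by a degree-$d$ polynomial, so $\overline{ev}$ restricts to an isomorphism from $k[t]_{\le N}$ onto the length-$d$, dimension-$(N+1)$ Reed-Solomon code when $0\le N\le d-1$, and $\overline{ev}(k[t]_{\le N})=\mathbb{F}_q^d$ once $N\ge d-1$. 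Therefore $E_\alpha=\overline{ev}\big(\mathrm{res}(O_\alpha)\big)$, and it suffices to compute $\mathrm{res}(O_\alpha)$.

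The key step is to prove that $\mathrm{res}(O_\alpha)=k[t]_{\le N(\alpha)}$ for a suitable integer $N(\alpha)\ge0$, and that $N(\alpha)$ runs over all of $\mathbb{N}$ as $\alpha$ runs over $S_\Delta$. Choosing $t$ (after possibly interchanging $x$ and $y$) so that $\mathrm{res}(x)=\lambda t+\mu$ with $\lambda\ne0$, the polynomial $\mathrm{res}(x^ay^b)$ has degree $a+b\cdot\deg\mathrm{res}(y)$ with nonzero leading coefficient, so the inclusion $\mathrm{res}(O_\alpha)\subseteq k[t]_{\le N(\alpha)}$, with $N(\alpha)$ the largest such degree for $x^ay^b\in O_\alpha$, is clear. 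For the reverse inclusion, note that $O_\alpha$ contains a monomial of \emph{every} total degree up to the maximum total degree of its monomials: if $x^ay^b\in O_\alpha$ with $a\ge1$, then $x^{a-1}y^b\in O_\alpha$ because its weight $a\delta_0+b\delta_1-\delta_0$ is lexicographically smaller than $a\delta_0+b\delta_1\le\alpha$ (here one uses $\delta_0>0$), and similarly when $b\ge1$; so one descends one total degree at a time from a top-degree monomial. A leading-coefficient (triangularity) argument then gives $k[t]_{\le N(\alpha)}\subseteq\mathrm{res}(O_\alpha)$ (the cases in which $\ell$ is axis-parallel are handled the same way, using only the monomials with $b=0$, or with $a=0$). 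Applying the same weight-dropping observation to the unique monomial $x^ay^b$ of weight $\alpha^+$ shows that $x^{a-1}y^b$ (or $x^ay^{b-1}$) already lies in $O_\alpha$, whence $N(\alpha^+)\le N(\alpha)+1$; together with $N(0)=0$ (only $1\in O_0$, since $\delta_0,\delta_1>0$) and $N(\alpha)\to\infty$, this forces $N(\cdot)$ to attain every value $0,1,2,\dots$. Consequently $\{E_\alpha\}_{\alpha\in S_\Delta}=\{\overline{ev}(k[t]_{\le N})\}_{N\ge0}$ is exactly the nested family $RS_1\subsetneq RS_2\subsetneq\cdots\subsetneq RS_d=\mathbb{F}_q^d$ of length-$d$ Reed-Solomon codes over $\mathbb{F}_q$.

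The main obstacle is the reverse inclusion in $\mathrm{res}(O_\alpha)=k[t]_{\le N(\alpha)}$: the monomials in $O_\alpha$ are selected by the \emph{lexicographic} weight $a\delta_0+b\delta_1$ rather than by total degree, so it is not a priori obvious that they realize every lower total degree; the point is precisely that deleting one variable strictly decreases the lexicographic weight (because $\delta_0,\delta_1>0$), which simultaneously produces the needed ``staircase'' of total degrees inside $O_\alpha$ and bounds the increments of $N(\cdot)$ by one. The remaining ingredients --- the factorization of $ev$ through $k[\ell]$, the standard injectivity of $\overline{ev}$ on $k[t]_{<d}$ (valid since $d\le q-1$), and the short case analysis on the shape of $\ell$ --- are routine.
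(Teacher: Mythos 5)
Your proposal is correct and follows essentially the same route as the paper: span $O_\alpha$ by monomials $x^ay^b$ via Theorem~\ref{main}(c), restrict to the chosen line, and compare the lexicographic weight filtration with the degree filtration on the line to identify the images with the nested family of Reed--Solomon codes. The differences are only in bookkeeping: the paper argues at the values $\alpha=l\delta_1$ for a line $y=ax+b$ using the coordinate inequality $p_1\ge q_1$, whereas you treat arbitrary $\alpha$ and arbitrary (including axis-parallel) lines via the variable-deletion argument and the step bound $N(\alpha^+)\le N(\alpha)+1$ --- a slightly more complete write-up of the same idea.
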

\begin{proof}
Assume that we evaluate points at the line given by $y -ax -b =0$;
$a,b \in \mathbb{F}_q$. Our approximates are $x$ and $y$ and, in
order to span the spaces $O_\alpha$, we must use monomials in $x$
and $y$. When we evaluate a monomial $x^m y^n$, this corresponds to
evaluate $x^m(ax+b)^n$, that is  a polynomial of degree $m+n$ in the
indeterminate $x$. $p_1  \geq q_1$ and, since $\delta_1 < 2 \delta_1
< \cdots < l \delta_1$ are elements in $S_\Delta$, when we consider
$E_{l \delta_1}$, $l \in \mathbb{N}_{> 0}$, one evaluates among
others a monomial of degree $l$ in $x$. So, it suffices to show that
no monomial  of degree larger than $l$ appears in $O_{l \delta_1}$.
That is, we must prove that $ r \delta_0 + s \delta_1 < t \delta_1$,
$r, s \in \mathbb{N}_{\geq 0}$, implies $r + s \leq t$ and this is
true because then $r p_1 + s q_1 \leq t q_1$ and thus $(s-t)q_1 \leq
- r p_1 \leq - r q_1$, which concludes the proof when $q_1 >0$.
Otherwise the proof follows because then $r=0$.
\end{proof}

To finish this paper, we show some parameters of several families
of dual codes $C_\alpha$ attached to $\delta$-sequences. Our data
are computed using the computer algebra system {\sc Singular}
\cite{sin}.

\begin{example}
\label{c1}  {\rm Fix the field $\mathbb{F}_7$ and the
$\delta$-sequences $\Delta_1$, $\Delta_2$ and $\Delta_3$ of types C,
D and E respectively given, as we have described, by the sequence in
$\mathbb{N}_{>0}$, $\{11,9\}$. $\Delta_1 = \{(5,1),(4,1)\}$. Indeed,
with the notation in Definition \ref{buena}, $\delta^*_0=11,
\delta^*_1=9$, $t=2, a_1=5, a_2=2$. So
$\delta_0=\underline{y}_1=(5,1)$ and $\delta_1= \delta_0 -
\underline{y}_0 =(4,1)$. Our election for $\Delta_2$ is $\Delta_2 =
\{11/9, 1, (19- \frac{2 \sqrt{3}+1}{3 \sqrt{3} +1})/9 \thickapprox
2,031105\}$ (see Subsection \ref{435}). $\Delta_3$, and any
$\delta$-sequence of type E given in the sequel, is constructed as
we described at the end of Subsection \ref{434}, using the least
value $z$ we can choose. In this case the first four elements of
$\Delta_3$ are $11/9,1,3/2,9/4$.

Consider the map $ev$ given by evaluating at the points in
$\mathbb{F}_7^2$:
\[ \{ (1,1), (2,2), \ldots, (6,6), (1,2), (1,3), \ldots, (1,6),
(2,1)\}.
\]
$q_0=x, q_1=y$ are approximates for the corresponding to $\Delta_1$
case and $q_0, q_1$ and $q_2= y^{11} - x^9$ otherwise. Tables
\ref{doss} and \ref{dossbis} show parameters (dimension $k$, minimum
distance $d(C_\alpha)$ and Feng-Rao distance $d_{ev}(\alpha)$) of
the successive codes $C_\alpha$. Furthermore, only for $\Delta_1$,
we add the exponents of the approximates we use to get the new
generator to add to the previous ones for obtaining a basis of
$O_\alpha$. That is $O_{(4,1)}$ is generated by $1$ and $q_1$,
$O_{(5,1)}$ by $1$, $q_1$ and $q_0$, $O_{(8,2)}$ by $1$, $q_1$,
$q_0$ and $q_1^2$, and so on.
\begin{table}[h,t,b]
\caption[]{First Case in Examples \ref{c1}} \label{doss}
\begin{tabular}{ccccc}
  \hline
  $\alpha$ & exp & $k$ & $d_{\Delta_1} (C_\alpha) $  & $d_{ev, \Delta_1} (\alpha)$\\
  \hline
   (4,1)& 01 &10 & 2 & 2   \\
  (5,1)& 10 &9 & 3 & 3 \\
 (8,2)& 02 & 8 & 4 & 3 \\
   (9,2)& 11 &7 & 4 & 3\\
  (10,2)& 20 &6  & 4 & 4\\
   (12,3)& 03 &5 & 5 & 5 \\
 $^*$(13,3)& 12 & 4 & 5 & 5\\
   (16,4)& 04 &3 & 6 & 6\\
  $^{**}$(17,4)& 13 &2 & 6 & 6 \\
 (20,5)& 05 &1 & 10 & 10\\
  \hline
\end{tabular}
\end{table}

\begin{table}[h,t,b]
\caption[]{First Case in Examples \ref{c1}} \label{dossbis}
\begin{tabular}{ccccc}
  \hline
  $k$ &
  $d_{\Delta_2} (C_{\alpha'})$ &
 $d_{ev,\Delta_2} (\alpha')$ &  $d_{\Delta_3}(C_{\alpha"})$ & $d_{ev,\Delta_3}  (\alpha")$\\
  \hline
   10 & 2 & 2  & 2 & 2  \\
  9 & 3 & 2 &  3 & 2\\
 8 & 4 & 2 &  4 & 2 \\
  7 & 4 & 3 &  4 & 2 \\
  6  & 4 & 3 &  4 & 2 \\
  5 & 4 & 4 &   4 & 4\\
 4 & 6 & 4 &  5 & 4  \\
  3 & 6 & 4 &  5 & 5  \\
  2 & 6 & 4 &  7 & 4  \\
 1 & 6 & 5 &  6 & 7  \\
  \hline
\end{tabular}
\end{table}

Notice that the length of the code is $n=12$ and that the classical
parameters of the codes in the first three rows cannot be improved
(see \cite[Theorem 5.3.10]{rom}). Also, if we set $(14,3)$ $2 \; 1$
or $(15,3)$ $3 \; 0$ as $\alpha$ and the exponents instead of those
given in $^*$ either $(18,4)$ $2 \; 2$; $(19,4)$ $3 \; 1$ or
$(20,4)$ $4 \; 0$ instead of the values given in $^{**}$ we get the
same parameters for the corresponding codes
attached to $\Delta_1$.\\

Let us see another case, where we have used the $\delta$-sequence in
$\mathbb{N}_{>0}$, $\{36,24,8,18,13\}$. Here $\Delta_1 =
\{(18,0),(12,0), (4,0), (9,0), (7,-1)\}$ and our election for
$\Delta_2$ is $\{12/8, 1,1/3, 3/4,$ $ 13/24, (11- \frac{
\sqrt{3}+1}{2 \sqrt{3} +2})/24 \thickapprox 0,441666\}$. The
approximates for $\Delta_1$ and  $\Delta_3$ are $q_0=x, q_1=y,$
$q_2=-x^2+y^3$, $q_3= -x^6+3x^4y^3-3x^2y^6+y^9-y$ and $q_4=
x^{12}+x^{10}y^3+x^8y^6+x^6y^9+2x^6y+x^4y^{12}+x^4y^4+x^2y^{15}-x^2y^7-x+y^{18}-2y^{10}+y^2$
and we must add another large polynomial $q_5$ for  $\Delta_2$. The
table with the same parameters as above is displayed in Table
\ref{tress}.

\begin{table}[h,b,t]
\caption[]{Second Case in Examples \ref{c1}}\label{tress}
\begin{tabular}{ccccccccc}
  \hline
 $k$ & $d_{\Delta_1}(C)$ & $d_{ev, \Delta_1}$ &  $d_{\Delta_2}(C)$ &
 $d_{ev,\Delta_2}$  & $d_{\Delta_3} (C)$ & $d_{ev,\Delta_3} $\\
  \hline
   10 & 2 & 2  & 2 & 2 & 2 & 2  \\
   9 & 3 & 2  & 2 & 2 &  2 & 2\\
 8 & 3 & 2  & 3 & 2 &  2 & 2 \\
   7 & 4 & 3  & 4 & 2 &  4 & 4 \\
  6  & 4 & 3   & 5 & 3 &  4 & 4 \\
   5 & 4 & 3  & 5 & 3 &   4 & 4\\
  4 & 4 & 4 &  6 & 3 &  4 & 4  \\
   3 & 7 & 4  & 7 & 4 &  4 & 4  \\
  2 & 7 & 4  & 9 & 4 &  8 & 5  \\
 1 & 11 & 5  & 12 & 5 &  8 & 8  \\
  \hline
\end{tabular}
\end{table}
}
\end{example}

\begin{example}
\label{c2} {\rm Next, we show another examples. First, we use only
type C weight functions and, afterwards, we use weight functions of
all described types. Consider the finite field $\mathbb{F}_{2^5}$
and $\xi$ a primitive element. Tables \ref{cuatross} and
\ref{cuatrossbis} show the same  parameters as above corresponding
to some dual codes associated with the evaluation at the following
31 points in $\mathbb{F}_{2^5}^2$:
\[ \{ (\xi,\xi), (\xi,\xi^2), \ldots, (\xi,\xi^{14}), (\xi^2,\xi),
(\xi^2,\xi^2), \ldots, (\xi^2,\xi^{14}),
(\xi^3,\xi^3),(\xi^4,\xi^4),(\xi^5,\xi^5)\},
\]
and with the $\delta$-sequences of type C $\Delta_1 = \{(21,0),
(15,0), (35,0), (39,-1)\}$, $\Delta_2 =\{(2,1),(1,1)\}$ and
$\Delta_3=\{(5,5),(2,2), (7,8)\}$. Only for $\Delta_1$, we add the
values $\alpha$ and the exponents for the approximates which are
$q_0=x, q_1=y, q_2= y^7 + x^5$ and
$q_3=x^{15}+x^{10}y^7+x^{15}y^{14}+x^5+y^{21}$. $k$ remains valid for the codes of each row.\\

\begin{table}[h,t,b]
\caption[]{First Case in Examples \ref{c2}} \label{cuatross}
\begin{tabular}{ccccc}
  \hline
  $\alpha$ & exp & $k$ & $d_{\Delta_1} (C_\alpha)$  & $d_{ev,\Delta_1} (\alpha)$\\
  \hline
  (15,0) & 0100 & 29 & 2 & 2  \\
  (21,0)& 1000 & 28& 3 & 2   \\
  (30,0) & 0200 & 27 & 3 & 2  \\
  (35,0) & 0010 & 26 & 4 & 2  \\
  (36,0) & 1100 & 25 & 4 & 2  \\
  (39,-1) & 0001 & 24 & 4 & 3  \\
  (42,0) & 2000 & 23 & 4 & 3  \\
  (45,0) & 0300 & 22 & 5 & 3  \\
  (50,0) & 0110 & 21 & 5 & 3  \\
  \hline
\end{tabular}
\end{table}

\begin{table}[h,t,b]
\caption[]{First Case in Examples \ref{c2}} \label{cuatrossbis}
\begin{tabular}{ccccc}
  \hline
    $k$ & $d_{\Delta_2} (C_{\alpha'})$ & $d_{ev,\Delta_2} (\alpha')$ &
  $d_{\Delta_3} (C_{\alpha"})$ &  $d_{ev,\Delta_3} (\alpha")$ \\
  \hline
  29 & 2 & 2 & 2 & 2  \\
  28 & 3 & 3 & 2 & 2  \\
  27 & 3 & 3 & 3& 2  \\
  26 & 3 & 3 & 3 & 2 \\
  25 & 4 & 4 & 3 & 2 \\
  24 & 5 & 5 & 5 & 3  \\
  23 & 5 & 5 & 6 & 3 \\
  22 & 5 & 5 & 6& 3  \\
  21 & 6 & 6 & 7 & 3 \\
  \hline
\end{tabular}
\end{table}

Now, we consider the same set of evaluation points, but $\Delta_1$,
$\Delta_2$ and $\Delta_3$ are $\delta$-sequences of types C, D and E
related with the $\delta$-sequence in $\mathbb{N}_{>0}$,
$\{36,24,8,18,13\}$. Concretely, $\Delta_1 = \{(18,0),(12,0), (4,0),
(9,0), (7,-1)\}$, $\Delta_3$ is as we described at the end of
Subsection \ref{434} and we take as election for $\Delta_2$,
$\Delta_2= \{12/8, 1,1/3, 3/4, 13/24, (6- \frac{ 2 \sqrt{3}+1}{11
\sqrt{3} +5})/24 \thickapprox 0,242266\}$. A partial table with
parameters as above is displayed in Table \ref{cincos}.

\begin{table}[h,b,t]
\caption[]{Second Case in Examples \ref{c2}} \label{cincos}
\begin{tabular}{ccccccc}
  \hline
 $k$ & $d_{\Delta_1}(C)$ & $d_{ev, \Delta_1}$  & $d_{\Delta_2}(C)$ &
 $d_{ev,\Delta_2}$  & $d_{\Delta_3}(C)$ & $d_{ev,\Delta_3} $\\
  \hline
   29 & 2 & 2  & 2 & 2 & 2 & 2  \\
   28 & 3 & 2  & 3 & 2  & 3 & 2\\
    27 & 3 & 2  & 3 & 2  & 3 & 2 \\
   26 & 4 & 3  & 3 & 2  & 4 & 2 \\
  25 & 5 & 3   & 3 & 2   & 5 & 2 \\
   24 & 5 & 3  & 4 & 2  & 5 & 2\\
  23 & 5 & 3 &  6 & 2  & 5 & 2  \\
   22 & 7 & 3  & 6 & 4  & 7 & 2  \\
  21& 7 & 3 &  7 & 4  & 7 & 2  \\
  \hline
\end{tabular}
\end{table}
}
\end{example}

\begin{example}
\label{c3} {\rm  Finally, we consider the same field of Examples
\ref{c2} and the $\delta$-sequences $\Delta_1= \{(3,1),(2,1)\}$,
$\Delta_2= \{7/5,1, (7- \frac{ \sqrt{3}+1}{4 \sqrt{3} +3})/5
\thickapprox 1,34496\}$ (for convenience our examples of type D
valuations always use $b= \sqrt{3}$ according the notation in
Subsection \ref{433}) and $\Delta_3$ of type E related with the
$\delta$-sequence in $\mathbb{N}_{>0}$, $\{7,5\}$. The family of
points to evaluate is
\[
\{ (\xi,\xi), (\xi,\xi^2), \ldots, (\xi,\xi^{14}), (\xi^6,\xi),
(\xi^6,\xi^2), \ldots, (\xi^6,\xi^{10}),(\xi^2,\xi^{11}),
(\xi^2,\xi^{12}), (\xi^2,\xi^{13}),
\]
\[
(\xi^2,\xi^{14}),(\xi^{20},\xi^{20}), (\xi^{21},\xi^{21}),
(\xi^{28},\xi^{28}) \}. \] The (partial) corresponding  table is
given in Table \ref{seiss}.

\begin{table}[h,t,b]
\caption{Examples \ref{c3}} \label{seiss}
\begin{tabular}{ccccccc}
  \hline
 $k$ & $d_{\Delta_1} (C)$ & $d_{ev, \Delta_1}$ & $d_{\Delta_2} (C)$ &
 $d_{ev,\Delta_2}$  & $d_{\Delta_3} (C)$ & $d_{ev,\Delta_3} $\\
  \hline
  29 & 2 & 2   & 2 & 2 & 2 & 2\\
  28 & 3 & 3  & 3 & 2 & 3 & 2 \\
  27 & 4 & 3   & 3& 3  & 3 & 2 \\
  26 & 4 & 3  & 4 & 3 & 4 & 2 \\
  25 & 4 & 4   & 5 & 3  & 4 & 2 \\
  24 & 5 & 4   & 5 & 3  & 4 & 2  \\
  23 & 5 & 4   & 6 & 3  & 5 & 2 \\
  22 & 5 & 4   & 6 & 3  & 5 & 2 \\
  21 & 6 & 4   & 6 & 4  & 5 & 2 \\
\hline
\end{tabular}
\end{table}
}
\end{example}

\newpage


\begin{thebibliography}{99}
\footnotesize \setlength{\baselineskip}{3mm}

\bibitem{abh1} {S. S. Abhyankar}, `Local uniformization on algebraic surfaces over
ground field of characteristic $p \neq 0$', {\em Ann. Math.} 63
(1956) 491--526.
\bibitem{abh2} {S. S. Abhyankar}, `On the valuations centered in a local domain', {\em Amer. J.
Math.} {78} (1956) 321--348.
\bibitem{pa1} {S. S. Abhyankar}, {\em Lectures on expansion techniques in
Algebraic Geometry},  Tata Institute of Fundamental Research
Lectures on Mathematics and Physics {57} (Tata Institute of
Fundamental Research, Bombay, 1977).
\bibitem{pa2} {S. S. Abhyankar \and T. T. Moh}, `Newton-Puiseux
expansion and generalized Tschirnhausen transformation', {\em J.
Reine Angew. Math.} {260} (1973) 47--83 and {261} (1973) 29--54.
\bibitem{2gp} {P. Beckman \and J. St\"uckrad}, `The concept of
 Gr\"obner algebra', {\em J. Symb. Comput.} {10} (1990) 465--479.
\bibitem{3gp} {E. R. Berlekamp}, {\em Algebraic Coding Theory}
(McGraw-Hill, New York, 1968).
\bibitem{cam} {A. Campillo}, {\em Algebroid curves in positive
characteristic}, Lecture Notes in Math. 613 (Springer-Verlag, 1980).
\bibitem{far-cam} { A. Campillo \and J.I. Farr\'an}, `Computing Weierstrass
semigroups and the Feng-Rao distance from singular plane models',
{\em Finite Fields Appl.} {6} (2000) 71--92.
\bibitem{c-f} {A. Campillo \and J.I. Farr\'an}, `Symbolic Hamburger-Noether expressions of
plane curves and applications to AG codes', {\em Math. Comput.} {71}
(2002) 1759--1780.
\bibitem{d-g-n} {F. Delgado, C. Galindo \and A. Nu\~nez}, `Saturation for valuations
on two-dimensional regular local rings', {\em Math. Z.} {234} (2000)
519--550.
\bibitem{sur23} {G. L. Feng \and T. R. N. Rao}, `Decoding of algebraic
geometric codes up to the designed minimum distance', {\em IEEE
Trans. Inform. Theory} {39} (1993) 37--45.
\bibitem{bobadilla} {J. Fern\'andez de Bobadilla}, `Moduli spaces of
polynomials in two variables', {\em Mem. Amer. Math. Soc.} {173}
(2005).
\bibitem{pa21} {M. Fujimoto \and M. Suzuki}, `Construction of affine
plane curves with one place at infinity', {\em Osaka J. Math.} {39}
(2002) 1005--1027.
\bibitem{ga} {C. Galindo}, `Plane valuations and their completions', {\em Comm. Algebra} {23 (6)} (1995) 2107--2123.
\bibitem{ga-sa} {C. Galindo \and M. Sanchis}, `Evaluation codes and plane
valuations', {\em Des. Codes Crypt.} {41 (2) } (2006) 199--219.
\bibitem{g-p} {O. Geil \and R. Pellikaan}, `On the structure of order domains', {\em Finite Fields  Appl.} {8}
(2002) 369--396.
\bibitem{kiy} {S. Greco \and K. Kiyek}, `General elements in complete
ideals and valuations centered at a two-dimensional regular local
ring' in {\em Algebra, Arithmetic and Geometry, with Applications}
(Springer, 2003) 381--455.
\bibitem{sin} {G.M. Greuel, G. Pfister \and H. Sch\"oenemann}, {\sc Singular} 3.0. {\em A Computer Algebra System for Polynomial Computations} (Centre for Computer Algebra, University of Kaiserslautern, 2005) {\tt http://www.singular.uni-kl.de}.
\bibitem{h-l-p} {T. H{\o}holdt, J.H. van Lint \and R. Pellikaan},
`Algebraic geometry codes' in {\em Handbook of coding theory 1}
(Elsevier, Amsterdam, 1998) 871--961.
\bibitem{este} {T. H{\o}holdt \and R. Pellikaan}, `On the decoding of algebraic geometric codes',
 {\em IEEE Trans. Inform. Theory} {41} (1995) 1589--1614.
\bibitem{sur46} {C.D Jensen}, `Fast decoding of codes from algebraic geometry',
  {\em IEEE Trans. Inform. Theory} {40} (1994) 223--230.
\bibitem{36hoh} {J. Justesen, K.J. Larsen, H.E. Jensen, A. Havemose \and T.
H{\o}holdt}, `Construction and decoding of a class of algebraic
geometric codes', {\em IEEE Trans. Inform. Theory} {35} (1989)
811--821.
\bibitem{sur49} {J. Justesen, K.J. Larsen, H.E. Jensen \and T.
H{\o}holdt}, `Fast decoding of codes from algebraic plane curves',
{\em IEEE Trans. Inform. Theory} {38} (1992) 111--119.
\bibitem{sur65} {Y. Madelung}, {\em Implementation of a decoding
algorithm for AG-codes from the Hermitian curve} (Rep. IT-93-137,
The Technical Univ. Denmark, Lyngby, 1993).
\bibitem{20gp} {J. L. Massey}, `Shift-register synthesis and BCH decoding', {\em IEEE Trans. Inform. Theory} {15} (1969) 122--127.
\bibitem{miu} {R. Matsumoto}, `Miura's generalization of one point AG
codes is equivalent to H{\o}holdt, van Lint and Pellikaan's
generalization', {\em IEICE Trans. Fundam.} {E82-A (10)} (1999)
2007--2010.
\bibitem{paco} {F. Monserrat}, `Curves having one place at infinity and
linear systems on rational surfaces',  {\em J. Pure Appl. Algebra}
{211} (2007) 685--701.
\bibitem{ni-zu} {I. Niven \and H. Zuckerman}, {\em An introduction to the
theory of numbers} (John Wiley \& Sons, 1972).
\bibitem{pa45} {H. Pinkham}, `S\'eminaire sur les singularit\'es des
surfaces' (Demazure-Pinkham-Teissier), Course donn\'e au Centre de
Math. de l'Ecole Polytechnique (1977--1978).
\bibitem{pa46} {A. J. Reguera}, `Semigroups and clusters at infinity',
in {\em Algebraic geometry and singularities}, La R\'abida, 1991,
{\em Progr. Math.} {134} (1996) 339--374.
\bibitem{rom} {S. Roman}, {\em Coding and information
theory} (Springer, 1992).
\bibitem{rua} {D. Ruano}, `Computing the Feng-Rao distances for codes
from order domains', {\em J. Algebra.} {309} (2007) 672--682.
\bibitem{35gp} {S. Sakata}, `Extension of the Berlekamp-Massey algorithm
to N dimensions', {\em Inform. and Comput.} {84} (1990) 207--239.
\bibitem{sur89} {S. Sakata,  J. Justesen, Y. Madelung, H. E. Jensen \and T.
H{\o}holdt}, `Fast decoding of algebraic geometric codes up to the
designed minimum distance', {\em IEEE Trans. Inform. Theory} {41}
(1995) 1672--1677.
\bibitem{sur90} {S. Sakata, H.E. Jensen \and T.
H{\o}holdt}, `Generalized Berlekamp-Massey decoding of algebraic
geometric codes up to half the Feng-Rao bound', {\em IEEE Trans.
Inform. Theory} {41} (1995) 1762--1768.
\bibitem{pa50} {A. Sathaye}, `On planar curves', {\em Amer. J. Math.}
{99} (1977) 1105--1135.
\bibitem{sur100} {A. N. Skorobogatov \and S. G. Vl\u{a}dut}, `On the decoding of algebraic geometric codes', {\em IEEE Trans.
Inform. Theory} {36} (1990) 1051--1060.
\bibitem{spi} {M. Spivakovsky}, `Valuations  in  function   fields   of
surfaces', {\em Amer. J. Math.} {112} (1990) 107--156.
\bibitem{51hoh} {M. E. O'Sullivan}, `Decoding of codes defined by a single point
on a curve', {\em IEEE Trans. Inform. Theory} {41} (1995)
1709--1719.
\bibitem{sul} {M. E. O'Sullivan}, `New codes for the Belekamp-Massey-Sakata algorithm', {\em Finite Fields  Appl.} {7} (2001) 293--317.
\bibitem{pa53} {M. Suzuki}, `Affine plane curves with one place at
infinity',  {\em Ann. Inst. Fourier}  {49} (1999) 375--404.
\bibitem{zar1} {O. Zariski}, `Local uniformization on algebraic varieties',
 {\em Ann. Math.} {41} (1940) 852--896.
\bibitem{zar} {O. Zariski \and P. Samuel}, {\em Commutative Algebra.}  Vol.  II
(Springer-Verlag, 1960).
\end{thebibliography}
\end{document}